\def\isarxiv{1}   
\newcommand\T{{*}}
\newcommand\tr{{\operatorname{tr}}}
\newcommand\E{{\mathbb{E}}}
\newcommand\C{{\mathbb{C}}}
\renewcommand\v[1]{{\boldsymbol{#1}}}
\newcommand{\R}[0]{\mathbb{R}}
\newtheorem{definition}{Definition}[section]
\newtheorem{problem}{Problem}[section]
\newtheorem{theorem}{Theorem}[section]
\newtheorem{fact}{Fact}[section]
\newtheorem{lemma}{Lemma}[section]
\newtheorem{corollary}{Corollary}[section]
\newtheorem{remark}{Remark}[section]
\theoremstyle{definition}
\theoremstyle{remark}
\numberwithin{equation}{section}
\renewcommand{\b}{\mathbf{b}}
\newcommand{\A}{\mathbf{A}}
\newcommand{\K}{\mathbf{K}}
\newcommand{\B}{\mathbf{B}}
\newcommand{\U}{\mathbf{U}}
\newcommand{\V}{\mathbf{V}}
\newcommand{\W}{\mathbf{W}}
\newcommand{\M}{\mathbf{M}}
\newcommand{\x}{\mathbf{x}}
\newcommand{\y}{\mathbf{y}}
\newcommand{\z}{\mathbf{z}}
\newcommand{\w}{\mathbf{w}}
\newcommand{\q}{\mathbf{q}}
\newcommand{\e}{\mathbf{e}}
\newcommand{\g}{\mathbf{g}}
\newcommand{\Q}{\mathbf{Q}}
\newcommand{\eps}{\epsilon}
\newcommand{\I}{\mathbf{I}}
\renewcommand{\r}{\mathbf{r}}
\renewcommand{\T}{\mathbf{T}}
\newcommand{\bpi}{\mathbf{\Pi}}
\newcommand{\Sig}{\mathbf{\Sigma}}
\newcommand{\nys}{\mathrm{nys}}
\newcommand{\nnz}{\mathrm{nnz}}
\newcommand{\poly}{\mathrm{poly}}
\newcommand{\polylog}{\mathrm{polylog}}
\renewcommand{\u}{\mathbf{u}}
\renewcommand{\v}{\mathbf{v}}
\renewcommand{\H}{\mathbf{H}}
\renewcommand{\S}{\mathbf{S}}
\renewcommand{\C}{\mathbf{C}}
\renewcommand{\P}{\mathbf{P}}
\newcommand{\Z}{\mathbf{Z}}
\renewcommand{\c}{\mathbf{c}}
\def\AlgFullName{Multi-level Sketched Preconditioning}
\def\AlgName{MSP}
\DeclareMathOperator*{\mach}{\epsilon_{mach}}
\title{Faster Linear Systems and Matrix Norm Approximation via Multi-level Sketched Preconditioning}
\date{}
\author{Micha{\l} Derezi\'nski\thanks{University of Michigan (\texttt{derezin@umich.edu})}
\quad
Christopher Musco\thanks{New York University (\texttt{cmusco@nyu.edu})}
\quad
Jiaming Yang\thanks{University of Michigan (\texttt{jiamyang@umich.edu})}}
\title{Faster Linear Systems and Matrix Norm Approximation via Multi-level Sketched Preconditioning}
\author{Micha{\l} Derezi\'nski}
\affil{University of Michigan}
\author{Christopher Musco}
\affil{New York University}
\author{Jiaming Yang}
\affil{University of Michigan}
\begin{document}
\maketitle
\thispagestyle{empty}

\begin{abstract}
    We present a new class of preconditioned iterative methods for solving linear systems of the form $\A\x = \b$.
    Our methods are based on constructing a low-rank Nystr\"om approximation to $\A$ using sparse random matrix sketching. This approximation is used to construct a preconditioner, which itself is inverted quickly using additional levels of random sketching and preconditioning.

    We prove that the convergence of our methods depends on a natural \emph{average condition number} of $\A$, which improves as the rank of the Nystr\"om approximation increases. Concretely, this allows us to obtain faster runtimes for a number of fundamental linear algebraic problems:
    \begin{enumerate}[]
        \item We show how to solve any $n\times n$ linear system that is well-conditioned except for $k$ outlying large singular values in $\tilde{O}(n^{2.065} + k^\omega)$ time, improving on a recent result of [Derezi{\'n}ski, Yang, STOC 2024] for all $k \gtrsim n^{0.78}$.
        \item We give the first $\tilde{O}(n^2 + {d_\lambda}^{\omega}$) time algorithm for solving a regularized linear system $(\A + \lambda \I)\x = \b$, where $\A$ is positive semidefinite with effective dimension $d_\lambda=\tr(\A(\A+\lambda\I)^{-1})$. This problem arises in applications like Gaussian process regression.
        \item We give faster algorithms for approximating Schatten $p$-norms and other matrix norms. For example, for the Schatten 1-norm  (nuclear norm), we give an algorithm that runs in $\tilde{O}(n^{2.11})$ time, improving on an $\tilde{O}(n^{2.18})$ method of [Musco et al., ITCS 2018].
    \end{enumerate}
    All results are proven in the real RAM model of computation. 
Interestingly, previous state-of-the-art algorithms for most of the problems above relied on stochastic iterative methods, like stochastic coordinate and gradient descent.
Our work takes a completely different approach, instead leveraging tools from matrix sketching.
\end{abstract}

\newpage
\setcounter{page}{1}

\section{Introduction}\label{s:intro}
We consider the complexity of solving a system of linear equations: given an $n\times n$ matrix $\A$ and an $n$-dimensional vector $\b$, find $\x$ such that $\A\x=\b$. This ubiquitous task in numerical linear algebra has applications across data science and machine learning, the physical sciences, engineering, and more. Despite many existing methods for solving linear systems, they remain a major computational bottleneck, so faster algorithms are a subject of active research. Much of the progress on faster algorithms  concentrates on designing methods tailored for matrices with special structure, such as Laplacian, Toeplitz, or Hankel matrices, among others \cite{kailath1979displacement,xia2012superfast,spielman2014nearly,koutis2012fast,kyng_sachdeva_2016,cohen2018solving,PengVempala:2021}. 
Progress on general, unstructured linear systems has been slower.

One direction for developing faster algorithms is to improve on fast matrix multiplication. Thanks to Strassen's reduction showing that matrix inversion is equivalent to matrix multiplication \cite{strassen1969gaussian,pan1984multiply,coppersmith1987matrix,williams2012multiplying}, general linear systems can be solved in $O(n^\omega)$ time, where $\omega<2.372$ is the current best known matrix multiplication exponent \cite{williams2023new}. Another approach is to solve linear systems via iterative refinement, for instance using deterministic methods such as the Conjugate Gradient (CG) or Lanczos algorithms \cite{hestenes1952methods}, or stochastic approaches like randomized coordinate descent \cite{strohmer2009randomized,leventhal2010randomized,lee2013efficient}. Iterative methods tend to be faster in many practical settings, as their runtime typically scales only quadratically with $n$, i.e., as $O(n^2)$. However, the runtime of typical iterative methods involves a multiplicative factor depending on the condition number of $\A$ (the ratio between its largest and smallest singular values) or related parameters, making them largely incomparable to solvers based on fast matrix multiplication. 

One way to address this issue is  to introduce a problem parameter $k$, such that any ill-conditioned part of $\A$ is restricted to a $k$-dimensional subspace. Formally, we ask: 
\begin{problem}\label{p:large-sv}
    What is the time complexity of solving an $n\times n$ linear system $\A\x=\b$ such that matrix $\A$ has at most $k$ singular values larger than $O(1)$ times its smallest singular value?
\end{problem}
When $k = n$, our fastest methods for solving \Cref{p:large-sv} run in $O(n^\omega)$ time via fast matrix multiplication. When $k=0$, the problem can be solved in $\tilde O(n^2)$ time via any standard iterative solver such as CG. We are interested in the arguably more interesting intermediate regime, where $\A$ is a combination of a low-rank ill-conditioned matrix and a full-rank well-conditioned one. Such matrices arise often in practice,  either due to various forms of regularization that are prevalent in machine learning, statistics and optimization \cite{boyd2004convex,zhang2013divide,dobriban2018high,derezinski2024recent}, or due to the presence of isotropic noise coming from measurement error, rounding, or compression \cite{loh2011high,liang2021pruning}.

A standard approach to Problem \ref{p:large-sv} is to construct an approximation to the rank $k$ subspace identified by $\A$'s top singular values. That information can be used to precondition an iterative solver so that it runs in $\tilde O(n^2)$ time \cite{halko2011finding}. Constructing a sufficiently accurate approximation to the top subspace requires (block) power iteration or related methods \cite{musco2015randomized}, which take $\tilde{O}(n^{\omega(1,1,\log_n k)})$ time, where $\omega(1,1,\log_nk)\in[2,\omega]$ is the exponent of rectangular matrix multiplication between an $n\times n$ and $n\times k$ matrix \cite{Le-Gall:2012}.\footnote{This corresponds to the $O(n^2k)$ time for classical multiplication of $n\times n$ and $n\times k$ matrices.} This approach already interpolates between the two extremes of the problem. In fact, it achieves a near-optimal $\tilde O(n^2)$ runtime for $k=O(n^{0.32})$.

Recently, \cite{derezinski2023solving} give the first improvement on this baseline by presenting a randomized Kaczmarz-like iterative method that solves \Cref{p:large-sv} in time $\tilde O(n^2+nk^{\omega-1})$. This result leads to an optimal $\tilde O(n^2)$ runtime for any $k=O(n^{\frac1{\omega-1}})=O(n^{0.73})$, and for larger $k$ it gradually degrades to $O(n^\omega)$.

\subsection{Main Results}
It might appear as though the result of  \cite{derezinski2023solving} is optimal. In particular, the currently best known algorithms for finding even a coarse (Frobenius norm error) approximation of the top rank $k$ subspace of $\A$ also require $\tilde O(n^2+nk^{\omega-1})$ time \cite{clarkson2013low,cohen2015dimensionality,cohen2017input,chepurko2022near}, and finding such an approximation seems like a prerequiste for solving \Cref{p:large-sv}. 
The contribution of this paper is to present an algorithm that breaks through this complexity barrier. Along the way, we give new state-of-the-art runtimes for two other central problems in linear algebra that are closely related to \Cref{p:large-sv}: kernel ridge regression \cite{acw17} and Schatten norm approximation \cite{musco2018spectrum}.

Concretely, we prove the following result, which  improves on the previously best known time complexity of $\tilde O(n^2+nk^{\omega-1})$ for \Cref{p:large-sv} for all $k=\Omega(n^{0.78})$. See Figure \ref{fig:complexity} for a comparison. 
\begin{theorem}[Main result, informal Theorem \ref{thm:main_rec}]\label{thm:main} 
Given an invertible $n\times n$ matrix $\A$ with at most $k$ singular values larger than $O(1)$ times its smallest singular value, and a length $n$ vector $\b$, there is an algorithm that, with high probability, computes $\tilde\x$ such that $\|\A\tilde\x-\b\|\leq\epsilon\|\b\|$ in time\footnote{We use $\tilde{O}(\cdot)$ to hide logarithmic dependencies on the  dimension $n$ and the condition number of $\A$.}:
\begin{align*}
\tilde O\left(n^{2.065} \cdot \log^3 1/\epsilon + k^\omega\right). 
\end{align*}
\end{theorem}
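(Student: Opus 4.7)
The plan is to construct a Nystr\"om preconditioner of an appropriate rank and solve the preconditioned system with a conjugate-gradient-like iteration, recursing on the preconditioner's own inversion. At the top level, I would pick a sketch size $m \geq k$ and draw a sparse random sketching matrix $\S \in \R^{n \times m}$ (for example a sparse sign or OSNAP matrix with polylogarithmic column sparsity). I would then form $Y = \A\S$ and $\C = \S^\top Y$, yielding the Nystr\"om approximation $\hat{\A}_m = Y\C^{-1}Y^\top$ and the preconditioner $\P = \hat{\A}_m + \mu\I$ for a small ridge $\mu$. Using rectangular fast matrix multiplication, $Y$ can be computed in $\tilde O(n^{\omega(1,1,\log_n m)})$ time, and $\C$ in similar time.

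Next, I would prove that because $\A$ has at most $k$ outlying singular values and $m \geq k$, the sketch captures the ill-conditioned top-$k$ subspace, so the preconditioned operator $\P^{-1}\A$ has bounded \emph{average condition number} in the sense alluded to in the paper's abstract. Following the sketched/deflated PCG analysis that underlies the paper's Nystr\"om preconditioning framework, this implies that preconditioned CG (or a variant thereof) converges to $\epsilon$-accuracy in $\tilde O(\log 1/\epsilon)$ outer iterations; each iteration costs $O(n^2)$ for a matrix--vector product with $\A$ plus one application of $\P^{-1}$.

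The key new ingredient, and the source of the $n^{2.065}$ improvement over previous work, is that applying $\P^{-1}$ reduces to solving a linear system with the smaller $m\times m$ matrix $\C$, and $\C$ itself inherits an outlier spectrum from $\A$ (via a spectral transfer lemma linking the tail eigenvalues of $\A$ to those of $\S^\top\A\S$). I would therefore invoke the same algorithm \emph{recursively} on $\C$, producing a chain of sketch sizes $n = m_0 > m_1 > \cdots > m_L = k$. At each level one balances the rectangular-matmul cost $m_{i-1}^{\omega(1,1,\log_{m_{i-1}} m_i)}$ against the sub-recursion cost and the number of PCG iterations required; the recursion terminates by direct factorization at cost $O(k^\omega)$.

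The main obstacle will be a pair of intertwined technical issues: (i) proving that a Nystr\"om approximation formed from a \emph{sparse} sketch still delivers the average-condition-number guarantee, which requires spectral concentration tuned to the outlier structure of $\A$ and is stronger than a generic subspace-embedding bound; and (ii) controlling inexact inner solves through the recursion, so that each outer PCG iteration loses only a polylogarithmic factor to inner-solve error, explaining the $\log^3(1/\epsilon)$ factor. Combined with an explicit optimization of the sketch-size schedule $\{m_i\}$ against the Le Gall--Urrutia rectangular matrix multiplication exponents, these ingredients should yield the advertised $\tilde O(n^{2.065}\log^3 1/\epsilon + k^\omega)$ runtime.
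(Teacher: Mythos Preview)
Your proposal has the right high-level shape (Nystr\"om preconditioner, inexact inner solves, multi-level structure) but contains two substantive errors that would prevent the argument from going through.

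\textbf{The condition number after preconditioning is not $\tilde O(1)$.} With a sketch of size $m\geq k$, the Nystr\"om preconditioner $\P=\hat\A_m+\mu\I$ does \emph{not} reduce the condition number to polylogarithmic; it reduces it to $O(n/m\cdot\bar\kappa_m)$, which is $\Theta(n/m)$ under the outlier assumption. Consequently the outer Lanczos/CG loop needs $\tilde O(\sqrt{n/m})$ iterations, not $\tilde O(\log 1/\epsilon)$. This is exactly where the $n^{2.5}/\sqrt{l}$ term in the paper's cost comes from, and is the quantity you must trade off against the $l^\omega$ setup cost: setting $l=n^{5/(2\omega+1)}$ yields the exponent $2+\frac{\omega-2}{2\omega+1}\approx 2.065$. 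Your claim of $\tilde O(\log 1/\epsilon)$ outer iterations would, if true, give $\tilde O(n^2+k^\omega)$ directly and would not require any recursion or schedule optimization at all.

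\textbf{The recursion and the sketch cost are not as you describe.} The paper never uses rectangular fast matrix multiplication to form $Y=\A\S$; a sparse embedding computes $\A\S^\top$ in $\tilde O(\nnz(\A))$ time, so the sketch step is already $\tilde O(n^2)$ regardless of $m$. More importantly, the ``multi-level'' structure is not a shrinking chain $n=m_0>m_1>\cdots>m_L=k$ in which you re-apply the same algorithm to $\C=\S^\top\A\S$. Applying $\M^{-1}$ reduces to solving a system in $\C^\top\C+\tilde\lambda\W$ (not in $\C$), and that inner system is handled by a \emph{different} preconditioner---a constant-distortion subspace embedding $\mathbf{\Phi}\C$---which genuinely gives $O(1)$ condition number and hence $O(\log 1/\epsilon)$ inner iterations after a one-time $\tilde O(nl+l^\omega)$ setup. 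For non-PSD $\A$ a third level (again with $O(1)$ preconditioned condition number) is inserted to avoid ever forming $\A^\top\A\S^\top$. So there are a fixed two or three levels, each using a qualitatively different preconditioning idea, rather than a self-similar recursion with a tuned schedule of sketch sizes and rectangular-matmul exponents.
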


\begin{figure}[ht]
\centering
\begin{tikzpicture}[scale=2.5,font=\footnotesize]
\tikzset{%
    dot/.style={circle, draw, fill=black, inner sep=0pt, minimum
    width=3pt},
    ldot/.style={circle, draw, black, inner sep=0pt, minimum
    width=2pt}, 
    top/.style={anchor=south, inner sep=3pt},
    bottom/.style={anchor=north, inner sep=3pt},
}
  \draw[->] (0,2) -- (2.2,2) node[right] {$\theta$};
  \draw[->] (0,2) -- (0,4.05) node[above] {$\beta$};

  \fill[fill=gray!30] (1.458,2) -- (1.686,2) -- (1.741,2.32) -- (1.552,2.32) -- cycle;
  \fill[fill=red!30] (1.686,2) -- (2,2) -- (2,3.86) -- cycle;
  \fill[fill=yellow!30] (1.741,2.32) -- (1.552,2.32) -- (2,3.86) -- cycle;
  \node[ldot] (O) at (0,2) {};
  \node[ldot] (A) at (0.642,2) {};
  \node[ldot] (B) at (1.458,2) {};
  \node[ldot] (C) at (1.686,2) {};
  \node[ldot] (D) at (1.741,2.32) {};
  \node[ldot] (Dy) at (0,2.32) {};
  \node[ldot] (E) at (1.552,2.32) {};
  \node[ldot] (T) at (2,3.86) {};
  \node[ldot] (Tx) at (2,2) {};
  \node[ldot] (Ty) at (0,3.86) {};

  \draw [dotted] (T) -- (Tx);
  \draw [dotted] (T) -- (Ty);

  \node[bottom] at (O) {$0$};
  \node[bottom] at (A) {$.32$}; 
  \node[bottom] at (B) {$.73$}; 
  \node[bottom] at (C) {$.84$}; 
  \node[bottom] at (Tx) {$1$}; 
  \node[left] at (O) {$2$};
  \node[left] at (Dy) {$2.065$}; 
  \node[left] at (Ty) {$2.372$}; 
  \draw[-][color=gray, very thick] (O) -- (A);
  \draw[-][color=gray, thick, dashed] (A) -- (T);
  \draw[-][color=blue, very thick] (A) -- (B);
  \draw[-][color=blue, very thick] (B) -- (E);
  \draw[-][color=blue, thick,dashed] (E) -- (T);
  \draw[-][color=yellow, thick, dashed] (Dy) -- (E);
  \draw[-][color=yellow, very thick] (E) -- (D);
  \draw[-][color=yellow, very thick] (D) -- (T);
  \draw[-][color=red, very thick] (B) -- (C);
  \draw[-][color=red, very thick] (C) -- (D);

  \begin{scope}[shift={(2.3,3.5)}]
    \draw[color=gray, thick] (0,0) -- (0.2,0) node[right, black, xshift=2pt] {Power Method-based Preconditioning \cite{halko2011finding}};
    \draw[color=blue, thick] (0,-0.25) -- (0.2,-0.25) node[right, black, xshift=2pt] {Randomized Block Kaczmarz \cite{derezinski2023solving}};
    \draw[color=yellow, thick] (0,-0.5) -- (0.2,-0.5) node[right, black, xshift=2pt] {\AlgFullName{} (\textbf{Thm \ref{thm:main}})};
    \draw[color=red, thick] (0,-0.75) -- (0.2,-0.75) node[right, black, xshift=2pt] {Conditional lower bound (\textbf{Thm \ref{t:lower}})};
  \end{scope}
\end{tikzpicture}
\caption{Time complexity for solving an $n\times n$ linear system with $k = n^\theta$ large singular values under current matrix multiplication exponent $\omega \approx 2.372$. The $x$-axis denotes the exponent $\theta$, while the $y$-axis denotes the exponent $\beta$ in the time complexity $\tilde{O}(n^\beta)$. The yellow line is our work (Theorem~\ref{thm:main}), with the yellow area showing the complexity improvement compared to prior work. The red line denotes a lower bound for the problem, which we prove in Theorem \ref{t:lower}. The red area is unachievable under the assumption that solving general dense linear systems requires $\Omega(n^\omega)$~time.}\label{fig:complexity}
\end{figure}
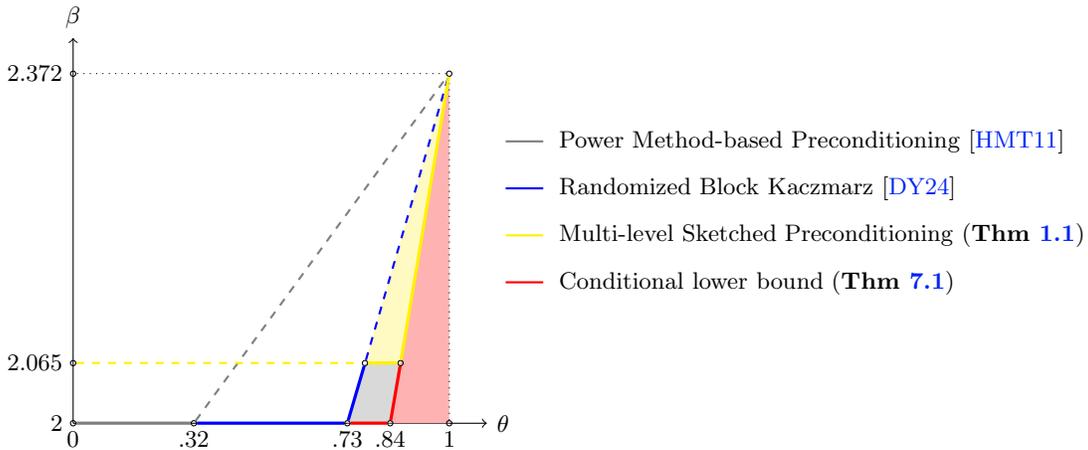

To understand the significance of \Cref{thm:main}, consider the following special instance of Problem~\ref{p:large-sv}, where $\A$ is a block-diagonal matrix with one $k\times k$ block that contains an arbitrary ill-conditioned matrix, and a second $(n\!-\!k)\times (n\!-\!k)$ block that contains a well-conditioned (but non-trivial) matrix. Solving such a linear system is equivalent to solving a small worst-case $k\times k$ linear system and a dense, well-condition  system. Thus, under the assumption that the worst-case time complexity of linear systems is governed by the complexity of matrix multiplication, we can lower bound the cost of solving Problem \ref{p:large-sv} with current fast matrix multiplication by $\Omega(n^2+k^\omega)$ (this is formalized in Theorem \ref{t:lower}). In comparison, our approach yields $\tilde O(n^{2.065}+k^\omega)$ time, which matches the lower bound up to $n^{0.065}$ everywhere and up to logarithmic factors for $k=\Omega(n^{0.87})$.

Surprisingly, our algorithm departs from the stochastic optimization approach used in the prior work of \cite{derezinski2023solving}. Instead, we employ a deterministic solver with a randomized preconditioner. To do so, we combine two main ingredients. First, building on prior work \cite{frangella2023randomized,epperly_random_precond}, we show that even a very coarse approximation to $\A$'s top rank $k$ subspace can be used to construct a ``good enough'' preconditioner. In particular, we can replace the use of algorithms like block power method with more efficient, but less accurate, linear-time sketching methods for low-rank approximation \cite{clarkson2013low}. Second, we show that it is possible to avoid the final step of such methods, which requires a prohibitive $\tilde O(nk^{\omega-1})$ cost to explicitly {construct} a rank $k$ subspace approximating $\A$'s top singular vectors. Instead, using $\tilde O(nk + k^\omega)$ preprocessing time, we show how to maintain a data structure that lets us perform inexact matrix-vector products with an inverted low-rank preconditioner. The data structure involves additional levels of solving linear systems with randomized sketching and preconditioning methods, resulting in a framework that we call \textbf{\AlgFullName{} (\AlgName)}. 
We discuss details of our approach further in Section \ref{s:techniques}. Before doing so, we highlight two additional applications of the framework beyond \Cref{thm:main}.

\paragraph{Regularized linear systems.}
An important case of linear systems that are well-conditioned except for a few large singular values are those that are explicitly regularized by adding a scaled identity, $\lambda \I$, to a positive semidefinite matrix $\A$. This setting arises, for instance, in kernel ridge regression \cite{em14,rcr17,musco2017recursive,acw17,frangella2023randomized} and second-order optimization \cite{levenberg1944method,marquardt1963algorithm,monteiro2012iteration,jiang2023online}. 
Prior results have shown that such linear systems can be solved in $\tilde O(n^2+n{d_\lambda}^{\omega-1})$ time, where $d_\lambda=\tr(\A(\A+\lambda\I)^{-1})\leq n$ is the so-called $\lambda$-effective dimension of the problem, often much smaller than $n$ \cite{acw17,musco2017recursive}. We improve on this by using \AlgFullName{}, leading to a time complexity of $\tilde O(n^2+{d_\lambda}^\omega)$, which is optimal (up to log factors).

\begin{theorem}[Regularized linear systems, informal Theorem \ref{thm:krr_formal}]\label{thm:krr}
Given an $n\times n$ positive semidefinite matrix $\A$, an $n$-dimensional vector $\b$, and $\lambda>0$, there is an algorithm that, with high probability, computes $\tilde\x$ such that $\|(\A+\lambda\I)\tilde\x - \b\|\leq \epsilon\|\b\|$ in time:
\begin{align*} 
    \tilde O\left(n^2 \cdot \log^3 1/\epsilon + {d_\lambda}^\omega \right),\quad\text{where}\quad d_\lambda = \tr(\A(\A+\lambda\I)^{-1}).
\end{align*}
\end{theorem}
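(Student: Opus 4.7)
The plan is to reduce the regularized solve to Problem \ref{p:large-sv} and then sharpen the bound using the structure afforded by positive semidefinite $\A$ and an explicit regularizer. Setting $\B := \A+\lambda\I$, we have $\lambda_{\min}(\B)\geq\lambda$, and since $d_\lambda=\sum_i\sigma_i(\A)/(\sigma_i(\A)+\lambda)\ge\tfrac12\,|\{i:\sigma_i(\A)\geq\lambda\}|$, the matrix $\B$ has at most $2d_\lambda$ eigenvalues exceeding $2\lambda_{\min}(\B)$. Thus $\B$ is an instance of Problem \ref{p:large-sv} with $k=O(d_\lambda)$, and a black-box application of \Cref{thm:main} already yields $\tilde O(n^{2.065}+d_\lambda^{\omega})$. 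To improve the $n^{2.065}$ to $n^2$, I would exploit the PSD-ness of $\A$ together with the explicit shift $\lambda\I$ in order to replace the coarse rectangular-matrix-multiplication-driven preprocessing that is needed in the fully general worst case by a clean one-shot Nyström construction.

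Concretely, I would sample a sparse oblivious subspace embedding $\S\in\R^{n\times m}$ with $m=\tilde O(d_\lambda)$, compute $\U:=\A\S$ and $\M:=\S^\top\A\S$ in $\tilde O(n^2)$ time, and use the Nyström preconditioner $\P:=\hat\A+\lambda\I$ with $\hat\A:=\U\M^{\dagger}\U^\top$. Two spectral facts combine to give constant-factor equivalence of $\P$ with $\B$: since $\A$ is PSD we always have $\hat\A\preceq\A$, hence $\P\preceq\B$ deterministically; and for $m=\tilde\Omega(d_\lambda)$, the regularized Nyström analysis (controllable through the ``average condition number'' framework developed earlier in the paper) gives $\B\preceq O(1)\cdot\P$ with high probability. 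The Woodbury identity
\begin{equation*}
\P^{-1} = \tfrac1\lambda\bigl(\I - \U(\lambda\M+\U^\top\U)^{-1}\U^\top\bigr)
\end{equation*}
reduces each application of $\P^{-1}$ to a single $m\times m$ solve; with one precomputed factorization at cost $O(d_\lambda^{\omega})$, each apply costs $O(nd_\lambda)\leq O(n^2+d_\lambda^2)$. Running preconditioned CG on $\B\x=\b$ then terminates in $\tilde O(\log(1/\eps))$ iterations of cost $O(n^2)$, for a total of $\tilde O(n^2\log(1/\eps)+d_\lambda^\omega)$.

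The main obstacle, and the reason the paper invokes the full MSP machinery rather than this single-level argument, is handling the regime in which direct factorization of the $m\times m$ inner matrix is prohibitive---i.e., when $d_\lambda$ is large enough that one cannot afford a single $d_\lambda^\omega$ step per preconditioner apply. There, one must apply the inverse of the inner matrix via another layer of sketched preconditioning, which itself has an inner problem, producing a bounded-depth recursion. Each level introduces one logarithmic factor in the iteration count, which is the source of the $\log^3(1/\eps)$ overhead in the statement. The nontrivial analytic step is a flexible-Krylov argument showing that PCG driven by an inexactly applied preconditioner (accurate to some fixed $\poly(n,1/\eps)$ relative error per apply) still converges at the geometric rate predicted by the exact preconditioner; once this inexactness-tolerant convergence is in place, balancing the per-level accuracy requirements against the accumulated precision loss across the bounded recursion yields the $\tilde O(n^2\log^3(1/\eps)+d_\lambda^\omega)$ bound.
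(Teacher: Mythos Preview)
Your single-level Nystr\"om argument contains a cost-accounting error that is exactly the bottleneck the paper's MSP framework is designed to break. You claim the precomputed factorization of $\lambda\M+\U^\top\U$ costs $O(d_\lambda^\omega)$, but to even \emph{form} this $m\times m$ matrix you must compute $\U^\top\U$ with $\U\in\R^{n\times m}$, which costs $\tilde O(n\,d_\lambda^{\omega-1})$ (split the long dimension into $n/m$ blocks of $m\times m$ products). Correctly costed, your single-level argument therefore recovers only the prior $\tilde O(n^2+n\,d_\lambda^{\omega-1})$ bound that the paper explicitly sets out to improve, not $\tilde O(n^2+d_\lambda^\omega)$.

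You then correctly anticipate that another level of preconditioning is needed, but your diagnosis of \emph{why} is off: the $d_\lambda^\omega$ factorization is a one-time cost that already appears in the target bound and is always affordable; the obstruction is the \emph{construction} of $\U^\top\U$, not its factorization, and it is not a per-apply cost. The actual fix (the content of the two-level PSD solver, \Cref{thm:main_psd}) is to never form $\U^\top\U$: sketch $\U$ once more to $\tilde\U=\mathbf{\Phi}\U\in\R^{O(m)\times m}$ via an OSE $\mathbf{\Phi}$, form and invert $\tilde\U^\top\tilde\U+\lambda\M$ in $\tilde O(nm+m^\omega)$ time, and use that to precondition an inner iterative solve for $(\U^\top\U+\lambda\M)\y=\c$, where each inner iteration applies $\U^\top\U$ as two matvecs with $\U$.

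The paper's actual proof of \Cref{thm:krr_formal} is shorter because it applies \Cref{thm:main_rec} as a black box: set $l=2d_\lambda$, use \Cref{lem:eff_dim} to get $\sum_{i>l}\lambda_i(\A)\le l\lambda$, deduce $\bar\kappa_{l,\lambda}^2\le 2l/n$, so the $\sqrt{n/l}\,\bar\kappa_{l,\lambda}$ factor in \Cref{thm:main_rec} collapses to $O(1)$ and the runtime becomes $\tilde O(n^2\log^3(\kappa/\epsilon)+d_\lambda^\omega)$. The second sketching level and the inexact-preconditioner Lanczos stability you allude to are already packaged inside \Cref{thm:main_rec}.
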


\paragraph{Matrix norm estimation.} 
Linear systems often arise in algorithms for solving other matrix problems, including linear and semidefinite programming \cite{cohen2021solving,jiang2020faster}, least squares and $l_p$ regression \cite{sarlos2006improved,rokhlin2008fast,meng2013low,cohen2015lp,cd21}, and for estimating various matrix  properties \cite{musco2018spectrum}. Our new algorithms can be used to speed up any of these tasks. As a motivating example, we show how to improve methods for estimating various matrix norms, a fundamental problem in linear algebra. We build on an approach of \cite{musco2018spectrum}, which uses regularized linear system solves to build rational function approximations that, when combined with stochastic trace estimation methods \cite{Hutchinson:1990}, allow for the estimation
of various functions of the singular values faster than $O(n^\omega)$ time, i.e. faster than the time it takes to compute all singular values outright.\footnote{Concretely, \cite{musco2018spectrum} estimates quantities of the form $\tr(f(\A))$, with the nuclear norm equal to the case where $f(x) = |x|$. Stochastic trace estimation methods like Hutchinson's estimator and related techniques approximate $\tr(f(\A))$ by computing $\g^T f(\A)\g$ for randomly chosen $\g$ \cite{MeyerMuscoMusco:2021}.} 

Perhaps the most important example is the sum of the singular values, i.e., the Schatten 1-norm  $\|\A\|_1$ (a.k.a.~nuclear norm). \cite{musco2018spectrum} shows how to obtain a $(1+\epsilon)$ multiplicative approximation to $\|\A\|_1$ in time $\tilde O(n^{2.18}\poly(1/\epsilon))$. This is a surprising result, as no previous methods were able to beat the $O(n^\omega)$ complexity of computing a full SVD. Moreover, work in fine-grained complexity (specifically, on triangle detection lower bounds) suggests that $O(n^\omega)$ is tight for high-accuracy approximation (i.e., $\epsilon = \poly(1/n)$)\cite{musco2018spectrum,WilliamsWilliams:2010}.
By replacing the stochastic iterative methods used in \cite{musco2018spectrum} with our \AlgName{} solvers, we make further progress on this problem. 

\begin{theorem}[Schatten 1-norm estimation, corollary of Theorem \ref{t:schatten}]\label{thm:schatten}
Given an $n\times n$ matrix $\A$ and $\epsilon>0$, there is an algorithm that, with high probability, computes $X\in (1\pm\epsilon)\|\A\|_1$ in time:
\begin{align*}
    \tilde O\big(n^{2.11}\poly(1/\epsilon)\big).
\end{align*}
\end{theorem}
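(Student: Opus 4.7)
The plan is to follow the rational approximation framework of \cite{musco2018spectrum}, substituting our new regularized linear system solver from Theorem~\ref{thm:krr} for the stochastic solver used there, and re-tuning the exponent. The starting point is the integral identity
\begin{equation*}
\|\A\|_1 = \tr\bigl(\sqrt{\A^\top\A}\bigr) = \frac{2}{\pi}\int_0^\infty \tr\!\bigl(\A^\top\A(\A^\top\A + t^2\I)^{-1}\bigr)\,dt.
\end{equation*}
First I would cheaply estimate $\|\A\|$ (e.g.\ by a few power iterations) and use it to truncate the integral to a range $[t_{\min},t_{\max}]$ so that both tails contribute at most $\tfrac{\epsilon}{3}\|\A\|_1$. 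Then I would discretize the truncated integral via a quadrature rule with $q=\polylog(n)\poly(1/\epsilon)$ nodes, yielding a weighted sum $\sum_{j=1}^q w_j\,\tr(\A^\top\A(\A^\top\A+t_j^2\I)^{-1})$; the quadrature error analysis is essentially imported from \cite{musco2018spectrum}.

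Next, for each node $t_j$ I would use Hutchinson's estimator with $m = \tilde O(1/\epsilon^2)$ random sign vectors $\g_1,\dots,\g_m$ to approximate the trace by $\tfrac{1}{m}\sum_\ell \g_\ell^\top\A^\top\A(\A^\top\A+t_j^2\I)^{-1}\g_\ell$. Each summand decomposes into a regularized linear system solve $\v=(\A^\top\A+t_j^2\I)^{-1}\g_\ell$ followed by the cheap inner product $(\A\g_\ell)^\top(\A\v)$. By Theorem~\ref{thm:krr}, each solve runs in $\tilde O\bigl(n^2 + d_{t_j^2}^\omega\bigr)$ time, where $d_\lambda = \tr(\A^\top\A(\A^\top\A+\lambda\I)^{-1})$ is the $\lambda$-effective dimension of the Gram matrix $\A^\top\A$. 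The total runtime of this scheme is therefore $\tilde O\bigl(qm\,(n^2 + d_{t_{\min}^2}^\omega)\bigr)$ up to lower-order terms.

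The main obstacle is the balance that fixes the exponent $2.11$. Since $d_\lambda \to \rank(\A)$ as $\lambda\to 0$, naively the smallest quadrature nodes could blow the per-solve cost up to $n^\omega$. To avoid this I would carefully couple $t_{\min}$ to the accuracy requirement: truncating below at $t_{\min}$ already contributes error $\le t_{\min}\cdot d_{t_{\min}^2}$ to the integral, so requiring multiplicative error $\epsilon\|\A\|_1$ imposes $t_{\min}\,d_{t_{\min}^2}\lesssim \epsilon\|\A\|_1$, which in turn bounds $d_{t_{\min}^2}$ by a quantity controlled by the coarse norm estimate from the preprocessing step. Then, writing $d_{t_{\min}^2} = n^\alpha$ and setting $n^{\alpha\omega} = n^{\beta}$, the crossover $n^2 \asymp n^{\alpha\omega}$ plus the polylogarithmic overhead from $q$ and $m$ yields a dominant exponent $\beta$ that evaluates numerically to $\approx 2.11$ at the current $\omega \approx 2.372$. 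The bulk of the new work is this balancing argument and the accompanying bound on $d_\lambda$; the quadrature, truncation, and trace-concentration pieces go through essentially unchanged from \cite{musco2018spectrum}.
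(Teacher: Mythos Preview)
Your proposal has a genuine gap: you invoke Theorem~\ref{thm:krr}, whose running time $\tilde O(n^2 + d_\lambda^\omega)$ depends on the \emph{problem-determined} effective dimension $d_\lambda$, and then try to bound $d_{t_{\min}^2}$ using the truncation-error constraint. This argument is backwards. The inequality $t_{\min}\cdot d_{t_{\min}^2} \lesssim \epsilon\|\A\|_1$ is a constraint you must \emph{satisfy} by choosing $t_{\min}$ small enough; it does not give you any a priori control over $d_{t_{\min}^2}$. Concretely, take $\A$ with all singular values equal to $\sigma$. Then $\|\A\|_1 = n\sigma$, the truncation constraint forces $t_{\min} \lesssim \epsilon\sigma$, and at that value $d_{t_{\min}^2} = n\cdot\frac{\sigma^2}{\sigma^2+\epsilon^2\sigma^2} \approx n$. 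Each solve near the bottom of the quadrature then costs $\tilde O(n^\omega)$, and the whole scheme degenerates to $\tilde O(n^\omega)$. Your final ``balancing'' paragraph is also internally inconsistent: setting $n^2 \asymp n^{\alpha\omega}$ gives $\alpha\omega = 2$, i.e.\ $\beta = 2$, not $2.11$.

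The paper instead plugs the more general solver of Theorem~\ref{thm:main_rec}, whose runtime is $\tilde O(n^2\sqrt{n/l}\,\bar\kappa_{l,\lambda} + l^\omega)$ with $l$ a \emph{free} parameter, into the black-box reduction from \cite{musco2018spectrum} (recorded here as Lemma~\ref{l:schatten-blackbox}). That lemma guarantees that every required solve has $\bar\kappa_{k,\lambda} \le \epsilon^{-1/p}(n/k)^{1/p-1/2}$, which for $p=1$ reads $\bar\kappa_{k,\lambda} \lesssim \epsilon^{-1}\sqrt{n/k}$. Substituting and balancing $n^{2+1/p}/k^{1/p}$ against $k^\omega$ yields the exponent $2 + \frac{\omega-2}{p\omega+1}$, which for $p=1$ and the current $\omega$ is $\approx 2.11$. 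The key point you are missing is that the quantity bounded by the \cite{musco2018spectrum} framework is $\bar\kappa_{l,\lambda}$, not $d_\lambda$; Theorem~\ref{thm:krr} is simply the wrong black box here.
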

In fact, we show that our linear solver improves the exponent of $n$ in the time complexity of Schatten $p$-norm estimation for any $p\in(0,1.5)$, and it can also be used to approximate other matrix norms such as the Ky Fan and Orlicz norms. See Section \ref{s:schatten} for a discussion.

\subsection{Our Techniques}
\label{s:techniques}

The starting point for our \AlgFullName{} comes from work on solving regularized systems of the form $(\A+\lambda\I)\x=\b$, where $\A$ is positive definite (PD). Here, a central technique for preconditioning is the Nystr\"om method \cite{nystrom1930,Williams01Nystrom,revisiting-nystrom,musco2017recursive}, which uses randomized sketching or sub-sampling to build a low-rank approximation for $\A$. As an illustration, let $\S \in \R^{l\times n}$ be a subsampling matrix, so that $\A\S^\top$ contains a random subset of $l$ columns from $\A$. The classical Nystr\"om approximation is defined as $\hat\A_{\nys}=\C\W^{-1}\C^\top$, where $\C=\A\S^\top$ is the column submatrix and $\W=\S\A\S^\top$ is the corresponding $l\times l$ principal submatrix of $\A$. This can be extended to other sketching methods by replacing $\S$ with a sparse random matrix (CountSketch, OSNAP, LESS, etc. \cite{clarkson2013low,nelson2013osnap,cohen2016nearly,derezinski2021sparse}) or a Subsampled Randomized Hadamard Transform  (SRHT, \cite{ailon2009fast,t11}), all of which can be multiplied by $\A$ in $\tilde{O}(n^2)$ time. 

With sufficiently large sketch size (namely, proportional to the effective dimension $d_\lambda = \tr(\A(\A+\lambda\I)^{-1})$), one can show that $\M=\hat\A_{\nys}+\lambda\I$ is a good preconditioner for $\A+\lambda\I$, i.e., the condition number of $\M^{-1}(\A+\lambda\I)$ is constant. At the same time, $\M^{-1}$ can be applied quickly due to its decomposition into $\C$ and $\W$. Concretely, using the inversion formula $\M^{-1} = \frac1\lambda\big(\I-\C(\C^\top\C+\lambda\W)^{-1}\C^\top\big)$, we can compute $\M^{-1}\r$ for any vector $\r$ in $\tilde O(nl)$ time after precomputing $(\C^\top\C+\lambda\W)^{-1}$ in  $\tilde O(nl^{\omega-1})$ time.

\medskip
\noindent
\textbf{Average Condition Number Bounds via Nystr\"om Preconditioning.}
We improve on standard Nystr\"om preconditioning in two important ways. First, replacing the subsampling matrix $\S$ with a sparse sketching matrix \cite{CohenNelsonWoodruff:2016}, we leverage a moment version of the oblivious subspace embedding property in conjunction with a spectral norm error low-rank approximation bound, to show that even when solving an \emph{unregularized} system $\A\x=\b$, a preconditioner of the form $\M=\hat\A_{\nys}+\lambda\I$ can significantly improve conditioning. Note that the scalar $\lambda$ is now a parameter of the algorithm, and not of the problem, as was the case for standard Nystr\"om preconditioning. We prove that, using Nystr\"om approximation with sketch size $\tilde O(l)$ for any integer $l$, we can choose $\lambda$ so that the condition number of the preconditioned system $\M^{-1}\A$ can be reduced to $\tilde O(\frac{n}{l}\cdot\bar\kappa_l)$, where $\bar\kappa_l = \frac{1}{n-l}\sum_{i > l} \frac{\sigma_i}{\sigma_n}$ is the average condition number of $\A$, excluding its top $l$ singular values (here, $\sigma_i$ are the singular values of $\A$ listed in a decreasing order). We observe that the quality of the preconditioner exhibits a dependence on $l$ both through the $\frac nl$ factor and the condition number $\bar\kappa_l$. Thus, obtaining the best condition number requires carefully tuning the parameter $\ell$ for each application. For example, if $\A$ has at most $k$ large singular values, as in  \Cref{p:large-sv}, then $\bar\kappa_l=O(1)$ for any $l\geq k$, but due to the dependence on $\frac{n}{l}$, increasing $l$ past $k$ still improves the preconditioner. Unfortunately, this happens at the expense of increasing the $\tilde O(nl^{\omega-1})$ cost of inverting the Nystr\"om approximation, which motivates our second contribution.

\medskip
\noindent
\textbf{Two-level Preconditioning for Positive Definite Systems.}
Instead of computing $\M^{-1} = \frac{1}{\lambda}\big(\I-\C(\C^\top\C+\lambda\W)^{-1}\C^\top\big)$ \emph{explicitly}, as in standard Nystr\"om preconditioning, we focus on simply implementing matrix-vector multiplications with $\M^{-1}$, which suffice to apply a preconditioned iterative method like CG or Lanczos. The main challenge in doing so is to implement efficient multiplications with $(\C^\top\C+\lambda\W)^{-1}$, i.e., to solve the system $(\C^\top\C+\lambda\W)\x=\r$. We show how to do this using a \emph{second level of sketching and preconditioning}. In particular, since $\C$ is a tall $n \times l$ matrix, the linear system $(\C^\top\C+\lambda\W)\x=\r$ can be solved efficiently by preconditioning with $\C^\top{\mathbf{\Phi}}^\top{\mathbf{\Phi}}\C+\lambda\W$, where $\mathbf{\Phi}$ is an $O(l)\times n$ sketching matrix (specifically, an oblivious subspace embedding). The preconditioner $\C^\top{\mathbf{\Phi}}^\top{\mathbf{\Phi}}\C+\lambda\W$ can be constructed and inverted in just $\tilde{O}(nl + l^\omega)$ time, which leads to an upfront cost of $\tilde{O}(nl + l^\omega)$ for implementing $\tilde{O}(nl)$ time matrix-vector multiplications with $\M^{-1}$. This is a significant improvement on the $\tilde O(nl^{\omega-1})$ required by explicit inversion. Overall, we obtain a complexity of $\tilde O(n^2\sqrt{n/l} + l^\omega)$ for solving \Cref{p:large-sv} when $\A$ is positive definite. Optimizing over the parameter $l$ gives a final runtime bound of  $\tilde{O}(n^{2.065} + k^\omega)$ time for positive definite $\A$, which is our first step towards proving the general version of \Cref{thm:krr_formal}, which does not assume positive definiteness. See Section \ref{s:psd-proof}, Theorem \ref{thm:main_psd}, for details.

We note that to analyze the approach above, we leverage a stability analysis of the preconditioned Lanczos method, which ensures that an optimal convergence rate is still obtained even when the preconditioner $\M^{-1}$ is applied inexactly. We give this analysis in Section \ref{s:lanczos}, building on prior work on the stability of the {unpreconditioned} Lanczos method \cite{paigeThesis,paige1976error,greenbaum_89,musco_musco_sidford_18}. Many alternative approaches would also suffice. For example, we could have obtained the same running times by using preconditioned Chebyshev iteration in place of Lanczos. The robustness of Chebyshev iteration to inexact applications of $\M^{-1}$ is well understood, and leveraged, e.g., in fast solvers for Laplacian systems \cite{SpielmanTeng:2014}. However, the method tends to converge much slower in practice than more popular methods like Lanczos or the  Conjugate Gradient method. Thus, providing a stability analysis of the preconditioned Lanczos method is of independent interest beyond its application in our algorithms. Broadly, an increasing number of algorithms in numerical linear algebra combine iterative methods with inexact ``inner loops'', often applied using randomized techniques. This approach has found applications in spectral density estimation \cite{KrishnanBravermanMusco:2022}, quantum inspired linear algebra \cite{BakshiTeng:2024}, and a number of other problems \cite{orecchia2012approximating}. All of this work hinges on stability analysis akin to our results on the preconditioned Lanczos method.

\medskip
\noindent
\textbf{Three-level Preconditioning for Indefinite Systems.}
The above strategy, which involves two levels of preconditioning, is restricted to positive definite matrices. Extending the approach to arbitrary linear systems requires additional work. A natural first attempt is to reduce an arbitrary $\A\x=\b$ linear system to a positive (semi-)definite linear system via the normal equations, $\A^\top\A\x=\A^\top\b$. However, if we apply the Nystr\"om preconditioning approach discussed above, we realize that we require matrices $\C=\A^\top\A\S^\top$ and $\W=\S\A^\top\A\S^\top$, which can no longer be computed in $\tilde{O}(n^2)$ time, even if $\S$ is a sparse random matrix. To address this, we show that the matrix $(\C^\top\C+\lambda\W)^{-1}$ can nevertheless be applied efficiently by preconditioning with the approximation $(\W^2+\lambda\W)^{-1}=\frac1\lambda\big(\W^{-1}-(\W+\lambda\I)^{-1}\big)$. This preconditioner, in turn, can be applied by solving two linear systems associated with $\W^{-1}$ and $(\W+\lambda\I)^{-1}$, both using a \emph{third level} of sketching and preconditioning. Details are included in  Section~\ref{sec:proof_rec}.

\smallskip
While Theorem \ref{thm:main} focuses on solving linear systems with $k$ large singular values, our main technical result (Theorem \ref{thm:main_rec}) shows that \AlgFullName{} can be used for any linear system, with the caveat that the condition number $\bar\kappa_l$ will appear in the bounds. This more general setting is used to obtain our improved Schatten norm algorithms, which are based on the work of \cite{musco2018spectrum}, and require solving linear systems where $\bar\kappa_l$ can be bounded by $O(\sqrt{n/l})$. Optimizing over $l$ yields the final time complexity of $\tilde{O}(n^{2.11})$. See Section \ref{s:schatten} for details.

\subsection{Additional Related Work}
\label{s:related-work}
There has been significant prior work on solving linear systems with a small number of outlying singular values, or more generally, whose condition number can be reduced by eliminating large singular values.
As discussed, our work is most closely related to methods that use coarse low-rank approximations to build preconditioners for such linear systems, an approach that has been extremely popular for solving large regularized regression problems arising e.g., in Gaussian process regression (kernel regression) problems \cite{rcr17,MaBelkin:2017,MeantiCarratinoRosasco:2020,frangella2023randomized,epperly_random_precond}. 

As discussed in the previous section, techniques like Fast Randomized Hadamard Transform, sparse sketching matrices, or leverage score sampling \cite{AlaouiMahoney:2015,cohen2017input} can be used to construct a low-rank approximation in just $\tilde{O}(n^2)$ time.  When $\A$ is PSD, subquadratic time is even possible \cite{MuscoWoodruff:2017}. This is in contrast to $\ell$-rank approximation methods based on power iteration or other Krylov methods, which require $\tilde{O}(n^2\ell)$ time \cite{GonenOrabonaShalev-Shwartz:2016}. 
However, previous methods for actually applying the low-rank preconditioner required at least $O(n\ell^{\omega-1})$, motivating our \AlgFullName{}, which avoids the need for explicitly orthogonalizing an $n\times \ell$ matrix.
The goal of avoiding orthogonalization also arises in other problems related to low-rank approximation, notably the \emph{principal component regression (PCR)} problem \cite{FrostigMuscoMusco:2016,JinSidford:2019}. Our \AlgName{} methods may be able to improve PCR algorithms, which rely on black-box solvers for regularized regression.

Another related line of work seeks to understand how iterative methods like the conjugate gradient method behave for a matrix with few outlying singular values \cite{spielman2009note}. In fact, for \Cref{p:large-sv}, it is well known that the \emph{unpreconditioned} CG or Lanczos methods will converge in roughly $k + \kappa$ steps, where $\kappa$ is the condition number of $\A$'s lower $n-k$ singular values \cite{Axelsson1986}. However, the cost of such methods would still be $\tilde{O}(n^2k)$, which our work improves on. 
There has also been work on iterative methods for the conceptually related problem of solving poorly conditioned linear system consisting of multiple well-conditioned subspaces \cite{KelnerMarsdenSharan:2022}, although the challenges are different.

Also related to our work is recent progress on analyzing and improving \emph{stochastic iterative methods} like stochastic gradient descent, randomized Kaczmarz, stochastic coordinate descent, and variants thereof \cite{strohmer2009randomized,RouxSchmidtBach:2012,JohnsonZhang:2013,Shalev-ShwartzZhang:2014}. A major development in the area is that it is possible to obtain runtimes for solving linear systems that depend on the \emph{average condition} number, $\bar{\kappa} = \frac{1}{n}\sum_{i=1}^n \frac{\sigma_i}{\sigma_n}$ instead of the standard condition number $\kappa = \frac{\sigma_1}{\sigma_n}$. E.g., for PD linear systems, \cite{lee2013efficient} presents a variant of coordinate descent that runs in $\tilde{O}(n^2\sqrt{\bar{\kappa}})$ time vs. $\tilde{O}(n^2\sqrt{{\kappa}})$ for the conjugate gradient method. Similar results have been proven for variants of stochastic gradient descent \cite{FrostigGeKakade:2015}. More recently, \cite{derezinski2024fine} uses a connection between block coordinate descent and low-rank approximation \cite{derezinski2024sharp}, obtaining an algorithm that runs in $\tilde O(n^2\sqrt{\bar\kappa_\ell})$ time for $\ell = O(n^{\frac1{\omega-1}})$, where $\bar\kappa_\ell=\frac1{n-\ell}\sum_{i>\ell}\frac{\sigma_i}{\sigma_n}\leq \bar\kappa$.  Naturally, these average condition numbers $\bar{\kappa}$ and $\bar\kappa_\ell$ can be significantly smaller than $\kappa$ if $\A$ only has a few large singular values and is otherwise well-conditioned. Perhaps unsurprisingly then, the current state-of-the-art methods for \Cref{p:large-sv} and downstream applications like matrix norm approximation rely on stochastic iterative methods \cite{musco2018spectrum, derezinski2023solving,derezinski2024fine}. A high-level observation of our work is that it is possible to match and actually exceed the efficiency of these methods using deterministic iterative methods with randomized preconditioning. This will become more apparent in \Cref{sec:main_results}, where our main theorem is stated in terms of a dependence on an average condition number.

\section{Preliminaries}\label{sec:prelim}
\paragraph{Notation.}
Throughout this paper, we use $\mathcal{S}_n^+$ to denote the positive semidefinite (PSD) cone and use $\mathcal{S}_n^{++}$ to denote the positive definite (PD) cone. For vector $\x$, we use $\|\x\|$ to denote its Euclidean norm, and for a PSD matrix $\A$, we denote $\|\x\|_{\A} := \sqrt{\x^\top\A\x}$. For matrix $\A$ with singular values $\sigma_1\geq \sigma_2 \geq \ldots \geq \sigma_n$, we let $\|\A\| = \sigma_1$ and $\|\A\|_F$  denote the operator norm and Frobenius norm, respectively. We let $\kappa(\A) = \sigma_1/\sigma_n$ denote its condition number. We let $\A^\dagger = (\A^\top\A)^{-1}\A^\top$ denote the pseudoinverse of $\A$. We use $\nnz(\A)$ to denote the number of non-zero entries of $\A$. For two $n \times n$ PSD matrices $\A, \B$ and any $\epsilon>0$, we say $\A \approx_{1+\epsilon} \B$ if $\frac{1}{1+\epsilon} \A \preceq \B \preceq (1+\epsilon)\A$ holds, where $\preceq$ denotes the matrix Loewner order. In the following sections, for matrix $\A\in\R^{m\times n}$ and failure probability $\delta > 0$, we use $\tilde{O}(\cdot)$ to omit $\polylog(mn / \delta)$ factors\footnote{Notice that in comparison, in Section~\ref{s:intro} we use $\tilde{O}(\cdot)$ to hide all $\log$ factors for conciseness.}. In this paper, by saying ``with high probablity'' we mean with probability at least $1 - 1 / \poly(n)$.

For matrix $\A\in\mathcal{S}_n^+$, let $\{\lambda_i\}_{i=1}^n$ be its eigenvalues in non-increasing order. For any $\lambda > 0$, we define the effective dimension of $\A$ as $d_\lambda := \tr(\A(\A+\lambda\I)^{-1}) = \sum_{i=1}^n \frac{\lambda_i}{\lambda_i + \lambda}$. 
We will require the following basic lemma on the effective dimension:
\begin{lemma}[Lemma 5.4 in \cite{frangella2023randomized}]
\label{lem:eff_dim}
For $\A\in\mathcal{S}_n^{+}$ and $\lambda > 0$, we have the following holds:
\begin{enumerate}
    \item For any $\gamma > 0$, if $j \geq (1+\gamma^{-1}) d_\lambda$, then $\lambda_j \leq \gamma\lambda$. 
    \item If $k \geq d_\lambda$, then $\sum_{i > k} \lambda_i \leq \lambda\cdot d_\lambda$.
\end{enumerate}
\end{lemma}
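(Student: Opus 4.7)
The plan is to prove both parts by elementary manipulation of the defining identity $d_\lambda = \sum_{i=1}^n \frac{\lambda_i}{\lambda_i+\lambda}$, exploiting two basic facts: the map $x \mapsto x/(x+\lambda)$ is monotonically increasing, and the eigenvalues are listed in non-increasing order.

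For part~1, I would argue by contrapositive. Suppose $\lambda_j > \gamma\lambda$. Then for every $i \leq j$, monotonicity gives $\frac{\lambda_i}{\lambda_i+\lambda} \geq \frac{\lambda_j}{\lambda_j+\lambda} > \frac{\gamma}{\gamma+1}$. Keeping only the first $j$ terms of $d_\lambda$ yields $d_\lambda > j\cdot\frac{\gamma}{\gamma+1}$, hence $j < (1+\gamma^{-1})d_\lambda$, contradicting the hypothesis.

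For part~2, the key step is to recast the assumption $k \geq d_\lambda$ in a more useful form. Writing $k = \sum_{i\leq k}\frac{\lambda_i+\lambda}{\lambda_i+\lambda}$ and subtracting $d_\lambda = \sum_{i\leq k}\frac{\lambda_i}{\lambda_i+\lambda} + \sum_{i>k}\frac{\lambda_i}{\lambda_i+\lambda}$ shows that $k \geq d_\lambda$ is equivalent to
\begin{equation*}
\sum_{i \leq k}\frac{\lambda}{\lambda_i+\lambda} \;\geq\; \sum_{i > k}\frac{\lambda_i}{\lambda_i+\lambda}. \qquad (\ast)
\end{equation*}
Next, for $i > k$ I use the decomposition $\lambda_i = \frac{\lambda_i^2}{\lambda_i+\lambda} + \frac{\lambda \lambda_i}{\lambda_i+\lambda}$, so that $\sum_{i>k}\lambda_i$ splits into a ``hard'' piece $\sum_{i>k}\frac{\lambda_i^2}{\lambda_i+\lambda}$ and an ``easy'' piece $\lambda\sum_{i>k}\frac{\lambda_i}{\lambda_i+\lambda}$, the latter already contained in $\lambda d_\lambda$. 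The goal is therefore to absorb the hard piece into the head of $\lambda d_\lambda$, via the chain
\begin{equation*}
\sum_{i > k}\frac{\lambda_i^2}{\lambda_i+\lambda} \;\leq\; \lambda_k\sum_{i > k}\frac{\lambda_i}{\lambda_i+\lambda} \;\leq\; \lambda_k\sum_{i \leq k}\frac{\lambda}{\lambda_i+\lambda} \;\leq\; \sum_{i \leq k}\frac{\lambda \lambda_i}{\lambda_i+\lambda},
\end{equation*}
where the three inequalities use, respectively, $\lambda_i \leq \lambda_k$ for $i>k$, the reformulated hypothesis $(\ast)$, and $\lambda_k \leq \lambda_i$ for $i \leq k$. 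Adding back the easy piece then gives $\sum_{i>k}\lambda_i \leq \lambda \sum_i \frac{\lambda_i}{\lambda_i+\lambda} = \lambda d_\lambda$.

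The only delicate point is the three-term sandwich in part~2, which must simultaneously exploit both directions of the ordering of the $\lambda_i$ around the threshold index $k$ together with $(\ast)$; I expect this to be the main thing to verify but foresee no real obstacle. Everything else is bookkeeping around the monotonicity of $x/(x+\lambda)$.
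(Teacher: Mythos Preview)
Your argument is correct in both parts. The paper does not supply its own proof of this lemma; it is quoted verbatim from \cite{frangella2023randomized} (Lemma~5.4 there) and used as a black box, so there is nothing to compare against in the present paper. For the record, your contrapositive for part~1 and the splitting $\lambda_i = \frac{\lambda_i^2}{\lambda_i+\lambda} + \frac{\lambda\lambda_i}{\lambda_i+\lambda}$ combined with the three-step sandwich using $(\ast)$ in part~2 are exactly the kind of elementary manipulations one would expect, and every inequality checks out.
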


\paragraph{Sparse Subspace Embedding.}
Following \cite{CohenNelsonWoodruff:2016, cohen2016nearly}, we define the sparse embedding matrix $\S\in\R^{s\times n}$ with sparsity parameter $\gamma$, which will be used in our construction of the Nystr\"om preconditioner, as follows. Notice that given $\A\in\R^{n\times n}$, $\S\A$ can be computed in time $O(\gamma\cdot \nnz(\A))$.
\begin{definition}[Sparse embedding matrix]\label{def:sparse_embed}
We define an $s \times n$ matrix $\S$ to be a sparse embedding matrix with sparsity $\gamma$, if the columns of $\S$ are independent, and for each column of $\S$, there are $\gamma$ random entries chosen uniformly without replacement and set to $\pm 1/\sqrt{\gamma}$ independently, with other entries in that column being set to $0$.
\end{definition}

Based on the construction of sparse embedding matrix, recent work \cite{chenakkod2023optimal} further constructs a fast oblivious subspace embedding (OSE) matrix and shows an optimal $O(d)$ result for embedding dimension, as stated in the following lemma. Throughout this paper, we will use  $\mathbf{\Phi}$ to denote this oblivious embedding matrix, which will be used in the construction of a preconditioner for the second level of linear system solving.
\begin{lemma}[Adapted from Theorem 1.4 in \cite{chenakkod2023optimal}]\label{lem:precondition_sparse}
Given $\A\in\R^{n \times d}, \epsilon < 1/2$, $\delta < 1/2$, in time $O(\nnz(\A)\log(d / \delta) / \epsilon + d^2\log^4(d/\delta)/\epsilon^6)$ we can compute $\mathbf{\Phi}\A$ where $\mathbf{\Phi}\in\R^{\phi \times n}$ is an embedding matrix with $\phi = O((d+\log(1/\delta)) / \epsilon^2)$, such that with probability $1-\delta$ we have
\begin{align*}
\forall \x\in\R^d, ~~\frac{1}{1+\epsilon} \|\A\x\| \leq \|\mathbf{\Phi}\A\x\| \leq (1+\epsilon) \|\A\x\|.
\end{align*}
\end{lemma}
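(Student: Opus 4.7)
My plan is to derive this statement directly from Theorem~1.4 of \cite{chenakkod2023optimal}, essentially as a black box, with only minor adaptations to match the parameter form stated here. The construction there realizes $\mathbf{\Phi}$ as a two-stage sketch $\mathbf{\Phi} = \S_{\mathrm{out}}\S_{\mathrm{in}}$: an inner sparse embedding matrix $\S_{\mathrm{in}} \in \R^{m \times n}$ (in the style of Definition~\ref{def:sparse_embed}) with intermediate dimension $m = \tilde{O}(d)$ and column sparsity $\gamma_{\mathrm{in}} = O(\log(d/\delta)/\epsilon)$ that gives a constant-factor subspace embedding of the column span of $\A$, followed by an outer sketch $\S_{\mathrm{out}} \in \R^{\phi \times m}$, applied to $\S_{\mathrm{in}}\A$, that sharpens the distortion to $1+\epsilon$ while landing in the optimal output dimension $\phi = O((d+\log(1/\delta))/\epsilon^2)$.

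To verify the embedding guarantee, I would analyze each stage using the moment method of \cite{CohenNelsonWoodruff:2016}: bound high moments of $\|\S_{\mathrm{in}}\u\|^2 - \|\u\|^2$ uniformly over unit vectors $\u$ in the column span of $\A$ (and analogously for $\S_{\mathrm{out}}$ acting on the $d$-dimensional column span of $\S_{\mathrm{in}}\A$), then apply Markov's inequality together with a union bound over a $1/2$-net of the relevant unit sphere of size $e^{O(d)}$. Composing the two $(1 \pm O(\epsilon))$ distortions and absorbing the constant by rescaling $\epsilon$ yields the symmetric form $\frac{1}{1+\epsilon}\|\A\x\| \le \|\mathbf{\Phi}\A\x\| \le (1+\epsilon)\|\A\x\|$ stated in the lemma. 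For the runtime, computing $\S_{\mathrm{in}}\A$ costs $O(\nnz(\A)\cdot \gamma_{\mathrm{in}}) = O(\nnz(\A)\log(d/\delta)/\epsilon)$, matching the first term; and applying $\S_{\mathrm{out}}$ to the $\tilde{O}(d)\times d$ intermediate matrix contributes the $O(d^2\log^4(d/\delta)/\epsilon^6)$ term once the internal sparsity and net parameters of $\S_{\mathrm{out}}$ are tracked through the Chenakkod et~al.\ analysis.

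The only non-routine ingredient, and the main obstacle, is the tight $\phi = O(d/\epsilon^2)$ output dimension: achieving this for a \emph{sparse} matrix requires the refined moment calculation of \cite{chenakkod2023optimal}, since the more elementary sparse-OSE arguments (e.g., a direct adaptation of OSNAP bounds) lose extra polylogarithmic factors and would not suffice here. Everything else---two-stage composition, the constant rescaling of $\epsilon$, and the runtime accounting---is straightforward bookkeeping on top of that theorem.
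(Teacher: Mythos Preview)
Your proposal is correct and matches the paper's approach: the paper does not prove this lemma at all, simply stating it as an adaptation of Theorem~1.4 in \cite{chenakkod2023optimal} and using it as a black box. Your sketch of the two-stage construction and runtime accounting is more detail than the paper itself provides, but is consistent with how the cited result is obtained.
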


\paragraph{Computational Model.} Our results are proven in the real RAM model (i.e, exact arithmetic). We assume the inputs $\A$ and $\b$ are given with real-valued entries, and basic arithmetic operations ($+,-,\times,\div, \sqrt{\cdot}$) can be performed exactly on real numbers in $O(1)$ time. We still need to leverage results on the finite-precision behavior of iterative solvers like the Lanczos method  \cite{paige1976error,musco_musco_sidford_18} to account for the fact that we apply preconditioners inexactly, even in the real RAM model. A full analysis of our methods in a finite precision model of computation is beyond the scope of this work, but is an important next step for future work on \AlgFullName{}.
\section{Main Technical Results}
\label{sec:main_results}

In this section, we present our main technical result in its full generality, and then discuss how it can be used to recover the claims in Section \ref{s:intro}. In this result, we address the task of solving a linear system of the form $(\A^\top\A+\lambda\I)\x=\c$ for an $m\times n$ matrix $\A$ and $\lambda\geq 0$. Note that the regularization parameter $\lambda$ is entirely optional (since we can set $\lambda=0$), and it is included here primarily for the sake of applications to kernel ridge regression and Schatten norm estimation. To recover the setting from Theorem \ref{thm:main}, i.e., solving $\A\x=\b$, we can simply rewrite this problem via the normal equations as $\A^\top\A\x=\A^\top\b$, then set $\c=\A^\top\b$ and $\lambda=0$.

\begin{theorem}[Main technical result]\label{thm:main_rec}
Given $\A\in\R^{m\times n}$ with condition number $\kappa$, vector $\b\in\R^n$ and regularization parameter $\lambda\geq 0$, let $\{\sigma_i\}_{i=1}^n$ be the singular values of $\A$ in decreasing order, and let $\x^* = (\A^\top\A + \lambda\I)^{-1}\c$.
For any $l \in \{\log n + 1,\ldots,  n\}$, define $\bar{\kappa}_{l,\lambda} \coloneqq (\frac{1}{n-l}\sum_{i > l} \sigma_i^2 / (\sigma_n^2+\lambda))^{1/2}$. Given $\epsilon > 0$ and $\delta \in (0,1/8)$,  if Algorithm~\ref{alg:msp_rec_main} is run with $\lambda_0 = \frac{2}{l}\sum_{i> l}\sigma_i^2, s = O(l\log(l/\delta)), \gamma = O(\log(l/\delta)), \phi = O(s + \log 1/\delta)$ and $t_{\max} = O(\sqrt{n/l} \cdot \bar{\kappa}_{l,\lambda}\log (\bar{\kappa}_{l,\lambda}/\epsilon))$, then with probability at least $1-\delta$, it will output $\tilde{\x}$ such that $\|\tilde{\x} -\x^*\|_{\A^\top\A+\lambda\I}\leq\epsilon\|\x^*\|_{\A^\top\A+\lambda\I}$ in time
\begin{align*}
\tilde{O}\left(\nnz(\A)\sqrt{\frac{n}{l}} \bar{\kappa}_{l,\lambda}\log^3 (\kappa/\epsilon) + l^{\omega}\right)
\end{align*}
where $\tilde{O}(\cdot)$ hides $\polylog(mn/\delta)$ factors. Moreover, if instead of matrix $\A$, we are given the matrix $\A^\top\A$ directly, then the same time complexity can be achieved with $\nnz(\A)$ replaced by $\nnz(\A^\top\A)$ by using Algorithm~\ref{alg:msp_psd}. See Theorem~\ref{thm:main_psd} for details.
\end{theorem}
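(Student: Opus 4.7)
The proof will follow the three-level sketch-and-precondition strategy outlined in Section~\ref{s:techniques}. At the outer level, I would precondition the normal-equation system $(\A^\top\A+\lambda\I)\x=\c$ with a Nystr\"om approximation $\M=\A^\top\A\S^\top(\S\A^\top\A\S^\top)^{-1}\S\A^\top\A + (\lambda_0+\lambda)\I$, where $\S$ is the $s\times n$ sparse embedding of Definition~\ref{def:sparse_embed} with $s=\tilde O(l)$ and $\lambda_0\asymp\tfrac{1}{l}\sum_{i>l}\sigma_i^2$. I would then apply $\M^{-1}$ only implicitly, via a nested pair of randomized preconditioners built on top of the subspace embedding $\mathbf{\Phi}$ from Lemma~\ref{lem:precondition_sparse}, and drive the outer solve with preconditioned Lanczos, whose stability under inexact inner solves is the subject of Section~\ref{s:lanczos}.

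\textbf{Step 1: quality of the Nystr\"om preconditioner.} Using a moment-form oblivious subspace embedding bound for the sparse sketch $\S$ together with a spectral-norm low-rank approximation guarantee, I would show that $\M$ captures the top $l$ directions of $\A^\top\A$ to spectral accuracy $\lambda_0$, which yields $\tfrac{1}{C}(\A^\top\A+\lambda\I)\preceq \M \preceq C\cdot\tfrac{n}{l}\bar\kappa_{l,\lambda}^2\cdot(\A^\top\A+\lambda\I)$ for a universal constant $C$. Hence the condition number of $\M^{-1}(\A^\top\A+\lambda\I)$ is $\tilde O(\tfrac{n}{l}\bar\kappa_{l,\lambda}^2)$, so preconditioned Lanczos reaches relative accuracy $\epsilon$ in $\tilde O(\sqrt{n/l}\,\bar\kappa_{l,\lambda}\log(\kappa/\epsilon))=O(t_{\max})$ outer iterations. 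The choice $\lambda_0\asymp\tfrac{1}{l}\sum_{i>l}\sigma_i^2$ is precisely the one that balances faithfulness of $\M$ to $\A^\top\A$ on the top-$l$ subspace against closeness of $\M$ to $\lambda_0\I$ on the tail.

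\textbf{Step 2: applying $\M^{-1}$ via second- and third-level sketching.} Setting $\C=\A^\top\A\S^\top$ and $\W=\S\A^\top\A\S^\top$, Woodbury reduces each application of $\M^{-1}$ to a single $l\times l$ solve with the matrix $\K\coloneqq \C^\top\C+(\lambda_0+\lambda)\W$. In the variant where $\A^\top\A$ is available directly (Algorithm~\ref{alg:msp_psd}), I would precondition $\K$ by $\C^\top\mathbf{\Phi}^\top\mathbf{\Phi}\C+(\lambda_0+\lambda)\W$, which by Lemma~\ref{lem:precondition_sparse} is constant-factor spectrally equivalent to $\K$ and can be built and factored once in $\tilde O(nl+l^\omega)$ time, giving amortized $\tilde O(nl)$ matvecs with $\M^{-1}$. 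For general $\A$ (Algorithm~\ref{alg:msp_rec_main}), $\C$ itself cannot be materialized in $\tilde O(\nnz(\A))$ time, so I would replace $\C^\top\C$ by the cheap surrogate $\W^2$, yielding the third-level preconditioner $(\W^2+(\lambda_0+\lambda)\W)^{-1}=\tfrac{1}{\lambda_0+\lambda}\bigl(\W^{-1}-(\W+(\lambda_0+\lambda)\I)^{-1}\bigr)$. Each application reduces to two $l\times l$ solves in $\W$ and $\W+(\lambda_0+\lambda)\I$, which are in turn handled by preconditioning with $\mathbf{\Phi}$ composed with $\S\A$; the spectral equivalence $\C^\top\C\approx \W^2$ needed to justify this surrogate reduces, via Lemma~\ref{lem:precondition_sparse}, to $\S$ being a subspace embedding of the column span of $\A\S^\top$.

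\textbf{Step 3: assembly, and the main obstacle.} Combining the three levels, one outer Lanczos iteration costs $\tilde O(\nnz(\A)+nl)$ once the third-level preconditioners are prepared in $\tilde O(\nnz(\A)+l^\omega)$ time, yielding the announced $\tilde O(\nnz(\A)\sqrt{n/l}\,\bar\kappa_{l,\lambda}\log^3(\kappa/\epsilon)+l^\omega)$. The $\log^3$ factor arises because each of the three nested iterations must be driven to inverse-polynomial relative error in the next-outer tolerance. The hard part will be justifying that outer preconditioned Lanczos retains its $\sqrt{\kappa}$-type rate even though $\M^{-1}$ is applied only approximately: the nested solves deliver $\M^{-1}\r$ up to a small multiplicative perturbation, and naive Chebyshev-style robustness arguments lose polynomial factors here. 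Instead I will need the preconditioned extension of Paige's finite-precision analysis of the three-term Lanczos recurrence, developed in Section~\ref{s:lanczos}, to control how inexact preconditioner applications propagate into loss of orthogonality and spurious Ritz values. Once that stability result is in hand, the final bound follows by multiplying the per-iteration cost by $t_{\max}$ and adding the one-time preprocessing.
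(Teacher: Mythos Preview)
Your proposal is essentially the paper's own proof: same three-level architecture, same Nystr\"om preconditioner with $\lambda_0\asymp\tfrac1l\sum_{i>l}\sigma_i^2$, same Woodbury reduction to $\C^\top\C+\tilde\lambda\W$, same surrogate $\W^2+\tilde\lambda\W$ and its partial-fraction decomposition into $\W^{-1}-(\W+\tilde\lambda\I)^{-1}$, and same reliance on the preconditioned-Lanczos stability analysis of Section~\ref{s:lanczos}. The assembly and runtime accounting also match (your stray $+nl$ per iteration is harmless but not needed, since every matvec with $\C$ or $\W$ is done implicitly through $\A$ and $\S$ at cost $\tilde O(\nnz(\A))$).

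There is one genuine gap in your justification. You say the spectral equivalence $\C^\top\C\approx\W^2$ ``reduces, via Lemma~\ref{lem:precondition_sparse}, to $\S$ being a subspace embedding of the column span of $\A\S^\top$.'' This is not right on two counts. First, there is a type mismatch: the column span of $\A\S^\top$ lives in $\R^m$, whereas $\S$ acts on $\R^n$. Second, and more importantly, the unregularized statement $\C^\top\C\approx\W^2$ is simply false in general, since $\C^\top\C=\tilde\A^\top\A\A^\top\tilde\A$ sees all of $\A\A^\top$ (an operator of rank $n$), while $\W^2=\tilde\A^\top\tilde\A\tilde\A^\top\tilde\A$ only sees $\tilde\A\tilde\A^\top$ (rank $\leq s$); no subspace-embedding argument can bridge this rank gap. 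What the paper proves instead (Lemma~\ref{lem:spec_approx}) is the \emph{regularized} equivalence $\tilde\A\tilde\A^\top+\tilde\lambda\I\approx_2\A\A^\top+\tilde\lambda\I$, obtained via the OSE-moment/approximate-matrix-multiplication machinery together with the bound $d_{\tilde\lambda}(\A\A^\top)\leq\tfrac32 l$, which follows precisely from the choice $\tilde\lambda\geq\tfrac{2}{l}\sum_{i>l}\sigma_i^2$. Sandwiching by $\tilde\A^\top(\cdot)\tilde\A$ then gives $\C^\top\C+\tilde\lambda\W\approx_2\W^2+\tilde\lambda\W$, which is the statement actually needed at Level~2. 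Once you replace your subspace-embedding sentence with this regularized spectral-approximation argument, the rest of your outline goes through exactly as in the paper.
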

\begin{remark}
    The quantity $\bar\kappa_{l,\lambda}$ above has appeared (in similar forms) in prior work on the convergence of stochastic iterative methods for linear systems \cite{lee2013efficient,musco2018spectrum,derezinski2023solving}. When $\lambda=0$, this quantity satisfies $1\leq \bar\kappa_{l,0}\leq \sigma_{l+1}/\sigma_n$, and it behaves like a typical averaged condition number of $\A$. However, when $\lambda>0$, $\bar\kappa_{l,\lambda}$ can in fact be much less than $1$, which we exploit in the application to kernel ridge regression.
\end{remark}
We next demonstrate how Theorem \ref{thm:main_rec} can be used to show our main result for matrices with $k$ large singular values. To do so, we will optimize over the choice of $l$ as follows.

\begin{proof}[Proof of Theorem~\ref{thm:main}]
As mentioned above, to solve the consistent linear system $\A\x=\b$, we can apply Theorem~\ref{thm:main_rec} with choice $\lambda=0$ and $\c = \A^\top\b$. The solution $\x^*$ of the resulting linear system $\A^\top\A\x = \A^\top\b$, $(\A^\top\A)^{-1}\A^\top\b = \A^\dagger \b$, is the minimum norm solution of the general linear system $\A\x = \b$. Additionally, the output guarantee from Theorem~\ref{thm:main_rec}, that $\|\tilde{\x} - \x^*\|_{\A^\top \A} \leq \epsilon \|\x^*\|_{\A^\top \A}$, ensures that $\|\A\tilde{\x} - \b\| \leq \epsilon\|\b\|$ as required by Theorem~\ref{thm:main}. 

It remains to obtain the desired time complexity by optimizing over $l$, which we will choose  to be larger than  $k$.
Notice that for $l > k$, we have $\bar{\kappa}_{l,0} \leq \sigma_{l+1}/\sigma_n = O(1)$ under the assumption of Theorem~\ref{thm:main}. So, upper bounding $\nnz(\A) \leq n^2$, we can upper bound the runtime in Theorem \ref{thm:main_rec} by (here $\tilde{O}(\cdot)$ hide the logarithmic dependencies on $n$ and $\kappa$):
\begin{align*}
    \tilde O(n^{2.5}/\sqrt l \cdot \log^3 1 / \epsilon + l^\omega),\quad\text{for any $l\geq k$}.
\end{align*}
Optimizing over $l$ so that both terms are proportionally large, we obtain the following two cases:
\begin{itemize}
    \item If $k \leq n^{5/(2\omega+1)}$, we choose $l = n^{5/(2\omega+1)}$ and obtain time complexity $\tilde{O}(n^{2+\frac{\omega-2}{2\omega+1}} \log^3 1/\epsilon)$.
    \item If $n^{5/(2\omega+1)} < k < n$, we choose $l = k$, and obtain time complexity $\tilde O (n^{2+\frac{\omega-2}{2\omega+1}} \log^3 1/\epsilon+k^{\omega})$.
\end{itemize}
Thus, the overall time complexity becomes:
\begin{align*}
    \tilde O(n^{2+\frac{\omega-2}{2\omega+1}} \log^3 1/\epsilon + k^\omega),
\end{align*}
which simplifies to $\tilde O(n^{2.065}\log^3 1/\epsilon+k^\omega)$ with current matrix multiplication exponent $\omega \approx 2.372$.
\end{proof}

\subsection{Applications to Regularized Linear Systems and Least Squares}
Some of the most important applications of linear systems, particularly in the context of machine learning and statistics, are regularized and unregularized least squares problems. In this section, we show how Theorem \ref{thm:main_rec} can be adapted and applied to these settings.

\paragraph{Kernel ridge regression.}
One of the most computationally expensive variants of these tasks arises when we consider kernel-based learning methods \cite{rcr17,MaBelkin:2017,MeantiCarratinoRosasco:2020}. These approaches work by implicitly constructing expanded high-dimensional representations of data points, which are accessed only through inner products, which are computed using a kernel function. For a dataset with $n$ datapoints, this gives rise to an $n\times n$ positive definite kernel matrix $\K$, whose $(i,j)$ entry is the inner product between the $i$th and $j$th data point in the expanded feature representation. The resulting prediction model forms the kernel ridge regression (KRR) task: 
\begin{align*}
    \min_{\x}\ \frac1n\sum_{i=1}^n([\K\x]_i-y_i)^2 + \frac\lambda 2\x^\top\K\x,
\end{align*}
where $[\K\x]_i$ is the $i$th entry of the length $n$ vector $\K\x$ and $y_1, \ldots, y_n$ are training labels. KRR is equivalent to solving the regularized positive definite linear system $(\K + n\lambda\I)\x = \y$. Since $\K$ is $n\times n$, the cost of solving this problem exactly scales with $O(n^\omega)$. Regularized linear systems of this form also arise independently in continuous second-order optimization methods, where instead of the kernel matrix $\K$, we consider a Hessian matrix $\H$, and the regularization term $\lambda$ is a parameter of the optimization algorithm \cite{levenberg1944method,marquardt1963algorithm,monteiro2012iteration,jiang2023online}. 

We show how to adapt our main result to the setting above, giving a faster method for solving any linear system of the form $(\A+\lambda\I)\x=\b$ for a positive semidefinite matrix $\A$. Our method improves on the previously best known time complexity of this problem \cite{acw17,frangella2023randomized}, from $\tilde O(n^2+n{d_\lambda}^{\omega-1})$ to $\tilde O(n^2+{d_\lambda}^\omega)$, where recall that $d_\lambda$ is the $\lambda$-effective dimension of $\A$. We note that this result takes advantage of the fact that $\bar\kappa_{l,\lambda}$ can in fact be smaller than $1$ for $\lambda>0$.

\begin{theorem}[Regularized linear systems, formal version of Theorem~\ref{thm:krr}]\label{thm:krr_formal}
Consider positive semidefinite $\A\in\mathcal{S}_n^+$ with condition number $\kappa$ and effective $d_{\lambda} = \tr(\A(\A+\lambda\I)^{-1})$ for $\lambda > 0$. For a target $\b\in\R^n$, let $\x^* = (\A+\lambda\I)^{-1}\b$. Given $\epsilon>0$ and $0<\delta<1/8$, with probability $1-\delta$ we can compute $\tilde{\x}$ such that $\|\tilde{\x} - \x^*\|_{\A+\lambda\I} \leq\epsilon\|\x^*\|_{\A+\lambda\I}$ in time
\begin{align*}
\tilde{O}\left(n^2\cdot \log^3 (\kappa/\epsilon) + {d_{\lambda}}^{\omega}\right).
\end{align*}
\end{theorem}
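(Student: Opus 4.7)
The plan is to invoke Theorem~\ref{thm:main_rec} using its PSD-input variant (the remark at the end of that theorem, which calls Algorithm~\ref{alg:msp_psd}). I apply that result by identifying the ``$\A^\top\A$'' of Theorem~\ref{thm:main_rec} with the PSD input $\A$ of Theorem~\ref{thm:krr_formal}: the squared singular values $\sigma_i^2$ then coincide with the eigenvalues $\lambda_i$ of $\A$, and the output guarantee in the $\A^\top\A+\lambda\I$ norm becomes exactly the desired error in the $\A+\lambda\I$ norm. Since $\A$ is a dense $n\times n$ matrix, $\nnz(\A)\leq n^2$, so the runtime of Theorem~\ref{thm:main_rec} becomes $\tilde O\bigl(n^2\sqrt{n/l}\,\bar{\kappa}_{l,\lambda}\log^3(\kappa/\epsilon) + l^\omega\bigr)$. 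The proof thus reduces to choosing $l$ so that the first term is $\tilde O(n^2\log^3(\kappa/\epsilon))$ and the second is $\tilde O(d_\lambda^\omega)$.

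The key step is to bound $\bar{\kappa}_{l,\lambda}$ in terms of the effective dimension. Under the identification above,
\[
\bar{\kappa}_{l,\lambda}^2 \;=\; \frac{1}{n-l}\sum_{i>l}\frac{\lambda_i}{\lambda_n+\lambda} \;\leq\; \frac{1}{\lambda(n-l)}\sum_{i>l}\lambda_i.
\]
I set $l := \max\{\lceil 2 d_\lambda\rceil,\,\log n+1\}$; this guarantees $l\geq d_\lambda$, so part~(2) of Lemma~\ref{lem:eff_dim} gives $\sum_{i>l}\lambda_i\leq \lambda\cdot d_\lambda$. When $l\leq n/2$ (the complementary case has $d_\lambda = \Theta(n)$, where a direct $O(n^\omega)$ solver already matches the claim), this yields $\bar{\kappa}_{l,\lambda}\leq \sqrt{2 d_\lambda/n}$, and therefore $\sqrt{n/l}\cdot\bar{\kappa}_{l,\lambda} \leq \sqrt{n/(2d_\lambda)}\cdot\sqrt{2d_\lambda/n} = O(1)$. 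The first runtime term collapses to $\tilde O(n^2\log^3(\kappa/\epsilon))$, while $l = O(d_\lambda+\log n)$ makes the second term $\tilde O(d_\lambda^\omega)$ after absorbing logs.

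I expect the main obstacle to be the slight mismatch between the floor $l\geq\log n+1$ demanded by Theorem~\ref{thm:main_rec} and the ideal choice $l\approx d_\lambda$ when $d_\lambda$ is very small. In that regime, taking $l$ strictly larger than $d_\lambda$ only shrinks the tail sum $\sum_{i>l}\lambda_i$, so the bound $\bar{\kappa}_{l,\lambda} = O(\sqrt{d_\lambda/n})$ still holds and the product $\sqrt{n/l}\cdot\bar{\kappa}_{l,\lambda}$ remains $O(1)$; this forces a small amount of case-analysis but does not change the final bound. The only other piece of bookkeeping is noting that the ``$\kappa$'' appearing inside the log factor of Theorem~\ref{thm:main_rec} refers to the condition number of an implicit square-root factor of $\A$, hence $\sqrt{\kappa(\A)}$, which only perturbs the log by a constant factor and is absorbed into $\log^3(\kappa/\epsilon)$. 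Apart from this, the argument is essentially mechanical given Theorem~\ref{thm:main_rec} and Lemma~\ref{lem:eff_dim}.
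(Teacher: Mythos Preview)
Your proposal is correct and follows essentially the same route as the paper: invoke the PSD-input variant of Theorem~\ref{thm:main_rec} with $l\approx 2d_\lambda$, bound $\bar\kappa_{l,\lambda}^2$ by $O(l/n)$ so that $\sqrt{n/l}\,\bar\kappa_{l,\lambda}=O(1)$, and dispatch the large-$d_\lambda$ regime with a direct solve. The only minor difference is that the paper bounds $\sum_{i>l}\lambda_i$ via part~(1) of Lemma~\ref{lem:eff_dim} (showing $\lambda_l\le\lambda$ and then manipulating the definition of $d_\lambda$), whereas you invoke part~(2) directly; both yield the same $O(\lambda d_\lambda)$ tail bound up to constants.
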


\begin{proof}
Without loss of generality, we assume that $d_{\lambda} < n/4$. If $d_\lambda$ is larger, the stated runtime follows by simply using a direct $O(n^\omega)$ time method for inverting $(\A + \lambda \I)$.

We obtain the result by applying Theorem~\ref{thm:main_rec} to the matrix $\A^{1/2}$. As stated in the last lines of the theorem, it holds even if we only have access to $(\A^{1/2})^T\A^{1/2} = \A$. We show that if we choose $l = 2d_{\lambda}$, we have $\bar{\kappa}_{l,\lambda}^2(\A^{1/2}) = O(l/n)$. In particular, recall that $d_{\lambda} = \tr(\A(\A+\lambda\I)^{-1})$ and, from Lemma~\ref{lem:eff_dim} we know that if we choose $l = 2d_{\lambda}$ then $\lambda_l(\A) \leq \lambda$. We thus have: 
\begin{align*}
l = 2d_{\lambda} = 2\sum_{i=1}^n \frac{\lambda_i(\A)}{\lambda_i(\A) + \lambda} > 2\sum_{i=l}^n \frac{\lambda_i(\A)}{\lambda_i(\A) + \lambda} \geq \frac{2\cdot \sum_{i>l} \lambda_i(\A)}{\lambda_l(\A) + \lambda} \geq \frac{1}{\lambda} \sum_{i>l} \lambda_i(\A).
\end{align*}
By applying this result we have
\begin{align*}
\bar{\kappa}_{l,\lambda}^2(\A^{1/2}) = \frac{1}{n-l}\sum_{i>l}\frac{\lambda_i(\A)}{\lambda_n(\A) + \lambda} \leq \frac{1}{n-l}\cdot\frac{l\lambda}{\lambda_n(\A)+\lambda}\leq \frac{l}{n-l} \leq \frac{2l}{n}.
\end{align*}
In the last step we use that, since $d_{\lambda} \leq n/4$ by assumption, $l \leq n/2$.

Now we simply plug into Theorem~\ref{thm:main_rec}, noting that $\sqrt{\frac{n}{l}}\bar{\kappa}_{l,\lambda}$ can be upper bounded by $\sqrt{2}$ given the result above. 
We conclude that we can compute $\tilde{\x}$ such that $\|\tilde{\x}-\x^*\|_{\A+\lambda\I} \leq \epsilon \|\x^*\|_{\A+\lambda\I}$ in time $\tilde{O}(n^2 \cdot \log^3 (\kappa/\epsilon) + {d_{\lambda}}^{\omega}).$
\end{proof}

\paragraph{Least Squares.} Our Theorem~\ref{thm:main_rec} naturally extends to over-determined linear systems. For simplicity we state a result in the setting of Problem \ref{p:large-sv}, where $\A$ has at most $k$ large singular values and is otherwise well conditioned.

\begin{corollary}[Least squares]\label{cor:least_squares}
Given matrix $\A\in\R^{n \times d}$ with at most $k$ singular values larger than $O(1)$ times its smallest singular value, $\b \in \R^n, \epsilon > 0$ and $0<\delta<1/8$, with probability at least $1-\delta$, we can compute $\tilde \x$ such that $\|\A\tilde\x-\b\|^2\leq\min_\x\|\A\x-\b\|^2 + \epsilon\|\b\|^2$ in time
\begin{align*}
\tilde{O}\left((\nnz(\A) + d^{2.065})\cdot\log 1/\epsilon + k^\omega\right).
\end{align*}
\end{corollary}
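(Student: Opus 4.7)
The plan is to reduce the least-squares problem to a preconditioned linear solve on the normal equations and then sketch the problem down to dimension $d$, so that Theorem \ref{thm:main_rec} can be applied to the reduced matrix. Writing $\x^* = \A^\dagger \b$, the Pythagorean identity gives $\|\A \tilde{\x} - \b\|^2 - \|\A \x^* - \b\|^2 = \|\tilde{\x} - \x^*\|_{\A^\top \A}^2$, while $\|\x^*\|_{\A^\top \A} = \|\A \x^*\| \le \|\b\|$. Hence it suffices to solve $\A^\top \A \x = \A^\top \b$ in the relative $\A^\top \A$-norm to accuracy $\sqrt{\epsilon}$, which turns the least-squares guarantee into the kind of output guarantee provided by Theorem \ref{thm:main_rec}.

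Next, I would compute $\B = \mathbf{\Phi} \A \in \R^{\tilde{O}(d) \times d}$ in $\tilde{O}(\nnz(\A) + d^2)$ time via Lemma \ref{lem:precondition_sparse} with constant distortion, so that $\B^\top \B \approx_{O(1)} \A^\top \A$ with high probability. By Weyl's inequality applied to the multiplicative spectral approximation, $\B$ inherits from $\A$ the property of having at most $k$ singular values larger than $O(1)$ times its smallest. I would then run preconditioned Lanczos (as analyzed in Section \ref{s:lanczos}) on $\A^\top \A \x = \A^\top \b$ with preconditioner $\M = \B^\top \B$. Because $\kappa(\M^{-1} \A^\top \A) = O(1)$, the outer iteration converges in $O(\log(1/\epsilon))$ steps, each requiring one matvec with $\A^\top \A$ at cost $O(\nnz(\A))$ and one approximate application of $\M^{-1}$.

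For the inner solve, I would apply Theorem \ref{thm:main_rec} to the rectangular matrix $\B$ with $\lambda = 0$: since $\nnz(\B) = O(d^2)$ and $\bar{\kappa}_{l,0}(\B) = O(1)$ for any $l \ge k$, the runtime is $\tilde{O}(d^2 \sqrt{d/l} + l^\omega)$. Balancing as in the proof of Theorem \ref{thm:main} gives $l \approx \max(k, d^{5/(2\omega+1)})$, yielding an inner-solve cost of $\tilde{O}(d^{2.065} + k^\omega)$ per call. Crucially, the $k^\omega$-cost Nyström preconditioner for $\B^\top \B$ (together with the inversion data structure from \AlgName{}) is built once and reused across all outer Lanczos iterations, so the amortized per-call inner cost is only $\tilde{O}(d^{2.065})$. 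Combining an $\tilde{O}(\nnz(\A) + k^\omega + d^2)$ setup with $O(\log(1/\epsilon))$ outer iterations at cost $O(\nnz(\A)) + \tilde{O}(d^{2.065})$ yields the claimed total runtime $\tilde{O}((\nnz(\A) + d^{2.065}) \log(1/\epsilon) + k^\omega)$.

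The main obstacle is controlling the effect of an inexact inner solve on the outer preconditioned Lanczos iteration: each application of $\M^{-1}$ carries error from the inner linear solve, and showing that this does not degrade the $O(\log(1/\epsilon))$ outer convergence rate requires exactly the preconditioned Lanczos stability analysis developed in Section \ref{s:lanczos}, invoked with the inner tolerance set polynomially smaller than the outer one. A secondary subtlety is verifying that the inner solver can be invoked repeatedly against the same underlying matrix $\B^\top \B$ at only an amortized (rather than per-call) additive cost of $k^\omega$, which requires opening up Algorithm \ref{alg:msp_rec_main} to separate the Nyström preconditioner construction from the iterative Lanczos sweeps that consume it.
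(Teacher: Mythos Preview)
Your proposal is correct and follows essentially the same architecture as the paper: sketch $\A$ to a $\tilde O(d)\times d$ matrix whose Gram matrix is a constant-factor preconditioner for $\A^\top\A$, observe that the sketch inherits the ``$k$ large singular values'' property, solve the inner systems via Theorem~\ref{thm:main_rec}, and reuse the $l^\omega$ setup across all outer iterations. The paper's argument is the same, citing the standard sketch-and-precondition framework (e.g., Section~10 of \cite{derezinski2023solving}) for the outer loop rather than spelling it out.

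The one genuine difference is your choice of outer iteration. You use the inexact preconditioned Lanczos of Section~\ref{s:lanczos}, which (per Theorem~\ref{thm:stable_lanczos}) requires the inner accuracy $\epsilon_0$ to be polynomially small in $\epsilon$; consequently each inner call to Theorem~\ref{thm:main_rec} carries a $\log^3(\kappa/\epsilon_0)=\Theta(\log^3(1/\epsilon))$ factor, and your total becomes $\tilde O\big((\nnz(\A)+d^{2.065})\cdot\log^4(1/\epsilon)+k^\omega\big)$ rather than the stated single $\log(1/\epsilon)$. The paper avoids this by using plain iterative refinement (preconditioned Richardson) as the outer loop: because $\kappa_\M=O(1)$, each refinement step only needs the inner system solved to \emph{constant} accuracy, so the inner $\log^3$ factor depends only on $\kappa$ and is absorbed into $\tilde O$. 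Your route still yields the headline exponents $2.065$ and $\omega$, but if you want to match the $\log(1/\epsilon)$ exactly, swap Lanczos for Richardson in the outer loop---which also removes the stability concern you flag as the main obstacle.
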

This result is obtained by combining our MSP method with a standard sketch-and-precondition iterative solver \cite{rokhlin2008fast,epperly23}. In particular, for a tall matrix $\A\in\R^{n\times d}$ where $n \gg d$, one can first construct a sketch $\tilde{\A}$ with $O(d)$ rows using a constant-factor sparse oblivious subspace embedding (e.g., Lemma~\ref{lem:precondition_sparse}) which takes $\tilde{O}(\nnz(\A) + d^2)$ time. 
To solve a least squares problem involving $\A$, it suffices to solve $O(\log(1/\epsilon))$ linear systems of the form $(\tilde{\A}^\top\tilde{\A})^{-1}\g$ for some vector $\g$ up to constant error. See e.g. Section 10 of \cite{derezinski2023solving} for details. 

To solve these required linear systems, we apply \Cref{thm:main_rec} directly. We note that, since $\tilde{\A}$ is a subspace embedding of $\A$, all of its singular values are within a multiplicative constant factor of those of $\A$. Accordingly, it too has at most $k$ singular values greater than a constant times its smallest singular value. 
We conclude that, for $l\geq k$, each application of \Cref{thm:main_rec} takes $\tilde{O}(\nnz(\A) + d^2\sqrt{d/l})$ time. The additive $O(l^{\omega})$ term in the runtime of \Cref{thm:main_rec} comes from the construction of an inner preconditioner for $\tilde{\A}$, so is only incurred a single time. 
Optimizing over the choice of $l$ yields Corollarly \ref{cor:least_squares}. Notice that we only incur a $\log 1/\epsilon$ dependence in the result instead of $\log^3 1/\epsilon$. This is because each system involving $\tilde{\A}$ only needs to be solved to constant accuracy for the overall sketch-and-precondition method to converge in $O(\log 1/\epsilon)$ iterations.

\subsection{Applications to Matrix Norm Estimation}
\label{s:schatten}

We next discuss how our methods can be used in the task of estimating the Schatten norm of a matrix, providing a direct improvement over a result obtained in \cite{musco2018spectrum}. 
For the sake of simplicity, we focus on the case of a dense square matrix $\A\in\R^{n\times n}$. However, using similar strategies to those described in \cite{musco2018spectrum}, our results can also be used to improve time complexities when $\A$ is sparse or rectangular.

To obtain our results, we note that the norm estimation algorithms of \cite{musco2018spectrum} are in fact meta-algorithms that require a black-box linear solver for regularized systems. Therefore any improvement to the time complexity of the black-box solver leads to a potential improvement in the complexity of Schatten norm estimation (as well as to other spectrum approximation tasks discussed in that paper). The key black-box requirement from \cite{musco2018spectrum} is as follows: Given a matrix $\A\in\R^{n\times n}$, parameter $\lambda>0$, tolerance $\epsilon\in(0,1)$, and a vector $\y\in\R^n$, return a vector $\x$ such that:
\begin{align}
    \|\x - \M_\lambda^{-1}\y\|_{\M_\lambda}\leq \epsilon\|\y\|_{\M_\lambda^{-1}},\quad\text{where}\quad \M_\lambda=\A^\top\A+\lambda\I.\label{eq:schatten-ridge}
\end{align}
A key parameter in the analysis of \cite{musco2018spectrum} for solving such linear systems turns out to be the same  averaged tail condition number that arises in our analysis (up to adjustments in notation):
\begin{align*}
\bar\kappa_{k,\lambda}(\A) \coloneqq \Big(\frac1{n-k}\sum_{i>k}\frac{\sigma_i^2(\A)}{\sigma_{\min}^2(\A)+\lambda}\Big)^{1/2}.
\end{align*}
In particular, the main result of \cite{musco2018spectrum} for Schatten norm estimation can be reformulated as follows:
\begin{lemma}[adapted from the proof of  Corollary 12 in \cite{musco2018spectrum}]\label{l:schatten-blackbox}
    For any matrix $\A\in\R^{n\times n}$, parameter $p\in(0,2)$,
    $\epsilon\in(0,1)$, and a parameter $k\leq n$, there is an algorithm that, using $\tilde
    O(\frac1{\epsilon^5p})$ calls to an $n\times n$ ridge regression
problem \eqref{eq:schatten-ridge} with $\bar\kappa_{k,\lambda} \leq \frac1{\epsilon^{1/p}}(\frac nk)^{1/p-1/2}$, returns
    $X\in (1\pm\epsilon)\|\A\|_p^p$.
\end{lemma}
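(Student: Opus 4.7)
The proof plan is to follow the framework of \cite{musco2018spectrum}: combine stochastic trace estimation with a best rational approximation of $x^{p/2}$, so that estimating $\|\A\|_p^p = \tr((\A^\top\A)^{p/2})$ reduces to computing matrix-vector products with a rational function of $\A^\top\A$, which in turn factor through ridge regression solves of the form \eqref{eq:schatten-ridge}. Concretely, pick $\tilde O(1/\epsilon^2)$ Gaussian vectors $\g_1,\ldots,\g_m$ and form the Hutchinson-style estimator $\frac{1}{m}\sum_j \g_j^\top r(\A^\top\A)\g_j$, where $r(x) = \sum_{i=1}^d \frac{\alpha_i}{x+\lambda_i}$ is a rational approximation to $x^{p/2}$ on the positive spectrum. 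Then $r(\A^\top\A)\g_j = \sum_i \alpha_i (\A^\top\A+\lambda_i\I)^{-1}\g_j$, so each trial uses $d$ calls to the black-box ridge solver.

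The first main step is to set up the rational approximation carefully. Using the classical fact (as leveraged in \cite{musco2018spectrum}) that $x^{p/2}$ for $p\in(0,2)$ admits a degree-$d$ rational approximation with exponentially small error on intervals $[\mu, U]$ with poles on the negative real axis, one obtains a relative $\epsilon/n$ approximation to $x^{p/2}$ using degree $d = \tilde O(1/(\epsilon p))$, provided we handle only the ``bulk'' of the spectrum. To get a $(1\pm\epsilon)$ multiplicative estimate of $\|\A\|_p^p$, one splits the sum $\sum_i \sigma_i^p$ at the index $k$: contributions from the top $k$ singular values are handled by restricting attention to values $\sigma_i^2 \gtrsim \epsilon^{2/p}\|\A\|_p^p / k$, while smaller singular values contribute a negligible $\epsilon\|\A\|_p^p$ fraction when $k$ is chosen as in the statement. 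Repeated squaring/preprocessing and the standard ``soft-threshold'' trick from \cite{musco2018spectrum} convert this into the required $\tilde O(1/(\epsilon^5 p))$ total number of ridge regression calls after accounting for variance reduction and the $d$ poles.

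The second step is to bound $\bar\kappa_{k,\lambda_i}$ uniformly across the $\lambda_i$ used by $r$. The key inequality is
\begin{align*}
\bar\kappa_{k,\lambda}^2 = \frac{1}{n-k}\sum_{i>k}\frac{\sigma_i^2}{\sigma_n^2+\lambda} \leq \frac{1}{(n-k)\lambda}\sum_{i>k}\sigma_i^2.
\end{align*}
Since $\sum_{i>k}\sigma_i^2$ can be bounded in terms of $\|\A\|_p^p$ via Hölder-type inequalities on the tail (using that $\sigma_i^2 = (\sigma_i^p)^{2/p}$ and the tail sum of the $p$-th powers is at most $\|\A\|_p^p$), one gets $\sum_{i>k}\sigma_i^2 \lesssim (n-k)^{1-2/p}\|\A\|_p^{2}$. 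The smallest relevant pole $\lambda_i$ scales like $\epsilon^{2/p}\|\A\|_p^2/k$ (the threshold below which the rational approximation need not be sharp), so substituting this yields $\bar\kappa_{k,\lambda_i}^2 \leq \epsilon^{-2/p}(n/k)^{2/p-1}$, giving the claimed bound.

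The main obstacle is orchestrating the three approximation parameters (trace estimator sample count $m$, rational degree $d$, and threshold $k$) so that all three sources of error—Hutchinson variance, rational approximation error on the ``bulk,'' and truncation of the tail—combine to an overall $(1\pm\epsilon)$ bound, while simultaneously every pole $\lambda_i$ that the algorithm passes to the black-box solver satisfies the stated condition number bound. This bookkeeping is already carried out in \cite{musco2018spectrum} for the specific case of $p=1$, and the proof will consist of transcribing that argument, tracking the dependence on $p$ through the degree $d = \tilde O(1/(\epsilon p))$ and verifying that the smallest pole of the rational approximation does not violate the $\bar\kappa_{k,\lambda}$ requirement. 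Since our contribution is to supply a faster black-box solver rather than to modify the reduction itself, no new approximation-theoretic ingredient is needed.
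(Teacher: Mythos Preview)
The paper does not prove this lemma; it is stated as a direct adaptation of the argument in \cite{musco2018spectrum} and used as a black box. Your high-level plan (Hutchinson trace estimation, rational approximation of $x^{p/2}$ with negative real poles, thresholding small singular values) is indeed the framework from that reference, so at the structural level your sketch aligns with what the paper invokes.

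However, the two quantitative steps in your second part are miscomputed. First, the ``H\"older-type'' bound $\sum_{i>k}\sigma_i^2 \lesssim (n-k)^{1-2/p}\|\A\|_p^2$ is not valid: for $p<2$ the map $t\mapsto t^{2/p}$ is convex, and Jensen gives this inequality in the reverse direction. The bound you actually need comes from the ordering of the singular values: since $\sigma_{k+1}^p\le \|\A\|_p^p/k$, one has $\sigma_i^2=\sigma_i^p\cdot\sigma_i^{2-p}\le \sigma_i^p\cdot\sigma_{k+1}^{2-p}$ for $i>k$, hence $\sum_{i>k}\sigma_i^2\le k^{1-2/p}\|\A\|_p^2$ (the factor is $k^{1-2/p}$, not $(n-k)^{1-2/p}$). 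Second, the threshold for the smallest pole should be set by the criterion that singular values with $\sigma_i^p\le \epsilon\|\A\|_p^p/n$ can be dropped, which gives $\lambda\sim(\epsilon\|\A\|_p^p/n)^{2/p}$, not $\epsilon^{2/p}\|\A\|_p^2/k$. With these two corrections in place, the substitution into $\bar\kappa_{k,\lambda}^2\le \frac{1}{(n-k)\lambda}\sum_{i>k}\sigma_i^2$ does yield the claimed $\epsilon^{-2/p}(n/k)^{2/p-1}$; with your stated versions, it does not. These are repairable computational errors rather than a missing idea, but as written the argument would not close.
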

We note that a similar statement applies to the general spectral sums approximation result of \cite{musco2018spectrum} (Theorem 11), which can also by used to approximate other norms like the Ky Fan and Orlicz norm. We focus on the Schatten norms merely for the sake of simplicity and conciseness. Also, we only focus on Schatten norms with $0<p<2$, including the most important case of $p=1$, as the existing algorithms for the case of $p\geq 2$ are already near-optimal.

We are now ready to state our result for Schatten norm estimation, by combining Lemma \ref{l:schatten-blackbox} with our linear system solver.
\begin{theorem}\label{t:schatten}
    For any $p\in(0,2)$ and $\A\in\R^{n\times n}$ there is an algorithm that, with high probability, returns $X\in(1\pm\epsilon)\|\A\|_p^p$ which runs in time:
    \begin{align*}
        \tilde O\Big(n^{2+\frac{\omega-2}{p\omega+1}}\poly(1/\epsilon)\Big).
    \end{align*}
    In particular, for $p=1$, we can get a constant factor approximation to the Schatten 1-norm of $\A$ in $\tilde O(n^{2.11})$ time with fast matrix multiplication and in $\tilde O(n^{2.25})$ time without fast matrix multiplication. 
\end{theorem}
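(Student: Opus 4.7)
The plan is to compose the black-box reduction of Lemma~\ref{l:schatten-blackbox} with our linear-system solver from Theorem~\ref{thm:main_rec}, treating the rank parameter of the reduction and the rank parameter of the solver as a single shared knob, and then optimizing it. First, I would invoke Lemma~\ref{l:schatten-blackbox} with a rank parameter $k$ to be chosen later: this reduces Schatten-$p$ norm estimation to $\tilde{O}(1/(\epsilon^5 p))$ regularized solves of the form \eqref{eq:schatten-ridge}, each coming with the tail-condition-number bound $\bar\kappa_{k,\lambda}(\A)\leq \epsilon^{-1/p}(n/k)^{1/p-1/2}$. Importantly, the quantity $\bar\kappa_{l,\lambda}$ appearing in Theorem~\ref{thm:main_rec} is literally the same quantity as $\bar\kappa_{k,\lambda}$ from the lemma under the identification $l=k$, so no translation is needed.

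Next, I would apply Theorem~\ref{thm:main_rec} with $l=k$ to each individual system. The ridge form $\M_\lambda=\A^\top\A+\lambda\I$ of \eqref{eq:schatten-ridge} matches the setup of Theorem~\ref{thm:main_rec} exactly, and the $\|\cdot\|_{\M_\lambda}$-relative error required by the lemma is equivalent to the $\|\cdot\|_{\A^\top\A+\lambda\I}$-relative error produced by the theorem via the identity $\|\M_\lambda^{-1}\y\|_{\M_\lambda}=\|\y\|_{\M_\lambda^{-1}}$. Using $\nnz(\A)\leq n^2$ and substituting the tail-condition bound gives a per-call cost of
\[
\tilde{O}\!\left(n^2\sqrt{n/k}\,\bar\kappa_{k,\lambda} + k^\omega\right) \;=\; \tilde{O}\!\left(\epsilon^{-1/p}\, n^2 (n/k)^{1/p} + k^\omega\right),
\]
where the $\log^3(\kappa/\epsilon)$ factor from Theorem~\ref{thm:main_rec} is absorbed into $\tilde{O}(\cdot)$.

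The core of the argument is then a one-parameter optimization. Balancing $n^2 (n/k)^{1/p}$ against $k^\omega$ yields the optimal $k = n^{(2p+1)/(p\omega+1)}$, at which both terms equal $n^{\omega(2p+1)/(p\omega+1)} = n^{2+(\omega-2)/(p\omega+1)}$; multiplying by the $\tilde{O}(1/(\epsilon^5 p))$ outer calls and absorbing every $\epsilon$-dependence into a single $\poly(1/\epsilon)$ gives the claimed runtime. The specific numerics follow by plugging in: for $p=1$ and $\omega\approx 2.372$ we get $3\omega/(\omega+1)\approx 2.11$, and for $\omega=3$ (no fast matrix multiplication) we get $9/4 = 2.25$.

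Since both building blocks are already established, there is no real conceptual obstacle; the only step worth double-checking is that the chosen $k = n^{\Theta(1)}$ lies in the admissible range $\{\log n+1,\ldots,n\}$ required by Theorem~\ref{thm:main_rec}, which is automatic throughout the interesting regime $p\in(0,2)$. The main thing to be careful about is simply not to lose the $\poly(1/\epsilon)$ accounting in either the inner solver's $\epsilon^{-1/p}$ factor or the outer reduction's $\epsilon^{-5}$ factor, both of which collapse harmlessly into the single $\poly(1/\epsilon)$ allowed by the statement.
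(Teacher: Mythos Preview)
Your approach mirrors the paper's proof almost exactly: invoke Lemma~\ref{l:schatten-blackbox}, plug in Theorem~\ref{thm:main_rec} with $l=k$, substitute the tail-condition bound, and balance $n^{2+1/p}/k^{1/p}$ against $k^\omega$. The optimization and the numerics are correct.

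There is, however, one genuine gap. You write that ``the $\log^3(\kappa/\epsilon)$ factor from Theorem~\ref{thm:main_rec} is absorbed into $\tilde O(\cdot)$.'' It is not: in this paper $\tilde O$ hides only $\polylog(mn/\delta)$ factors, while $\kappa$ is the condition number of $\A$, which is unbounded for general $\A\in\R^{n\times n}$ (the theorem is stated for arbitrary $\A$, including singular or near-singular matrices). Lemma~\ref{l:schatten-blackbox} controls only the averaged tail quantity $\bar\kappa_{k,\lambda}$, not $\sigma_1/\sigma_n$, so nothing in your argument bounds $\kappa$. The paper handles this with a short but necessary preprocessing step: replace $\A$ by $\A+z\I$ for a Gaussian $z$ with standard deviation $\|\A\|\cdot\poly(\epsilon/n)$, so that with high probability all eigenvalues are bounded away from zero by $\|\A\|\cdot\poly(\epsilon/n)$ (hence $\kappa=\poly(n/\epsilon)$) while $\|\A+z\I\|_p^p$ changes by at most a $(1\pm\epsilon/2)$ factor. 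Only after this reduction is $\log\kappa$ legitimately $O(\log(n/\epsilon))$ and hence absorbable. You should add this step; without it the stated runtime does not follow.

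A minor remark: the paper also notes that the $k^\omega$ preconditioner-construction cost is incurred only once and reused across all $\tilde O(1/(\epsilon^5 p))$ solves, whereas you pay it per call. Since you are content with a final $\poly(1/\epsilon)$ factor this makes no difference to the exponent of $n$, but it is worth being aware of.
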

\begin{remark}
    This result improves the exponent of $n$ in the time complexity, compared to \cite{musco2018spectrum}, for any $0<p<1.5$. In particular, for $p=1$, we improve from $\tilde O(n^{2.18})$ to $\tilde O(n^{2.11})$ with fast matrix multiplication, and from $\tilde O(n^{2.33})$ to $\tilde O(n^{2.25})$ without fast matrix multiplication. While we did not optimize the time complexity dependence on $\epsilon$, it should be possible to recover the $O(\frac1{\epsilon^{\max\{3,1+1/p\}}})$ dependence from \cite{musco2018spectrum} by carefully repeating their optimized analysis given in Theorem 31.
\end{remark}
\begin{proof}
We will prove this result by applying our main result, Theorem \ref{thm:main_rec}. However, observe that this theorem has a logarithmic dependence on the condition number of $\A$, which we would like to avoid. To do so, observe that we can reduce to the case where the input $\A$ has condition number $\poly(n/\epsilon)$. If it does not, we can simply add $z \I$ to $\A$, where $\I$ is an $n\times n$ identity matrix and $z$ is a mean zero Gaussian random variable with standard deviation $\|\A\|\cdot \poly(\epsilon/n)$. $\|\A\|$ can be approximated to multiplicative accuracy in $\tilde{O}(n^2)$ time using standard power method. With high probability, we will have that $|z| \leq \|\A\|\cdot \poly(\epsilon/n)$, so the Schatten $p$-norm of $\A + z\I$ is within a multiplicative $(1\pm \epsilon/2)$ factor of that of $\A$. Moreover, by a union bound and standard anti-concentration of Gaussian random variables, we will have that all eigenvalues of $\A + z\I$ are at least $\|\A\|\cdot \poly(\epsilon/n)$ far away from zero with high probability, leading to condition number $\poly(n/\epsilon)$.

With the condition number bounded, for any $\lambda>0$ and $k$, using our main linear system solver from Theorem \ref{thm:main_rec}, we can solve the regularized linear system \eqref{eq:schatten-ridge} in time $\tilde O(\nnz(\A)\sqrt{\frac nk}\bar\kappa_{k,\lambda} + k^\omega)$. Focusing on dense matrices so that $\nnz(\A)=n^2$, and assuming that $\bar\kappa_{k,\lambda} \leq \frac1{\epsilon^{1/p}}(\frac nk)^{1/p-1/2}$ as in Lemma \ref{l:schatten-blackbox}, we obtain the following overall time complexity for Schatten norm estimation:
\begin{align*}
\tilde O\Bigg(\frac1{p\epsilon^5}n^2\sqrt{\frac nk}\epsilon^{-1/p}\Big(\frac nk\Big)^{1/p-1/2} + k^\omega\bigg) = \tilde O\Big(\frac1{p\epsilon^{5+1/p}}\frac{n^{2+1/p}}{k^{1/p}} + k^\omega\Big),
\end{align*}
where we use the fact that the $\tilde O(k^\omega)$ (needed to build our recursive preconditioner) only has to be performed once and can be reused for all linear solves. Now, it remains to optimize the time complexity over $k$. Without optimizing over $\epsilon$ dependence, we simply balance out the two terms $\frac{n^{2+1/p}}{k^{1/p}}$ and $k^\omega$, obtaining $n^{\frac{2p\omega+\omega}{p\omega+1}} = n^{2+\frac{\omega-2}{p\omega+1}}$.
\end{proof}

\section{Two-Level \AlgName{} for Positive Definite Linear Systems}\label{s:psd-proof}
Before proceeding into the proof of our main result Theorem~\ref{thm:main_rec}, we first consider a special (and easier) case of solving $(\A+\lambda\I)\x = \b$, where $\A\in\mathcal{S}_n^{++}$ and $\lambda \geq 0$. 
Let $\{\lambda_i\}_{i=1}^n$ be the eigenvalues of $\A$ in decreasing order. Given $l<n$ and $\lambda\geq 0$, we define the average regularized tail condition number of $\A$ as $\bar{\kappa}_{l,\lambda}(\A) := \frac{1}{n-l}\sum_{i>l}^n\frac{\lambda_i}{\lambda_n+\lambda}$. Notice that $\bar{\kappa}_{l,\lambda}(\A)$ is decreasing in both $l$ and $\lambda$. For simplicity, we denote it as $\bar{\kappa}_{l}$ if we assume that $\lambda=0$ and matrix $\A$ is clear from the context.

We solve this positive definite linear system with preconditioned Lanczos method. The following Lemma~\ref{lem:stable_lanczos} guarantees that, when using a preconditioner $\M$, if for given vector $\r$ we can approximate $\M^{-1}\r$ well enough and satisfy Eq.\eqref{eq:lanczos_assump}, then preconditioned Lanczos can converge in roughly the same number of iterations as if we compute $\M^{-1}\r$ exactly. Details are in Section \ref{s:lanczos}.

\begin{lemma}[Preconditioned Lanczos, restated Theorem~\ref{thm:stable_lanczos}]\label{lem:stable_lanczos}
Consider solving $\A\x = \b$ for positive definite $\A$ using preconditioned Lanczos provided with a function $\mathsf{SolveM}$ that, for some preconditioner $\M$ and any vector $\r$ returns 
\begin{align}\label{eq:lanczos_assump}
\|\mathsf{SolveM}(\r) - \M^{-1}\r\|_{\M} \leq \epsilon_0 \cdot \|\M^{-1}\r\|_{\M}.
\end{align}
If $\epsilon_0 \leq (\frac{\epsilon}{\kappa_{\M} n})^c$ for a fixed constant $c>0$, where $\kappa_{\M}$ is the condition number of $\M^{-1/2}\A\M^{-1/2}$,
Then, preconditioned Lanczos (Algorithm \ref{alg:lanczos_left_precon}) with $t = O(\sqrt{\kappa_{\M}}\log(\kappa_{\M}/\epsilon))$ iterations returns $\tilde{\x}$ s.t.:
\begin{align*}
\|\tilde{\x} - \x^*\|_{\A} \leq \epsilon \cdot \|\x^*\|_{\A} ~~~\text{where}~~~ \x^* = \A^{-1}\b.
\end{align*}
\end{lemma}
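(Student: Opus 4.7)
The plan is to reduce the preconditioned Lanczos method applied to $(\A,\M)$ to a perturbed unpreconditioned Lanczos-type iteration on the symmetric matrix $\widehat{\A} := \M^{-1/2}\A\M^{-1/2}$, whose condition number is exactly $\kappa_{\M}$. The well-known exact-arithmetic Lanczos convergence bound gives that after $t$ steps the error of the iterate $\tilde{\x}$ satisfies $\|\tilde{\x}-\x^*\|_{\A} \le 2\bigl(1-2/(\sqrt{\kappa_\M}+1)\bigr)^t\|\x^*\|_\A$, so choosing $t = O(\sqrt{\kappa_\M}\log(\kappa_\M/\epsilon))$ already suffices to drive the noise-free contribution below $(\epsilon/2)\|\x^*\|_\A$. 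The only real issue, therefore, is to show that the inexact applications of $\M^{-1}$ provided by $\mathsf{SolveM}$ can be absorbed as a controlled per-step perturbation.

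First I would express each call to $\mathsf{SolveM}(\r)$ as $\M^{-1}\r + \e$, where the hypothesis $\|\e\|_\M \le \epsilon_0 \|\M^{-1}\r\|_\M$ translates, via $\|\cdot\|_2 \le \|\M^{-1}\|^{1/2}\|\cdot\|_\M$ and $\|\M^{-1}\r\|_\M \le \|\M^{-1}\|^{1/2}\|\r\|_2$, into a Euclidean bound of the form $\|\e\|_2 \le \epsilon_0 \cdot \|\M^{-1}\|\cdot\|\r\|_2$. Under the standard change of variables $\y = \M^{1/2}\x$, one step of preconditioned Lanczos becomes a matrix-vector product with $\widehat{\A}$ followed by re-orthogonalization in the Euclidean inner product; the inexactness of $\mathsf{SolveM}$ appears there as an additive perturbation of this matrix-vector product whose norm is bounded by $\epsilon_0 \cdot \mathrm{poly}(\kappa_\M)\cdot\|\y\|_2$. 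This places us squarely in the framework of the finite-precision / perturbed Lanczos analysis of Paige, Greenbaum, and Musco--Musco--Sidford.

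Next I would invoke that framework (adapted to the present continuous-noise setting rather than machine roundoff) to conclude that the Lanczos iteration still produces a polynomial $p_t$ applied to $\widehat{\A}$ which approximates the optimal degree-$t$ polynomial approximation to $\widehat{\A}^{-1}$ up to an additive error that is polynomial in the per-step perturbation magnitude and in $n,\kappa_\M,t$. Concretely, the final error admits a decomposition
\begin{equation*}
\|\tilde{\x}-\x^*\|_\A \;\le\; 2\Bigl(1-\tfrac{2}{\sqrt{\kappa_\M}+1}\Bigr)^{t}\|\x^*\|_\A \;+\; \mathrm{poly}(n,\kappa_\M,t)\cdot\epsilon_0\cdot\|\x^*\|_\A.
\end{equation*}
Plugging in $t = O(\sqrt{\kappa_\M}\log(\kappa_\M/\epsilon))$ makes the first term at most $(\epsilon/2)\|\x^*\|_\A$, and taking $c$ in the hypothesis $\epsilon_0 \le (\epsilon/(\kappa_\M n))^c$ sufficiently large (to dominate the polynomial factor) makes the second term at most $(\epsilon/2)\|\x^*\|_\A$, yielding the claim.

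The main obstacle will be the perturbed Lanczos analysis itself: standard roundoff arguments assume the perturbation enters uniformly in all coordinates at each step, whereas here the perturbation is tied to the vector being multiplied through a relative $\M$-norm bound, and loss of orthogonality among the Lanczos vectors is a notoriously delicate issue. I would address this by carefully showing, as in Musco--Musco--Sidford, that even with such perturbations the computed Krylov subspace still supports a polynomial that is $\mathrm{poly}(n,\kappa_\M,t)\cdot\epsilon_0$-close to the minimax polynomial approximation to $1/x$ on $[\lambda_{\min}(\widehat{\A}),\lambda_{\max}(\widehat{\A})]$, which is exactly what is needed for the bound above. All other ingredients (norm conversions, bounding $\|\r\|_2$ across iterations by $\mathrm{poly}(\kappa_\M)\|\b\|_2$, and choosing $c$) are routine once this core robustness statement is in place.
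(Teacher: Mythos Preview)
Your proposal is correct and follows the same high-level strategy as the paper: reduce inexact preconditioned Lanczos on $(\A,\M)$ to perturbed unpreconditioned Lanczos on $\widehat{\A}=\M^{-1/2}\A\M^{-1/2}$, then invoke the Paige/Greenbaum/Musco--Musco--Sidford finite-precision machinery. The paper executes the reduction differently, however, and the difference is worth knowing. Rather than bounding the propagated perturbation analytically and re-entering the finite-precision analysis (the step you correctly flag as ``notoriously delicate''), the paper writes down an auxiliary symmetric-preconditioned Lanczos iteration that is literally unpreconditioned Lanczos applied to $\widehat{\A}$, and then proves by a line-by-line induction that Algorithm~\ref{alg:lanczos_left_precon} run with inexact $\mathsf{SolveM}$ produces \emph{identical} scalars $\alpha_i,\beta_i$ and the \emph{exact} relation $\overline{\q_i}=\M^{-1/2}\q_i$ as this auxiliary iteration run with some adversarially chosen roundoff errors of relative size $O(\epsilon_0)$, on a slightly perturbed right-hand side $\b'$. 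Since Paige's theorem already tolerates arbitrary roundoff satisfying such a relative bound, the conclusion then follows as a black box, with a final triangle-inequality step to pass from $\b'$ back to $\b$. This buys a genuine shortcut: one never has to reopen the perturbed-Lanczos analysis or track loss of orthogonality directly, whereas your route would require carrying out (or carefully adapting) that argument for this specific error model.
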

The construction of preconditioner $\M$ is based on Nystr\"om approximation. Let $\C = \A\S^\top$ and $\W = \S\A\S^\top$ where $\S \in \R^{s \times n}$ is some sketching matrix. Then the Nystr\"om approximation of $\A$ can be expressed as $\hat{\A}_{\nys} = \C\W^{-1}\C^\top$. We show that if we choose $\S$ to be a sparse embedding matrix (according to Definition~\ref{def:sparse_embed}) with $s = \tilde{O}(l)$ rows and $\gamma = \tilde{O}(1)$ non-zero entries per column, then we have the approximation guarantee $\|\hat{\A}_{\nys} - \A\| \leq \frac{2}{l} \sum_{i>l} \lambda_i$, which in turn gives $\kappa_{\M} = O(\bar{\kappa}_{l,\lambda} n / l)$ for $\M \coloneqq \hat{\A}_{\nys} + \tilde{\lambda}\I$ where $\tilde{\lambda} \coloneqq \lambda + \frac{2}{l} \sum_{i>l} \lambda_i$, see Lemma~\ref{lem:precondition_sparse_main}.
With the choice of $\M$ we can verify the following holds:
\begin{align*}
\M^{-1} \r = \frac{1}{\tilde{\lambda}} \left(\r - \C(\C^\top\C+\tilde{\lambda}\W)^{-1}\C^\top\r\right).
\end{align*}
To approximate $\M^{-1}\r$ we can first solve the linear system $(\C^\top\C + \tilde{\lambda}\W)\y = \C^\top\r$ and get approximate solution $\hat{\y}$, then use $\hat{\w} \coloneqq \frac{1}{\tilde{\lambda}}(\r - \C\hat{\y})$ as the approximator, see Lemma~\ref{lem:trans_variable}. However, constructing such a linear system takes $O(ns^2)$ which is unaffordable for us. Instead, we introduce a second level, and again use preconditioned Lanczos as the Level-2 solver to solve the linear system $(\C^\top\C+\tilde{\lambda}\W)\y = \C^\top\r$ without explicitly computing $\C^\top\C$. Compared with the first level where we do the preconditioning coarsely, in the second level we obtain an optimal $O(1)$ condition number after preconditioning.

Let $\mathbf{\Phi} \in \R^{\phi \times n}$ be the oblivious subspace embedding matrix from Lemma~\ref{lem:precondition_sparse} with $\phi = \tilde{O}(s)$. We can construct the Level-2 preconditioner as $\M_2 \coloneqq \tilde{\C}^\top\tilde{\C}+\lambda\W$ where $\tilde{\C} = \mathbf{\Phi}\C$ in time $\tilde{O}(ns+s^\omega)$. Notice that this step only needs to be done once. By transforming the error bound (that we get from Level-2 preconditioned Lanczos) from $\hat{\y}$ to $\hat{\w}$ using Lemma~\ref{lem:trans_variable}, we can verify the assumption Eq.\eqref{eq:lanczos_assump} which is required in Level-1 preconditioned Lanczos, thus finishing the proof. Formally, we have the following theorem for solving positive definite linear systems, where note that the condition number $\bar\kappa_{l,\lambda}$ is defined differently in terms of the eigenvalues of $\A$, rather than in terms of the squared singular values. Also note that although we assume $\A\in\mathcal{S}_n^{++}$, indeed we only need $\A\in\mathcal{S}_n^+$ and $\A+\lambda\I\in\mathcal{S}_n^{++}$ (which is satisfied if $\lambda>0$).

\begin{theorem}[\AlgName{}, positive definite]\label{thm:main_psd}
Given $\A\in\mathcal{S}_n^{++}$ with condition number $\kappa$, $\b\in\R^n$ and regularized term $\lambda\geq 0$. Let $\{\lambda_i\}_{i=1}^n$ be the eigenvalues of $\A$ in decreasing order, let $\x^* = (\A+\lambda\I)^{-1}\b$ be the solution of the regularized linear system $(\A+\lambda\I)\x = \b$. For any $\log n < l < n$, we define $\bar{\kappa}_{l,\lambda} \coloneqq \frac{1}{n-l}\sum_{i > l}^n \lambda_i / (\lambda_n+\lambda)$.
Given $\epsilon > 0$ and $\delta<1/8$, with probability at least $1-\delta$, running Algorithm~\ref{alg:msp_psd} with choice $\lambda_0 = \frac{2}{l}\cdot\sum_{i>l} \lambda_i, s = O(l \log(l/\delta)), \gamma=O(\log(l/\delta)), \phi = O(s + \log 1/\delta)$ and $t_{\max} = O(\sqrt{
\bar{\kappa}_{l,\lambda} n/l} \cdot\log (\bar{\kappa}_{l,\lambda}n/\epsilon))$ will output $\tilde{\x}$ such that $\|\tilde{\x} - \x^*\|_{\A+\lambda\I} \leq \epsilon \|\x^*\|_{\A+\lambda\I}$ in time
\begin{align*}
\tilde{O}\left(\nnz(\A)\sqrt{\frac{\bar{\kappa}_{l,\lambda} n}{l}}\log^2(\kappa/\epsilon) + l^{\omega}\right)
\end{align*}
where $\tilde{O}(\cdot)$ hides $\polylog(n/\delta)$ factors.
\end{theorem}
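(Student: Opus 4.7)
The approach is to instantiate Lemma~\ref{lem:stable_lanczos} at two nested levels of preconditioning. At the outer level, I would solve $(\A+\lambda\I)\x = \b$ using the Nystr\"om-based preconditioner $\M := \hat{\A}_{\nys} + \tilde{\lambda}\I$, where $\hat{\A}_{\nys} = \C\W^{-1}\C^\top$ is built from $\C = \A\S^\top$ and $\W = \S\A\S^\top$ with $\S$ the sparse embedding of Definition~\ref{def:sparse_embed} (using $s,\gamma$ as in the theorem), and $\tilde{\lambda} = \lambda + \lambda_0$ with $\lambda_0 = \tfrac{2}{l}\sum_{i>l}\lambda_i$. Using the forthcoming Lemma~\ref{lem:precondition_sparse_main}, which guarantees $\|\hat{\A}_{\nys}-\A\|\leq \lambda_0$ with probability at least $1-\delta/2$, together with the definition of $\bar{\kappa}_{l,\lambda}$, I would obtain that $\M^{-1/2}(\A+\lambda\I)\M^{-1/2}$ has condition number $\kappa_{\M} = O(\bar{\kappa}_{l,\lambda}\, n/l)$. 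Lemma~\ref{lem:stable_lanczos} then reduces the outer problem to $t_{\max} = O(\sqrt{\kappa_{\M}}\log(\kappa_{\M}/\epsilon))$ iterations, each one requiring an approximation $\mathsf{SolveM}(\r)$ of $\M^{-1}\r$ satisfying Eq.\eqref{eq:lanczos_assump} with tolerance $\epsilon_0 = \poly(\epsilon/(\kappa_{\M} n))$.

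For each outer call I would use the Woodbury-style identity
\[
\M^{-1}\r = \tilde{\lambda}^{-1}\bigl(\r - \C(\C^\top\C + \tilde{\lambda}\W)^{-1}\C^\top\r\bigr),
\]
which reduces applying $\M^{-1}$ to solving the inner linear system $\Q\y = \C^\top\r$ with $\Q := \C^\top\C + \tilde{\lambda}\W$. Explicit formation of $\Q$ would cost $\Theta(ns^2)$, so instead I would apply Lemma~\ref{lem:stable_lanczos} a second time, now with the inner preconditioner $\M_2 := \tilde{\C}^\top\tilde{\C} + \tilde{\lambda}\W$, where $\tilde{\C} = \mathbf{\Phi}\C$ and $\mathbf{\Phi}$ is the oblivious subspace embedding of Lemma~\ref{lem:precondition_sparse}. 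The embedding property gives $\tilde{\C}^\top\tilde{\C} \approx_{1+\epsilon'}\C^\top\C$ for a fixed small $\epsilon'$, hence $\M_2 \approx_{1+\epsilon'}\Q$, so the inner preconditioned operator has constant condition number and inner Lanczos converges in $\tilde{O}(1)$ iterations even when targeting tolerance $\poly(\epsilon/(\kappa n))$. Since $\M_2$ is only $s\times s$, it can be factored once in $\tilde{O}(l^\omega)$ preprocessing time and then applied in $O(s^2)$ per call, so each inner iteration costs $O(ns)$, dominated by one matvec with $\C$ and one with $\C^\top$.

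From the inner approximate solution $\hat{\y}$ I would form $\hat{\w} := \tilde{\lambda}^{-1}(\r - \C\hat{\y})$ and return it to the outer loop. To verify Eq.\eqref{eq:lanczos_assump}, I would invoke Lemma~\ref{lem:trans_variable} to translate the inner $\M_2$-norm bound on $\hat{\y}-\Q^{-1}\C^\top\r$ into the required $\M$-norm bound on $\hat{\w}-\M^{-1}\r$; the subspace embedding distortion contributes only a constant overhead. A union bound over the randomness of $\S$, $\mathbf{\Phi}$, and the outer and inner stable Lanczos analyses yields total success probability $1-\delta$ under the stated parameter choices. Tallying the cost: preprocessing is $\tilde{O}(\nnz(\A) + ns + l^\omega)$ (forming $\C$ takes $\tilde{O}(\nnz(\A))$ using column sparsity $\gamma$, forming $\W$ and $\tilde{\C}$ takes $\tilde{O}(ns)$, and factoring $\M_2$ takes $\tilde{O}(l^\omega)$); each outer iteration costs $O(\nnz(\A))$ for a matvec with $\A+\lambda\I$ plus $\tilde{O}(ns)$ for the inner solve. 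Multiplying by $t_{\max}$ and using $s = \tilde{O}(l)$, together with $ns \lesssim \nnz(\A)$ in the dense regime of interest, gives the claimed runtime $\tilde{O}(\nnz(\A)\sqrt{\bar{\kappa}_{l,\lambda} n/l}\,\log^2(\kappa/\epsilon) + l^\omega)$.

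The principal obstacle is the two-level error propagation: the outer Lanczos tolerates only $\poly(\epsilon/(\kappa_{\M} n))$ relative error, measured in the $\M$-norm, on each application of $\M^{-1}$, whereas the inner solver naturally produces a bound on $\|\hat{\y}-\Q^{-1}\C^\top\r\|_{\M_2}$. Lemma~\ref{lem:trans_variable} is meant to bridge this gap via the Woodbury identity and the constant-factor equivalence $\M_2\approx\Q$, but verifying that the polylog cost of hitting the tight inner tolerance, together with the constant-factor condition number inflation introduced by $\mathbf{\Phi}$, is fully absorbed by $\tilde{O}(\cdot)$ and does not harmfully multiply $t_{\max}$ is the technical crux; everything else reduces to the bookkeeping above.
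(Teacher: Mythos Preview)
Your proposal is correct and follows essentially the same two-level preconditioning structure as the paper: outer Lanczos with the Nystr\"om preconditioner $\M=\hat{\A}_{\nys}+\tilde\lambda\I$ analyzed via Lemma~\ref{lem:precondition_sparse_main}, inner Lanczos on $\Q=\C^\top\C+\tilde\lambda\W$ preconditioned by $\M_2=\tilde\C^\top\tilde\C+\tilde\lambda\W$, and Lemma~\ref{lem:trans_variable} to propagate the inner error back to the $\M$-norm bound required by Lemma~\ref{lem:stable_lanczos}.

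One small point worth tightening: you cost each inner matvec with $\Q$ at $O(ns)$ via explicit products with $\C$ and $\C^\top$, and then appeal to ``$ns\lesssim\nnz(\A)$ in the dense regime.'' The theorem is stated with an $\nnz(\A)$ dependence, so for sparse $\A$ this would not suffice. The paper handles this by never using $\C$ directly in the inner matvec: since $\C^\top\C+\tilde\lambda\W=\S\A^2\S^\top+\tilde\lambda\S\A\S^\top$, one computes $(\C^\top\C+\tilde\lambda\W)\x$ as $(\S\A+\tilde\lambda\S)\bigl(\A(\S^\top\x)\bigr)$, which costs $O(\nnz(\A)+n\gamma)=\tilde O(\nnz(\A))$ regardless of the density of $\C$. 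With that adjustment your accounting matches the claimed bound exactly.
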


\begin{algorithm}[!ht]
\caption{\AlgName{} for solving regularized positive definite linear system $(\A+\lambda\I)\x = \b$.}
\label{alg:msp_psd}
\begin{algorithmic}[1]
\State \textbf{Input: }matrix $\A\in\mathcal{S}_n^{++}$, regularizer terms $\lambda$ and $\lambda_0$, vector $\b \in \R^n$, sparse sketch size $s$, \# of non-zeros $\gamma$, level-2 sketch size $\phi$, \# of iterations $t_{\max}$;
\State Construct sparse embedding matrix $\S\in\R^{s\times n}$ with $\gamma$ non-zeros per column; \Comment{Def.~\ref{def:sparse_embed}.}
\State Compute $\tilde{\lambda} = \lambda+\lambda_0$, $\C = \A\S^\top$, $\W = \S\C$;
\State Compute $\tilde{\C} = \mathbf{\Phi}\C \in \R^{\phi \times s}$ and $\M_2^{-1} = (\tilde{\C}^\top\tilde{\C} + \tilde{\lambda}\W)^{-1}$; \Comment{Lemma~\ref{lem:precondition_sparse}.}
\State Solve $\tilde{\x}$ by calling preconditioned Lanczos with $(\A+\lambda\I, \b, \mathsf{SolveM1}, t_{\max})$; \Comment{Alg.~\ref{alg:lanczos_left_precon}.} \\
\Return $\tilde{\x}$; \Comment{Solves $(\A+\lambda\I)\x = \b$.}
\end{algorithmic}
\end{algorithm}

\begin{algorithm}[!ht]
\caption{Level-1 auxiliary function $\mathsf{SolveM1}$ for solving $\M\w=\r$ (positive definite case).}
\label{alg:msp_psd_func}
\begin{algorithmic}[1]
\State \textbf{function} $\mathsf{SolveM1}(\r)$:
\State \quad Solve $\hat{\y}$ by preconditioned Lanczos with $(\C^\top\C+\tilde{\lambda}\W, \C^\top\r, \M_2^{-1}, O(\log \kappa/\epsilon_0))$; \Comment{Alg.~\ref{alg:lanczos_left_precon}.}
\State \quad Compute $\hat{\w} = \frac{1}{\tilde{\lambda}}(\r - \C \hat{\y})$;
\\
\Return $\hat{\w}$; \Comment{Solves $\M\w = \r$ for $\M = \C\W^{-1}\C^\top + \tilde{\lambda}\I$.}
\end{algorithmic}
\end{algorithm}

\subsection{Proof of Theorem~\ref{thm:main_psd}}

We give the proof in the following four parts.

\paragraph{Part 1: Nystr\"om Preconditioner with Sparse Embedding.}
Let $\S\in\R^{s\times n}$ be the sparse embedding matrix according to Definition~\ref{def:sparse_embed} with $s = O(l\log(l/\delta))$ and $\gamma = O(\log(l/\delta))$. Denote $\C\coloneqq \A\S^\top$ and $\W \coloneqq \S\A\S^\top$, then we have $\hat{\A}_{\nys} = \C\W^{-1}\C^\top$ and $\M = \hat{\A}_{\nys}+\tilde{\lambda}\I$ where $\tilde{\lambda} = \lambda+\lambda_0$. According to Lemma~\ref{lem:precondition_sparse_main}, with probability at least $1-\delta/2$ we have $\kappa_{\M} \leq \frac{C\bar{\kappa}_{l,\lambda} n}{l}$ for some $C=O(1)$. For the proof of Lemma~\ref{lem:precondition_sparse_main} and a detailed discussion of sparse embedding matrices, see Section~\ref{sec:sparse_embed}.

\begin{lemma}[Preconditioner based on sparse embedding]\label{lem:precondition_sparse_main}
Given positive definite matrix $\A\in\R^{n\times n}$ and $\delta \in (0,1/2)$. For $\log n < l < n$, let $\hat{\A}_{\nys} = \A\S^\top(\S\A\S^\top)^{-1}\S\A$ be the Nystr\"om approximation of $\A$, where $\S\in\R^{s\times n}$ is the sparse embedding matrix with $s = O(l\log(l/\delta))$ and $\gamma = O(\log(l/\delta))$. Given $\lambda\geq 0$, let $\M = \hat{\A}_{\nys} + (\lambda+\lambda_0)\I$ where $\lambda_0 = \frac{2}{l}\sum_{i\geq l+1}\lambda_i$, then with probability $1-\delta$ the condition number $\kappa_{\M}$ of the matrix $\M^{-1/2}\A\M^{-1/2}$ satisfies:
\begin{align*}
\kappa_{\M} \leq \frac{C\bar{\kappa}_{l,\lambda} n}{l} ~~~\text{for some}~~~ C = O(1).
\end{align*}
\end{lemma}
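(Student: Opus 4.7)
\medskip
\noindent
\textbf{Proof Proposal for Lemma~\ref{lem:precondition_sparse_main}.}
The plan is to split the argument into a clean algebraic step that reduces the condition-number bound to a \emph{spectral-norm} low-rank approximation guarantee, followed by a more delicate probabilistic step establishing that guarantee for the sparse Nystr\"om sketch. The algebraic reduction will rely on the well-known fact that the Nystr\"om approximation is always a PSD lower bound, $\hat{\A}_{\nys} \preceq \A$. From this, the upper sandwich
\[
\M \;=\; \hat{\A}_{\nys} + (\lambda+\lambda_0)\I \;\preceq\; \A + (\lambda+\lambda_0)\I \;\preceq\; \Bigl(1 + \tfrac{\lambda_0}{\lambda_n+\lambda}\Bigr)(\A + \lambda\I)
\]
is immediate, and a matching lower sandwich $\M \succeq \A + \lambda\I$ follows the moment I can certify $\|\A - \hat{\A}_{\nys}\| \leq \lambda_0$, since then $\hat{\A}_{\nys} \succeq \A - \lambda_0 \I$. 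Combined, these sandwiches give $\kappa_{\M} \leq 1 + \tfrac{\lambda_0}{\lambda_n+\lambda} = 1 + \tfrac{2(n-l)}{l}\bar{\kappa}_{l,\lambda} = O\bigl(\tfrac{n}{l}\bar{\kappa}_{l,\lambda}\bigr)$, which is the target bound.

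Everything therefore reduces to showing
\[
\|\A - \hat{\A}_{\nys}\| \;\leq\; \tfrac{2}{l}\sum_{i>l}\lambda_i \;=\; \lambda_0
\]
with probability at least $1 - \delta$, where $\S$ is sparse with $s = O(l\log(l/\delta))$ rows and $\gamma = O(\log(l/\delta))$ nonzeros per column. I would proceed through the standard structural reformulation: writing $\hat{\A}_{\nys} = \A^{1/2}\P\A^{1/2}$ where $\P$ is the orthogonal projector onto the column space of $\A^{1/2}\S^\top$, we get $\|\A - \hat{\A}_{\nys}\| = \|(\I-\P)\A^{1/2}\|^2$. Using the rotational invariance of spectral norm and splitting the eigendecomposition of $\A$ into the top-$l$ block $(\U_1,\bm{\Lambda}_1)$ and the tail $(\U_2,\bm{\Lambda}_2)$, the proof reduces to the standard two-ingredient recipe: \textbf{(i)} an approximate subspace embedding property, showing that $\S\U_1$ has all singular values bounded between $1/2$ and $3/2$ (say), and \textbf{(ii)} an approximate matrix multiplication property, bounding the spectral norm of the cross term $(\S\U_1)^\dagger(\S\U_2)\bm{\Lambda}_2$ in terms of $\tfrac{1}{l}\sum_{i>l}\lambda_i$.

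Step~(i) is the classical sparse OSE guarantee for an $l$-dimensional subspace, which goes through under our choice of parameters via the moment bound of Cohen--Nelson--Woodruff/Clarkson--Woodruff. Step~(ii), which I expect to be the \emph{main obstacle}, is where the spectral-norm control (as opposed to the easier Frobenius norm bound) must be extracted; for this I would invoke a moment-based approximate matrix multiplication estimate for sparse sketches applied to the tail matrix $\bm{\Lambda}_2^{1/2}\U_2^\top$, observing that the relevant stable rank is $\sum_{i>l}\lambda_i/\lambda_{l+1}$, which is bounded by $l$ if we split at the right place, and can be absorbed into the $1/l$ prefactor. Combining (i) and (ii) via the usual Nystr\"om structural inequality (e.g.\ in the spirit of Frangella--Tropp--Udell) yields $\|\A - \hat{\A}_{\nys}\| \leq \tfrac{2}{l}\sum_{i>l}\lambda_i$ with probability $1-\delta/2$.

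Finally, I would combine the two parts: the high-probability spectral bound from the probabilistic step feeds into the algebraic sandwich, giving the stated $O(\bar{\kappa}_{l,\lambda} n/l)$ condition number with probability at least $1-\delta$. A minor care-taking item is to verify the intended meaning of $\kappa_{\M}$ in the lemma statement is the condition number of $\M^{-1/2}(\A+\lambda\I)\M^{-1/2}$ (i.e.\ the operator being preconditioned), consistent with how $\M$ is used in Algorithm~\ref{alg:msp_psd}; the argument above produces exactly that quantity. The choices $s = O(l\log(l/\delta))$ and $\gamma = O(\log(l/\delta))$ are precisely what is needed to make both the OSE and matrix-multiplication concentration bounds hold at the $l$-dimensional scale with failure probability $\delta$.
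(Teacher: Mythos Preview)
Your overall strategy matches the paper's: reduce the condition-number bound to a spectral-norm Nystr\"om error bound via the PSD sandwich $\A+\lambda\I \preceq \M \preceq (1+\lambda_0/(\lambda_n+\lambda))(\A+\lambda\I)$, and then establish $\|\A-\hat\A_{\nys}\|\leq\lambda_0$ using the OSE-moments/AMM machinery of Cohen--Nelson--Woodruff for the sparse sketch. The paper executes step~(ii) by invoking their Theorem~8 black-box (our Lemma~\ref{lem:sparse_embed}) applied to $\Z=\A^{1/2}$, then uses the trivial projection inequality $\|\Z-\Z\P_\S\|\leq\|\Z-\tilde\Z_l\|$; you outline the same two ingredients directly via the Frangella--Tropp--Udell structural inequality. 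So the approaches are essentially the same.

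There is, however, one genuine gap in your step~(ii). Any of the structural routes you name---CNW's spectral low-rank bound or the deterministic Nystr\"om structural inequality---produces an error of the form
\[
\|\A-\hat\A_{\nys}\|\;\leq\; c_1\,\lambda_{l+1}\;+\;\frac{c_2}{l}\sum_{i>l}\lambda_i,
\]
with an unavoidable leading $\lambda_{l+1}$ term (indeed, the best rank-$l$ approximation already has error $\lambda_{l+1}$). This term is \emph{not} dominated by $\frac{1}{l}\sum_{i>l}\lambda_i$ in general: take $\lambda_{l+1}$ large and $\lambda_{l+2}=\cdots=\lambda_n=0$. Your sentence ``the relevant stable rank is $\sum_{i>l}\lambda_i/\lambda_{l+1}$, which is bounded by $l$ if we split at the right place'' does not resolve this---the AMM bound with $k=l$ holds regardless of the tail's stable rank, but it still leaves the $\lambda_{l+1}$ contribution intact. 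The paper addresses this with a doubling trick that you are missing: carry out the low-rank analysis at index $2l$ (the sketch size $s=O(l\log(l/\delta))$ absorbs the factor~$2$), which yields a leading term $2\lambda_{2l+1}$, and then use the averaging inequality
\[
2\lambda_{2l+1}\;\leq\;\frac{2}{l}\sum_{i=l+1}^{2l}\lambda_i\;\leq\;\frac{2}{l}\sum_{i>l}\lambda_i
\]
to absorb it into $\lambda_0$. Once you insert this step, your argument goes through and matches the paper's.
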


\paragraph{Part 2: Level-1 Preconditioning.}
For the first level we use preconditioned Lanczos (Algorithm~\ref{alg:lanczos_left_precon}) to solve $(\A+\lambda\I)\x= \b$ with preconditioner $\M$, where we need to compute $\M^{-1} \r$ in each iteration. The following lemma guarantees that by using inverse formula, we can approximate $\M^{-1}\r$ well enough by solving a linear system with matrix $\C^\top\C+\tilde{\lambda}\W$ to a certain accuracy. For the proof of Lemma~\ref{lem:trans_variable}, see Section~\ref{sec:trans_variable}.
\begin{lemma}\label{lem:trans_variable}
Given matrix $\A\in\mathcal{S}_n^{++}$, let $\hat{\A}_{\nys} = \C\W^{-1}\C^\top$ be its Nystr\"om approximation where $\C = \A\S^\top$ and $\W = \S\A\S^\top$. For $\tilde{\lambda}>0$, denote $\M = \hat{\A}_{\nys} + \tilde{\lambda}\I$ as the Nystr\"om preconditioner and assume that $\M \approx_2 \A+\tilde{\lambda}\I$. Given vector $\r\in\R^n$ and $\epsilon_1>0$, suppose we can solve the linear system $(\C^\top\C+\tilde{\lambda}\W)\y = \C^\top\r$ and compute $\hat{\y}$ such that
\begin{align*}
\|\hat{\y} - \y^*\|_{\C^\top\C+\tilde{\lambda}\W} \leq \epsilon_1 \cdot \|\y^*\|_{\C^\top\C+\tilde{\lambda}\W} ~~~\text{where}~~~\y^* \coloneqq (\C^\top\C+\tilde{\lambda}\W)^{-1}\C^\top\r.
\end{align*}
Then, $\hat{\w} = \frac{1}{\tilde{\lambda}}(\r-\C\hat{\y})$ is an approximate solution of the linear system $\M\w = \r$ satisfying
\begin{align}\label{eq:y_to_z}
\|\hat{\w} - \M^{-1}\r\|_{\M} \leq c_0\epsilon_1\kappa^{3/2} \|\M^{-1}\r\|_{\M} ~~~\text{for some}~~~c_0=O(1).
\end{align}
\end{lemma}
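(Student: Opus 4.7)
The plan is to reduce $\|\hat\w - \M^{-1}\r\|_{\M}$ to a quadratic form in $\e \coloneqq \hat\y - \y^*$ and then control it using the two ingredients at hand: the semidefinite comparison $\M \approx_{2} \A + \tilde\lambda\I$ and an explicit Woodbury form for $\M^{-1}$. Concretely, applying Woodbury to $\M = \tilde\lambda\I + \C\W^{-1}\C^{\top}$ gives
\[
\M^{-1} \;=\; \frac{1}{\tilde\lambda}\bigl(\I - \C\,(\C^{\top}\C + \tilde\lambda\W)^{-1}\C^{\top}\bigr),
\]
so that $\M^{-1}\r = \tilde\lambda^{-1}(\r - \C\y^*)$. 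Combining with $\hat\w = \tilde\lambda^{-1}(\r - \C\hat\y)$ produces the clean error identity $\hat\w - \M^{-1}\r = -\tilde\lambda^{-1}\C\e$, and hence
\[
\|\hat\w - \M^{-1}\r\|_{\M}^{2} \;=\; \frac{1}{\tilde\lambda^{2}}\,\e^{\top}\C^{\top}\M\C\e.
\]

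Next I would show $\C^{\top}\M\C \preceq 2\|\A\|\,\B$, where $\B \coloneqq \C^{\top}\C + \tilde\lambda\W$. The hypothesis $\M \preceq 2(\A + \tilde\lambda\I)$ reduces this to bounding $\C^{\top}(\A+\tilde\lambda\I)\C$. Using $\C = \A\S^{\top}$ and $\W = \S\A\S^{\top}$, one computes
\[
\C^{\top}(\A+\tilde\lambda\I)\C \;=\; \S\A^{2}(\A+\tilde\lambda\I)\S^{\top}, \qquad \B \;=\; \S\A(\A+\tilde\lambda\I)\S^{\top}.
\]
Since $\A$ and $\A(\A+\tilde\lambda\I)$ are simultaneously diagonalizable (both are polynomials in $\A$), the pointwise inequality $\lambda_{i}^{2}(\lambda_{i}+\tilde\lambda)\leq \|\A\|\lambda_{i}(\lambda_{i}+\tilde\lambda)$ upgrades to $\A^{2}(\A+\tilde\lambda\I)\preceq \|\A\|\,\A(\A+\tilde\lambda\I)$. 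Sandwiching by $\S,\S^{\top}$ then yields $\C^{\top}(\A+\tilde\lambda\I)\C \preceq \|\A\|\,\B$, and combining with $\M \preceq 2(\A+\tilde\lambda\I)$ gives $\C^{\top}\M\C \preceq 2\|\A\|\,\B$. Together with the hypothesis $\|\e\|_{\B}\leq \epsilon_{1}\|\y^{*}\|_{\B}$, this produces the intermediate bound $\|\hat\w - \M^{-1}\r\|_{\M}^{2}\leq \tfrac{2\|\A\|}{\tilde\lambda^{2}}\epsilon_{1}^{2}\|\y^{*}\|_{\B}^{2}$.

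The final step converts $\|\y^{*}\|_{\B}^{2}$ into a multiple of $\|\M^{-1}\r\|_{\M}^{2}$. Reusing the Woodbury expression yields the identity $\r^{\top}\M^{-1}\r = \tilde\lambda^{-1}(\|\r\|^{2} - \|\y^{*}\|_{\B}^{2})$, so in particular $\|\y^{*}\|_{\B}^{2}\leq \|\r\|^{2}$. On the other hand, $\M \preceq 2(\|\A\|+\tilde\lambda)\I$ implies $\|\M^{-1}\r\|_{\M}^{2} = \r^{\top}\M^{-1}\r \geq \|\r\|^{2}/\bigl(2(\|\A\|+\tilde\lambda)\bigr)$, hence $\|\y^{*}\|_{\B}^{2}\leq 2(\|\A\|+\tilde\lambda)\|\M^{-1}\r\|_{\M}^{2}$. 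Substituting back yields
\[
\|\hat\w - \M^{-1}\r\|_{\M} \;\leq\; \frac{2\sqrt{\|\A\|(\|\A\|+\tilde\lambda)}}{\tilde\lambda}\,\epsilon_{1}\,\|\M^{-1}\r\|_{\M},
\]
and the prefactor is $O(\kappa^{3/2})$ (in fact $O(\kappa)$ suffices) once one uses $\tilde\lambda \geq \lambda_{0} = \tfrac{2}{l}\sum_{i>l}\lambda_{i} \geq \Omega(\lambda_{n}(\A))$ to compare $\|\A\|/\tilde\lambda$ against $\kappa$.

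The main obstacle I expect is the second step: turning the outer semidefinite comparison of $\M$ with $\A+\tilde\lambda\I$ into a bound on the inner quadratic form $\e^{\top}\C^{\top}\M\C\e$ relative to $\|\e\|_{\B}^{2}$. A naive approach that bounds $\C^{\top}\C$ by $\|\C\|^{2}\I$ loses an additional factor of $\|\A\|/\tilde\lambda$; exploiting the fact that $\C^{\top}\C$ and $\W$ are both polynomials in $\A$ (so simultaneously diagonalizable) is what limits the blow-up to a single $\|\A\|$ factor and lets the target $\kappa^{3/2}$ bound go through.
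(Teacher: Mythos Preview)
Your argument is correct and shares the paper's core ingredients: the Woodbury identity $\M^{-1}=\tilde\lambda^{-1}(\I-\C\B^{-1}\C^\top)$ with $\B=\C^\top\C+\tilde\lambda\W$, the error identity $\hat\w-\M^{-1}\r=-\tilde\lambda^{-1}\C\e$, and the reduction of $\C^\top\M\C$ to a multiple of $\B$ via $\M\preceq 2(\A+\tilde\lambda\I)$ and the commutativity of polynomials in $\A$. The paper carries out this first half with norm manipulations through the $\S\A(\A+\tilde\lambda\I)\A\S^\top$-norm rather than PSD orderings, but the content is identical.

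The genuine difference is in the last step, where you must control $\|\y^*\|_{\B}$ by $\|\M^{-1}\r\|_{\M}$. The paper first reverses the comparison, passing from $\|\y^*\|_{\B}$ back up to $\|\C\y^*\|_{\M}$ at the cost of a factor $\sqrt{\kappa}$, and then bounds $\|\C\y^*\|_{\M}$ by $\|\r-\C\y^*\|_{\M}+\|\r\|_{\M}$ together with $\|\r\|_{\M}\le(\|\M\|/\tilde\lambda)\|\r-\C\y^*\|_{\M}$. Your route is shorter and sharper: from the Woodbury formula you read off $\r^\top\M^{-1}\r=\tilde\lambda^{-1}(\|\r\|^2-\|\y^*\|_{\B}^2)\ge 0$, so $\|\y^*\|_{\B}^2\le\|\r\|^2\le\|\M\|\,\|\M^{-1}\r\|_{\M}^2$. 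This saves the extra $\sqrt{\kappa}$, which is why you end up with an $O(\kappa)$ prefactor whereas the paper obtains $O(\kappa^{3/2})$; both of course suffice for the stated lemma. One small caveat that applies equally to both proofs: the final conversion of $\|\A\|/\tilde\lambda$ (or $(\|\A\|+\tilde\lambda)/\tilde\lambda$) into a power of $\kappa$ uses $\tilde\lambda=\Omega(\lambda_n(\A))$, which is not part of the lemma's hypotheses but holds in the paper's application since $\tilde\lambda\ge\lambda_0=\tfrac{2}{l}\sum_{i>l}\lambda_i$; you correctly flag this.
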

\noindent
According to Eq.\eqref{eq:M_inverse} in the proof of Lemma~\ref{lem:trans_variable}, we have the following holds:
\begin{align*}
\M^{-1} = \frac{1}{\tilde{\lambda}} \left(\I - \C(\C^\top\C+\tilde{\lambda}\W)^{-1}\C^\top\right).
\end{align*}
By using this inverse formula, we have $\w^* \coloneqq \M^{-1}\r = \frac{1}{\tilde{\lambda}}(\r - \C\y^*)$. Instead of directly computing $\y^*$, suppose we can solve the linear system $(\C^\top\C+\tilde{\lambda}\W)\y = \C^\top\r$ approximately and have the guarantee (which will be proved in Level-2 analysis later) that $\|\hat{\y} - \y^*\|_{\C^\top\C+\tilde{\lambda}\W} \leq \epsilon_1\cdot \|\y^*\|_{\C^\top\C+\tilde{\lambda}\W}$, then Lemma~\ref{lem:trans_variable} guarantees that Eq.\eqref{eq:y_to_z} holds as long as $\M \approx_2 \A + \tilde{\lambda}\I$. The assumption here can be easily verified: for one side we have
\begin{align*}
\M = \hat{\A}_{\nys} + \tilde{\lambda}\I \preceq \A + \tilde{\lambda}\I \prec 2(\A + \tilde{\lambda}\I)
\end{align*}
and for the other side, using that $\|\hat{\A}_{\nys} - \A\| \leq \lambda_0$ we have
\begin{align*}
2\M - (\A + \tilde{\lambda}\I) = 2\hat{\A}_{\nys} - \A + \tilde{\lambda}\I  \succeq \hat{\A}_{\nys} - \A + \lambda_0 \I \succeq \mathbf{0}
\end{align*}
which together give $\M \approx_2 \A+\tilde{\lambda}\I$. Thus we can use Lemma~\ref{lem:trans_variable} and have Eq.\eqref{eq:y_to_z} holds.
Now given any $\epsilon>0$, if in Eq.\eqref{eq:y_to_z} we set $\epsilon_1 = O(\epsilon_0/\kappa^{3/2})$ for $\epsilon_0 = O(\frac{1}{(\kappa_{\M}\cdot n/\epsilon)^c})$, then by Lemma~\ref{lem:stable_lanczos}, after $t_{\max} = O(\sqrt{\kappa_{\M}} \log (\kappa_{\M} /\epsilon))$ iterations, preconditioned Lanczos outputs $\tilde{\x}$ such that
\begin{align*}
\|\tilde{\x} - \x^*\|_{\A+\lambda\I} \leq \epsilon  \cdot \|\x^*\|_{\A+\lambda\I}.
\end{align*}
Thus, we obtain the final convergence result. Next, we analyze the second level.

\paragraph{Part 3: Level-2 Preconditioning.}
For the second level we use preconditioned Lanczos to solve the $s\times s$ linear system $(\C^\top\C+\tilde{\lambda}\W)\y = \C^\top\r$.  As for the preconditioner we choose $\M_2 = \tilde{\C}^\top\tilde{\C} + \tilde{\lambda}\W = \C^\top\mathbf{\Phi}^\top\mathbf{\Phi}\C + \tilde{\lambda}\W$, where $\mathbf{\Phi} \in \R^{\phi\times n}$ is a sparse sketching matrix constructed using standard subspace embedding techniques \cite{chenakkod2023optimal} (see Lemma~\ref{lem:precondition_sparse}) with choice $\phi = O(s+\log 1/\delta)$. By Lemma~\ref{lem:precondition_sparse}, with probability $1-\delta/2$ we have $\kappa_{\M_2} \coloneqq \kappa(\M_2^{-1/2}(\C^\top\C+\tilde{\lambda}\W)\M_2^{-1/2}) = O(1)$. Since in Algorithm~\ref{alg:msp_psd} we pre-compute $\M_2^{-1}$ exactly, the error caused by the preconditioner is $0$. By using \Cref{thm:stable_lanczos} we know that after $O(\sqrt{\kappa_{\M_2}} \log (\kappa_{\M_2}/\epsilon_1)) = O(\log 1 /\epsilon_1) = O(\log (\kappa / \epsilon_0))$ iterations, preconditioned Lanczos will output $\hat{\y}$ that satisfies
\begin{align*}
\|\hat{\y} - \y^*\|_{\C^\top\C+\tilde{\lambda}\W} \leq \epsilon_1\cdot \|\y^*\|_{\C^\top\C+\tilde{\lambda}\W}
\end{align*}
which is exactly the assumption we need in Level-1. By applying a union bound we finish the proof.

\paragraph{Part 4: Total Cost.}
Finally we consider the total cost of Algorithm~\ref{alg:msp_psd}. First we pre-compute $\C = \A\S^\top$ and $\W = \S\C$ which takes $O(\gamma\cdot\nnz(\A) + \gamma\cdot\nnz(\C))$ time. Notice that $\gamma = O(\log(l/\delta))$ and $\nnz(\C) \leq \gamma\cdot \nnz(\A)$, thus this step takes $O(\nnz(\A) \log^2(l/\delta))$ time. Next, we pre-compute $\tilde{\C} = \mathbf{\Phi}\C$ and $\M_2^{-1} = (\tilde{\C}^\top\tilde{\C} + \tilde{\lambda}\W)^{-1}$ using Lemma~\ref{lem:precondition_sparse}, which takes $O(\nnz(\C)\log(s/\delta) + s^2 \log^4 (s/\delta) + s^{\omega}) = O(\nnz(\A) \log^2(l/\delta) + l^{\omega}\log^{\omega}(l/\delta))$. By Theorem~\ref{thm:stable_lanczos} and above analysis, the number of iterations for Level-1 preconditioned Lanczos is $t_{\max} = O(\sqrt{\kappa_{\M}} \log(\kappa_{\M}/\epsilon)) = O(\sqrt{\bar{\kappa}_{l,\lambda} n/l} \log(\bar{\kappa}_{l,\lambda} n/\epsilon))$. 

In Level-2, in each iteration we need to compute matrix-vector products of $\C^\top\C+\tilde{\lambda}\W$. Notice that $(\C^\top\C+\tilde{\lambda}\W)\x = (\S\A + \S)\A\S^\top\x$ for any vector $\x$, thus we can first compute $\S^\top\x$ which takes $O(n \log(l/\delta))$, then compute $\A(\S^\top\x)$ which takes $O(\nnz(\A))$, and finally apply $\S\A$ and $\S$ to this vector which takes $O(n\log(l/\delta) + \nnz(\A))$ time. Since there are $O(\log 1/\epsilon_1) = O(\log (\kappa / \epsilon_0))$ iterations of preconditioned Lanczos for Level-2, we conclude that the total cost is
\begin{align*}
    & ~ O\left(\nnz(\A)\log^2(l/\delta) + l^{\omega}\log^{\omega}(l/\delta) + \nnz(\A)\log(l/\delta)\log(\kappa/\epsilon_0)\cdot \sqrt{\bar{\kappa}_{l,\lambda} n/l}\log(\bar{\kappa}_{l,\lambda}n /\epsilon)\right) \\
    = & ~ O\left(\nnz(\A)\sqrt{\bar{\kappa}_{l,\lambda} n/l} \log^2(l/\delta)\log(\kappa/\epsilon_0)\log(\bar{\kappa}_{l,\lambda}n/\epsilon) + l^{\omega} \log^{\omega}(l/\delta)\right) \\
    = & ~ \tilde{O}\left(\nnz(\A)\sqrt{\bar{\kappa}_{l,\lambda} n / l} \log^2(\kappa/\epsilon) + l^{\omega}\right).
\end{align*}

\subsection{Coarse Nystr\"om Preconditioner Based on Sparse Embedding}
\label{sec:sparse_embed}
In this section we give a proof of Lemma~\ref{lem:precondition_sparse_main}, which measures the quality of Nystr\"om preconditioner based on a sparse embedding matrix. Notice that, compared to the most commonly used guarantee that the preconditioned linear system has a constant condition number, we are using a ``coarse'' Nystr\"om preconditioner which benefits us by reducing the cost of constructing it, while slightly sacrificing its quality and allowing the condition number after preconditioning to be ``not so small''.

To characterize the properties of sparse embedding matrix, we first introduce the following notion of oblivious subspace embedding moments property, which can be seen as a generalization of the ``JL-moment property'' introduced by \cite{kane2014sparser}.
\begin{definition}[Definition 4 in \cite{CohenNelsonWoodruff:2016}]\label{def:ose}
A distribution $\mathcal{D}$ over $\R^{s\times n}$ has $(\eps, \delta, d, \ell)$-OSE moments if for all matrices $\U\in\R^{n\times d}$ with orthonormal columns, we have
\begin{align*}
\E_{\bpi\sim\mathcal{D}} \|(\bpi\U)^\top(\bpi\U) - \I\|^{\ell} < \epsilon^{\ell} \cdot \delta.
\end{align*}
\end{definition}
\cite{cohen2016nearly} showed that a sparse embedding matrix (Definition \ref{def:sparse_embed}) with sketch size $s = O(d\log (d/\delta) /\epsilon^2)$ and $\gamma = O(\log(d/\delta) / \epsilon)$ non-zeros per column satisfies $(\epsilon, \delta, d, \log(d/\delta))$-OSE moments property, as stated in the following lemma.
\begin{lemma}[Theorem 4.2 in \cite{cohen2016nearly}]\label{lem:cohen_bound}
For any $\epsilon, \delta \in (0,1/2)$ and $B>2$, a sparse subspace embedding matrix $\S$ with $s = O(Bd\log (d/\delta) /\epsilon^2)$ and $\gamma = O(\log_B(d/\delta) /\epsilon)$ satisfies
\begin{align*}
\E\,\|(\S\U)^\top(\S\U) - \I\|^{\log(d/\delta)} < \epsilon^{\log(d/\delta)}\cdot\delta.
\end{align*}
\end{lemma}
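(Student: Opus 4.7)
The plan is to bound the operator‑norm moment by a trace moment and then expand the trace as a polynomial in the entries of $\S$ that can be analyzed combinatorially. Let $M = (\S\U)^\top(\S\U) - \I \in \R^{d\times d}$, let $\ell$ be an even integer with $\ell = \Theta(\log(d/\delta))$, and observe that $M$ is symmetric, so $M^{\ell}$ is PSD and $\|M\|^{\ell} = \lambda_{\max}(M^{\ell}) \leq \tr(M^{\ell})$. Thus $\E\|M\|^{\ell} \leq \E\tr(M^{\ell})$, and the whole problem reduces to showing $\E\tr(M^{\ell}) < \epsilon^{\ell}\delta$ for the stated $s$ and $\gamma$.

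To set up the expansion, I would decompose $\S_{ri} = \eta_{ri}\sigma_{ri}/\sqrt{\gamma}$, where $\{\eta_{ri}\}_r$ is the indicator of the $\gamma$ hash locations chosen in column $i$ (so $\sum_r \eta_{ri} = \gamma$ deterministically and different columns are independent) and $\{\sigma_{ri}\}$ are independent Rademacher signs. Because $\sum_r \S_{ri}^2 = 1$ and $\U^\top\U = \I$, the $i=j$ part of $(\S\U)^\top(\S\U)$ is exactly $\I$, so
\begin{equation*}
M_{ab} = \sum_{r,\, i\neq j} \S_{ri}\S_{rj}\,\U_{ia}\U_{jb}.
\end{equation*}
Expanding $\tr(M^{\ell})$ then produces a sum over closed walks $(a_1,a_2,\ldots,a_{\ell},a_1)$ on $[d]$, with each step $k$ contributing an off‑diagonal factor $\sum_{r_k, i_k\neq j_k}\S_{r_k i_k}\S_{r_k j_k}\U_{i_k a_k}\U_{j_k a_{k+1}}$.

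Taking expectation over the signs $\sigma$, only configurations in which every index pair $(r,i)$ appears an even number of times survive, inducing a matching structure on the $\ell$ off‑diagonal factors. I would bound the contribution of each such matching by (i)~collapsing paired $\U$‑factors against $\U^\top\U = \I$ and using $\|\U\|_F^2 = d$, and (ii)~controlling the hash collisions with $\Pr[\eta_{ri}\eta_{rj}=1] \leq \gamma^2/s^2$, and $\Pr[\eta_{ri}\eta_{sj}=1 \text{ for some } r=s] \leq \gamma^2/s$. Summing over the at most $(C\ell)^{\ell/2}$ perfect matchings of $\ell$ factors, the "leading" matchings (those that only require forced hash collisions) contribute at most $(Cd\ell/s)^{\ell/2}$, which by the choice $s = \Theta(Bd\ell/\epsilon^2)$ is $\leq (\epsilon^2/B)^{\ell/2}$.

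The main obstacle is the combinatorial bookkeeping for the \emph{non‑leading} matchings: every extra collision beyond the forced ones must be shown to pick up a multiplicative factor of order $\gamma/s$ or $1/\gamma$, so that the total sum remains dominated by the leading term. This is precisely where the parameter $B$ buys us slack --- each non‑leading edge should lose a factor $O(1/B)$ --- and I would follow the inductive graph‑counting framework of Kane--Nelson as refined by Cohen, parametrizing matchings by their number of "extra" edges and bounding each class of graphs separately. The payoff is that setting $\gamma = \Theta(\log_B(d/\delta)/\epsilon)$ ensures the $1/\gamma$ factors suppress all non‑leading contributions below the leading $(\epsilon^2/B)^{\ell/2}$ term. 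Combining, $\E\tr(M^{\ell}) \leq 2(\epsilon^2/B)^{\ell/2}$, and since $(1/B)^{\ell/2} \leq 2^{-\ell/2} \leq \delta/d \leq \delta$ for $B>2$ and $\ell \geq \log(d/\delta)$, we obtain $\E\tr(M^{\ell}) < \epsilon^{\ell}\delta$, as claimed.
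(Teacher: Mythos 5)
First, note that the paper does not prove this lemma at all --- it is imported verbatim as Theorem 4.2 of \cite{cohen2016nearly} --- so there is no internal argument to compare against; the relevant benchmark is Cohen's proof, which is indeed the trace--moment method you describe. Your setup is correct as far as it goes: $\|M\|^{\ell}\leq\tr(M^{\ell})$ for even $\ell$, the exact cancellation of the diagonal ($i=j$) terms using $\sum_r \S_{ri}^2=1$ and $\U^\top\U=\I$, and the reduction to closed walks whose surviving terms (after averaging the signs) are indexed by even-multiplicity configurations. However, the proposal has a genuine gap: the entire content of the theorem --- the specific trade-off in which $s$ scales as $Bd\log(d/\delta)/\epsilon^2$ while $\gamma$ scales only as $\log_B(d/\delta)/\epsilon$ --- is established precisely by the combinatorial control of the non-leading matchings, and at that point you write that you ``would follow the inductive graph-counting framework of Kane--Nelson as refined by Cohen.'' That is circular: you are deferring the decisive step to the very result being proved. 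Without carrying out that counting (classifying the walk graphs by their number of distinct row indices and excess edges, and showing each excess edge costs a factor $O(\gamma/s)$ or $O(1/\gamma)$ after accounting for the number of graphs in each class), the claimed bound $\E\tr(M^{\ell})\leq 2(\epsilon^2/B)^{\ell/2}$ is asserted, not derived.

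There is also a smaller arithmetic slip at the end. With $\ell=\log(d/\delta)$ you get $B^{-\ell/2}<2^{-\ell/2}=(\delta/d)^{1/2}$, which is \emph{not} in general at most $\delta$ (take $d$ small and $\delta$ moderate). This is fixable --- e.g., take $\ell=2\lceil\log(d/\delta)\rceil$ in the trace bound and pass back to the $\log(d/\delta)$-th moment by the power-mean inequality, or absorb the loss into the constants of $s$ --- but as written the final inequality does not follow. Neither issue suggests the approach is wrong (it is the right approach), but the proposal is a roadmap rather than a proof.
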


\cite{CohenNelsonWoodruff:2016} showed that the aforementioned OSE moment property is a sufficient condition for the approximated matrix multiplication (AMM), see Lemma~\ref{lem:OSE_req}. Moreover, as a corollary they provided a low rank approximation result based on the AMM property, see Lemma~\ref{lem:sparse_embed}.
\begin{lemma}[Theorem 6 in \cite{CohenNelsonWoodruff:2016}]\label{lem:OSE_req} 
Given $k, \epsilon, \delta \in (0, 1/2)$, let $\mathcal{D}$ be any distribution over matrices with $n$ columns with the $(\eps, \delta, 2k, \ell)$-OSE moment property for some $\ell \geq 2$, then for any $\A,\B$ we have
\begin{align*}
\Pr_{\bpi \sim \mathcal{D}}\left(\left\|(\bpi\A)^\top(\bpi\B) - \A^\top\B \right\| \geq \epsilon\sqrt{(\|\A\|^2 + \|\A\|^2_F/k)(\|\B\|^2 + \|\B\|^2_F/k)}\right) < \delta.
\end{align*}
\end{lemma}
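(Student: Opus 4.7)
The plan is to establish the $\ell$-th moment bound
\begin{align*}
\mathbb{E}_{\bpi\sim\mathcal{D}}\|(\bpi\A)^\top(\bpi\B) - \A^\top\B\|^\ell \leq \epsilon^\ell\,\delta\,\bigl(\|\A\|^2 + \|\A\|_F^2/k\bigr)^{\ell/2}\bigl(\|\B\|^2 + \|\B\|_F^2/k\bigr)^{\ell/2},
\end{align*}
from which Markov's inequality (valid under the hypothesis $\ell \geq 2$) yields the desired tail probability. By rescaling $\A$ and $\B$ it suffices to prove the moment bound under the normalization $\|\A\|^2 + \|\A\|_F^2/k = \|\B\|^2 + \|\B\|_F^2/k = 1$, with $\epsilon^\ell \delta$ on the right-hand side.

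The central device is a head--tail splitting via the SVD. Write $\A = \A_h + \A_t$, where $\A_h$ retains the top-$k$ singular directions of $\A$, and similarly $\B = \B_h + \B_t$. Then $\A_h, \B_h$ have rank at most $k$, while by the averaging inequality $\sigma_{k+1}^2(\A) \leq \|\A\|_F^2/k$, the tail operator norms satisfy $\|\A_t\| \leq \|\A\|_F/\sqrt{k}$ and $\|\B_t\| \leq \|\B\|_F/\sqrt{k}$. Expanding bilinearly,
\begin{align*}
(\bpi\A)^\top(\bpi\B) - \A^\top\B = \sum_{i,j\in\{h,t\}}\bigl[(\bpi\A_i)^\top(\bpi\B_j) - \A_i^\top\B_j\bigr],
\end{align*}
and the triangle inequality together with $(a_1+\cdots+a_4)^\ell \leq 4^{\ell-1}\sum_i a_i^\ell$ reduces the task to bounding the $\ell$-th moment of each of the four cross terms.

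For the head--head term, choose a common orthonormal basis $\U\in\R^{n\times r}$ with $r\leq 2k$ of the joint column span of $\A_h, \B_h$, and write $\A_h = \U R_A$ and $\B_h = \U R_B$ with $\|R_A\|\leq\|\A\|$ and $\|R_B\|\leq\|\B\|$. The error becomes $R_A^\top(\U^\top\bpi^\top\bpi\U - \I)R_B$, so the $(\epsilon,\delta,2k,\ell)$-OSE moment hypothesis applied to $\U$ yields an $\ell$-th moment of at most $\epsilon^\ell\delta\cdot \|\A\|^\ell\|\B\|^\ell$. For each mixed term such as $(\bpi\A_h)^\top(\bpi\B_t) - \A_h^\top\B_t$, factor the low-rank head side through its orthonormal basis and absorb the tail side into its spectral-norm bound $\leq \|\cdot\|_F/\sqrt{k}$; the OSE-moment factor is still invoked on an orthonormal basis of dimension $\leq 2k$ (obtained by joining bases for the two head pieces), while the tail supplies the additional $\|\A\|_F/\sqrt{k}$ or $\|\B\|_F/\sqrt{k}$ factor. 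Summing the four contributions assembles exactly the hybrid normalization $(\|\A\|^2 + \|\A\|_F^2/k)(\|\B\|^2 + \|\B\|_F^2/k)$: the head summand $\|\cdot\|^2$ enters through the head--head pairing, while the tail summand $\|\cdot\|_F^2/k$ enters whenever a tail factor appears.

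The main obstacle is the mixed and tail--tail terms, because $\A_t$ and $\B_t$ may have rank far exceeding $k$ and so cannot be embedded in the $2k$-dimensional orthonormal basis on which the OSE moment property operates. The resolution, and the reason for the hybrid normalization in the AMM statement, is to pair OSE-moment control on the low-rank side with the complementary operator-norm bound $\sigma_{k+1}\leq \|\cdot\|_F/\sqrt{k}$ on the high-rank side; the two summands $\|\cdot\|^2$ and $\|\cdot\|_F^2/k$ in the normalization correspond precisely to these two complementary controls. Careful bookkeeping of the four cross terms---ensuring each tail factor pairs correctly with its OSE-moment partner, and that the $4^{\ell-1}$ and other multiplicative constants can be absorbed into $\epsilon$ by adjusting the sparse-embedding parameters via \Cref{lem:cohen_bound}---is the only substantive technical work; once done, Markov's inequality closes out the proof.
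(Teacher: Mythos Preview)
The paper does not prove this lemma---it is quoted as Theorem~6 of Cohen--Nelson--Woodruff and used as a black box---so there is no in-paper argument to compare against. That said, your outline has a real gap in the mixed and tail--tail terms.

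Consider the mixed term $(\bpi\A_h)^\top(\bpi\B_t)-\A_h^\top\B_t$. Factoring $\A_h=\U R_A$ through an orthonormal $\U$ with at most $2k$ columns yields $R_A^\top\,\U^\top(\bpi^\top\bpi-\I)\B_t$. To ``absorb the tail side into its spectral-norm bound'' you would need control of the operator norm $\|\U^\top(\bpi^\top\bpi-\I)\|$ as a map from all of $\R^n$ into $\R^{2k}$. But the $(\epsilon,\delta,2k,\ell)$-OSE moment hypothesis only bounds $\|\U^\top(\bpi^\top\bpi-\I)\U\|$, the error \emph{restricted to the span of $\U$}; it gives no control on how $\bpi^\top\bpi-\I$ acts on directions orthogonal to $\U$, which is precisely where $\B_t$ may lie. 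A pointwise argument---fix a unit $\y$, set $\b=\B_t\y$, and invoke OSE moments on the $(k{+}1)$-dimensional span of $\A_h$ and $\b$---handles each direction separately, but the operator norm is a supremum over all $\y$, and upgrading pointwise to uniform via a net would force $\ell$ to scale with $\rank(\B_t)$, far beyond the hypothesis $\ell\geq 2$. The tail--tail term is worse still: neither factor is low-rank, so your stated resolution (``pair OSE-moment control on the low-rank side with the operator-norm bound on the high-rank side'') has no low-rank side to attach to. The Cohen--Nelson--Woodruff argument requires a further idea beyond a single head/tail split to close this gap; your sketch, as written, does not supply it.
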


\begin{lemma}[Theorem 8 in \cite{CohenNelsonWoodruff:2016}]\label{lem:sparse_embed} 
For matrix $\A\in\R^{n\times d}$, let $\A = \U\mathbf{\Sigma}\V^\top$ be its SVD and $\A_k = \U_k\mathbf{\Sigma}_k \V_k^\top$ be its best rank-$k$ approximation. If $\S\in\R^{s\times n}$ satisfies the following two properties:
\begin{enumerate}
    \item approximate spectral norm matrix multiplication w.r.t. $\U_k$ and $\A_{\bar{k}}\coloneqq \A-\A_k$, that is:
    \begin{align*}
    \|\U_k^\top\S^\top\S\A_{\bar{k}} - \U_k^\top\A_{\bar{k}}\|^2 \leq \frac{\epsilon}{8}\left(\|\U_k\|^2 + \frac{\|\U_k\|_F^2}{k}\right)\left(\|\A_{\bar{k}}\|^2 + \frac{\|\A_{\bar{k}}\|_F^2}{k}\right)
    \end{align*}
    \item $\S$ is a $1/2$-subspace embedding for the column space of $\U_k$,
\end{enumerate}
then, defining $\P_{\S}$ as the orthogonal projection onto the row-space of $\S\A$, and denoting $\tilde{\A}_k$ as the best rank-$k$ approximation of $\tilde{\A} = \A\P_{\S}$, we have:
\begin{align*}
\|\A - \tilde{\A}_k\|^2 \leq (1+\epsilon) \|\A-\A_k\|^2 + (\epsilon/k)\cdot \|\A-\A_k\|_F^2.
\end{align*}
\end{lemma}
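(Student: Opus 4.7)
The plan is to adapt the standard ``affine embedding'' approach for low-rank approximation via sketching. The key step is to construct an explicit rank-$k$ candidate matrix whose row-space lies inside the row-space of $\S\A$ and that approximates $\A$ well. Define
$\Q \coloneqq (\S\U_k)^\dagger\S\A \in \R^{k\times d}$,
and consider the rank-$k$ matrix $\U_k\Q$. By construction, its rows lie in the row-space of $\S\A$, so $(\U_k\Q)\P_\S=\U_k\Q$.

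First, I would bound $\|\A-\U_k\Q\|$ using the two hypotheses. Decomposing $\A=\A_k+\A_{\bar k}$ and using $\U_k^\top\A_{\bar k}=0$, one obtains $\Q=\mathbf{\Sigma}_k\V_k^\top+(\S\U_k)^\dagger\S\A_{\bar k}$, because $(\S\U_k)^\dagger\S\U_k=\I_k$ under the subspace embedding hypothesis. The same hypothesis implies $\U_k^\top\S^\top\S\U_k\succeq\frac{1}{4}\I_k$, so $\|((\S\U_k)^\top\S\U_k)^{-1}\|\leq 4$. Combining with the AMM hypothesis, and using $\|\U_k\|^2+\|\U_k\|_F^2/k=2$ together with $\U_k^\top\A_{\bar k}=0$, gives $\|(\S\U_k)^\dagger\S\A_{\bar k}\|^2= O(\epsilon)\cdot(\|\A-\A_k\|^2+\|\A-\A_k\|_F^2/k)$. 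Hence $\|\A_k-\U_k\Q\|^2=\|\mathbf{\Sigma}_k\V_k^\top-\Q\|^2=O(\epsilon)(\|\A-\A_k\|^2+\|\A-\A_k\|_F^2/k)$. Exploiting column orthogonality between $(\A_k-\U_k\Q)$, whose columns lie in the column span of $\U_k$, and $\A_{\bar k}$, which is orthogonal to $\U_k$, the Gram identity $(\A-\U_k\Q)^\top(\A-\U_k\Q) = (\A_k-\U_k\Q)^\top(\A_k-\U_k\Q) + \A_{\bar k}^\top\A_{\bar k}$ yields $\|\A-\U_k\Q\|^2\leq\|\A_k-\U_k\Q\|^2+\|\A-\A_k\|^2$.

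Second, I would connect $\tilde\A_k$ to the candidate $\U_k\Q$ via a row-space Pythagorean decomposition. Write $\A-\tilde\A_k=\A(\I-\P_\S)+(\A\P_\S-\tilde\A_k)$. Since $\tilde\A_k$ has rows in the row-space of $\A\P_\S\subseteq\mathrm{rowspace}(\S\A)$, we have $\tilde\A_k\P_\S=\tilde\A_k$, so the cross term $\A(\I-\P_\S)(\A\P_\S-\tilde\A_k)^\top=\A(\I-\P_\S)\P_\S\tilde\A_k^\top$ vanishes. Therefore $(\A-\tilde\A_k)(\A-\tilde\A_k)^\top$ equals the sum of the two Gram matrices, giving $\|\A-\tilde\A_k\|^2\leq\|\A-\A\P_\S\|^2+\|\A\P_\S-\tilde\A_k\|^2$. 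For the second summand, comparing $\tilde\A_k$ (the best rank-$k$ approximation of $\A\P_\S$) against the rank-$k$ candidate $\A_k\P_\S$ gives $\|\A\P_\S-\tilde\A_k\|\leq\|(\A-\A_k)\P_\S\|\leq\|\A-\A_k\|$. For the first summand, the same cross-term argument shows that for every $X$ with row-space in $\mathrm{rowspace}(\S\A)$, $\|\A-X\|\geq\|\A-\A\P_\S\|$; choosing $X=\U_k\Q$ and invoking Step 1 bounds $\|\A-\A\P_\S\|^2$ by $\|\A-\U_k\Q\|^2$.

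The main obstacle is sharpening the leading constant in front of $\|\A-\A_k\|^2$. Summing the two bounds above naively yields $\|\A-\tilde\A_k\|^2\leq(2+O(\epsilon))\|\A-\A_k\|^2+O(\epsilon/k)\|\A-\A_k\|_F^2$, which is weaker than the claimed $(1+\epsilon)$ because both summands independently contribute an additive $\|\A-\A_k\|^2$. To close the gap, one must avoid this double counting. One natural route is to invoke Mirsky's inequality $\sigma_{k+1}(\A\P_\S)\leq\sigma_{k+1}(\A)+\|\A-\A\P_\S\|$ to express the second term in a way that interacts with the first rather than adding to it, and then rebalance via an AM-GM step tuned to $\epsilon$. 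An alternative is to work directly on singular-value indices, showing that the error modes driving $\|\A-\A\P_\S\|$ and those driving $\|\A\P_\S-\tilde\A_k\|$ are captured by the same tail of the spectrum, so that the two bounds can be merged rather than summed. This rebalancing is the principal technical step; the rest is bookkeeping.
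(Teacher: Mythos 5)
The paper does not actually prove this lemma; it is imported verbatim as Theorem~8 of Cohen--Nelson--Woodruff, so your proposal has to be judged against the argument in that cited work rather than against anything in this paper. Your Step~1 is correct and is exactly the standard structural construction: with $\Q=(\S\U_k)^\dagger\S\A=\mathbf{\Sigma}_k\V_k^\top+(\S\U_k)^\dagger\S\A_{\bar k}$, the identity $\U_k^\top\S^\top\S\A_{\bar k}=\U_k^\top\S^\top\S\A_{\bar k}-\U_k^\top\A_{\bar k}$, the inverse bound from the subspace embedding, and the column-orthogonality Pythagoras all go through and produce a rank-$k$ matrix $X$ with rows in the row space of $\S\A$ satisfying $\|\A-X\|^2\le(1+O(\epsilon))\|\A-\A_k\|^2+O(\epsilon/k)\|\A-\A_k\|_F^2$. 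Incidentally, this weaker existential statement is all the present paper uses downstream: in the proof of Lemma~\ref{lem:precondition_sparse_main} only $\|\Z(\I-\P_\S)\|$ is needed, and $\|\Z(\I-\P_\S)\|\le\|\Z-X\|$ for any such $X$. But it does not prove the lemma as stated, which concerns the specific truncation $\tilde{\A}_k=[\A\P_\S]_k$.

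The gap is the one you flag yourself, and it is not bookkeeping. Your Step~2 bounds $\|\A-\tilde{\A}_k\|^2$ by $\|\A(\I-\P_\S)\|^2+\sigma_{k+1}(\A\P_\S)^2$, and these two quantities can \emph{simultaneously} equal $\sigma_{k+1}(\A)^2$ even when both hypotheses hold with zero error: take $\A=\diag(2,1,1)$, $k=1$, and $\S$ with rows $\e_1^\top$ and $v^\top$ for a unit $v\perp\e_1$; then the AMM error is $0$ and $\S$ embeds $\U_1$ exactly, yet $\|\A(\I-\P_\S)\|=\sigma_{2}(\A\P_\S)=\sigma_2(\A)=1$ while the true value $\|\A-\tilde{\A}_1\|^2$ equals $1$, not $2$. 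So no ``rebalancing'' of your two scalar bounds can recover the leading constant $1+\epsilon$: each bound is individually tight, and the factor of $2$ is lost exactly where $\max_y\bigl(\|y^\top\A(\I-\P_\S)\|^2+\|y^\top(\A\P_\S-\tilde{\A}_k)\|^2\bigr)$ is replaced by the sum of the two separate maxima. Closing the gap requires going back inside the operator norms and exploiting that the maximizing directions of the two terms are nearly orthogonal, which is the genuine content of the cited theorem. Neither of your suggested repairs supplies this: Mirsky's inequality gives $\sigma_{k+1}(\A\P_\S)\le\sigma_{k+1}(\A)+\|\A(\I-\P_\S)\|$, a \emph{weaker} upper bound than the $\sigma_{k+1}(\A\P_\S)\le\sigma_{k+1}(\A)$ you already use, so it creates no cancellation; and ``merge the two bounds rather than sum them'' restates the goal rather than providing a method. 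As written, the proposal establishes the lemma only with leading constant $2+O(\epsilon)$.
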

In order to use Lemma~\ref{lem:sparse_embed} to get a low-rank approximation result, there are two requirements that we need to satisfy. As for the first approximate spectral norm matrix multiplication assumption, it follows directly since the OSE moment property holds; as for the second subspace embedding property, it can be shown from the definition of sparse embedding matrix. Formally, we give the proof of Lemma~\ref{lem:precondition_sparse_main} as follows.

\begin{proof}[Proof of Lemma~\ref{lem:precondition_sparse_main}]
For positive definite matrix $\A$, we denote $\Z\coloneqq\A^{1/2}$ be such that $\A = \Z\Z^\top$. Given $\delta <1/2$ and $l<n$, we first verify that the sparse embedding matrix $\S\in\R^{s\times n}$ with $s = O(l \log(l/\delta))$ and $\gamma = O(\log(l/\delta))$ satisfies the two requirements of Lemma~\ref{lem:sparse_embed} on matrix $\Z$:
\begin{itemize}
    \item Requirement 1: we use Lemma~\ref{lem:cohen_bound} with choice $d=l$ and $\epsilon=1/2\sqrt{2}$, thus the sparse embedding matrix with $s = O(l\log(l/\delta))$ and $\gamma = O(\log(l/\delta))$ satisfies the $(1/2\sqrt{2}, \delta/2, l, \log (l/\delta))$-OSE moment property. By using Lemma~\ref{lem:OSE_req} to matrix $\U_k$ and $\A_{\bar{k}}$, we can show the first requirement holds to $\epsilon=1$ with probability at least $1-\delta/2$.
    \item Requirement 2: we also use Lemma~\ref{lem:cohen_bound} with Markov's inequality, which shows that $\S$ satisfies
    \begin{align*}
    \|(\S\U_l)^\top(\S\U_l) - \U_l^\top\U_l\| \leq \frac{1}{2\sqrt{2}} < \frac{1}{2}
    \end{align*}
    with probability at least $1-\delta/2$. Thus we have the $1/2$-subspace embedding property.
\end{itemize}
With the above two requirements being satisfied, we apply Lemma~\ref{lem:sparse_embed} with choice $\epsilon=1$ to matrix $\Z$, and have the following holds with probability $1-\delta$:
\begin{align}\label{eq:sparse_precondition_bound}
\|\Z - \tilde{\Z}_l\|^2 \leq 2\sigma_{l+1}^2 + \frac{1}{l}\sum_{i\geq l+1} \sigma_i^2
\end{align}
where $\{\sigma_i\}_{i=1}^n$ are the singular values of $\Z$ in decreasing order. Notice that here $\tilde{\Z} = \Z\P_{\S}$ and $\tilde{\Z}_l$ is the best rank-$l$ approximation of $\tilde{\Z}$, thus by denoting $\P_l = \V_l\V_l^\top$, we have $\tilde{\Z}_l = \tilde{\Z}\P_l = \Z\P_{\S}\P_l$. Moreover, in Lemma~\ref{lem:project_twice_order} we show that $\|\Z - \tilde{\Z}_l\| \geq \|\Z - \tilde{\Z}\|$, by combining this result and Eq.\eqref{eq:sparse_precondition_bound} we have an error bound on the term $\|\Z - \tilde{\Z}\|^2 = \|\Z-\Z\P_{\S}\|^2$. Further notice the following:
\begin{align*}
\left\|\Z\left(\I - \P_{\S}\right)\right\|^2 = \big\|\Z\left(\I - \P_{\S}\right)\Z^\top\big\|
= \big\|\underbrace{\Z\Z^\top}_{\A} - \underbrace{\Z\Z^\top\S^\top(\S\Z\Z^\top\S^\top)^{-1}\S\Z\Z^\top}_{\hat{\A}_{\nys}}\big\|.
\end{align*}
Thus, if we denote $\hat{\A}_{\nys} \coloneqq \A\S^\top(\S\A\S^\top)^{-1}\S\A$ as the rank-$s$ Nystr\"om approximation of matrix $\A$, then we can bound the error as 
\begin{align*}
\|(\A+\lambda\I) - (\hat{\A}_{\nys}+\lambda\I)\| = \|\A - \hat{\A}_{\nys}\| \leq 2\sigma_{l+1}^2 + \frac{1}{l} \sum_{i\geq l+1}\sigma_i^2 = 2\lambda_{l+1} + \frac{1}{l} \sum_{i\geq l+1}\lambda_i.
\end{align*}
Based on this result, if we construct the preconditioner as $\M = \hat{\A}_{\nys} + \lambda\I + \lambda_0\I$ where $\lambda_0 \geq 2\lambda_{2l+1} + \frac{1}{2l} \sum_{i\geq 2l+1}\lambda_i$ and $\hat{\A}_{\nys}$ is the rank-$2s$ (instead of rank-$s$) Nystr\"om approximation, then we have $\A+\lambda\I \preceq \M \preceq (\A + \lambda\I) + \lambda_0\I$, which gives $\kappa_{\M} \leq 1 + \frac{\lambda_0}{\lambda_n+\lambda}$. Further notice that we can bound this term as
\begin{align*}
2\lambda_{2l+1} + \frac{1}{2l} \sum_{i\geq 2l+1}\lambda_i \leq \frac{2}{l}\sum_{i=l+1}^{2l}\lambda_i + \frac{1}{2l} \sum_{i\geq 2l+1}\lambda_i \leq \frac{2}{l}\sum_{i\geq l+1}\lambda_i,
\end{align*}
thus by setting $\lambda_0 = \frac{2}{l}\sum_{i\geq l+1}\lambda_i$ we have the following holds for some $C=O(1)$:
\begin{align}\label{eq:nys_precondition}
\kappa_{\M} \leq 1 + \frac{\lambda_0}{\lambda_n+\lambda} \leq \frac{C}{l}\sum_{i\geq l+1}\frac{\lambda_i}{\lambda_n+\lambda} \leq \frac{C\bar{\kappa}_{l,\lambda} n}{l}.
\end{align}
\end{proof}
\begin{lemma}\label{lem:project_twice_order}
We have $\P_{\S}\P_l = \P_l$, which gives $\tilde{\Z}_l = \Z\P_l$. We further have
\begin{align*}
\|\Z - \tilde{\Z}_l\| \geq \|\Z - \tilde{\Z}\|.
\end{align*}
\end{lemma}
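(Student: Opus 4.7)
The first claim $\P_{\S}\P_l = \P_l$ is essentially a statement that the range of $\P_l$ is contained in the range of $\P_{\S}$. My plan is to argue this directly from the construction. Since $\tilde{\Z} = \Z\P_{\S}$ with $\P_{\S}$ an orthogonal (hence symmetric) projection, the row space of $\tilde{\Z}$ equals $\P_{\S}(\range(\Z^\top)) \subseteq \range(\P_{\S})$. The matrix $\P_l = \V_l\V_l^\top$ projects onto the span of the top $l$ right singular vectors of $\tilde{\Z}$, and these live in the row space of $\tilde{\Z}$, hence in $\range(\P_{\S})$. Therefore $\P_{\S}\V_l = \V_l$, which gives $\P_{\S}\P_l = \V_l\V_l^\top = \P_l$. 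Consequently, $\tilde{\Z}_l = \tilde{\Z}\P_l = \Z\P_{\S}\P_l = \Z\P_l$.

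For the norm inequality, the key observation is that $\P_l \preceq \P_{\S}$ in the Loewner order. Taking the transpose of $\P_{\S}\P_l = \P_l$ (both are symmetric) gives $\P_l\P_{\S} = \P_l$, and from these two identities a short calculation yields
\begin{align*}
\P_{\S} - \P_l = (\I - \P_l)\P_{\S}(\I - \P_l),
\end{align*}
which is manifestly PSD. Hence $\I - \P_l \succeq \I - \P_{\S}$.

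To convert this into an operator-norm comparison of $\Z(\I-\P_l)$ versus $\Z(\I-\P_{\S})$, I will use the standard identity $\|\M\|^2 = \|\M\M^\top\|$ together with idempotence of orthogonal projections ($(\I-\P)(\I-\P)^\top = \I-\P$). This gives
\begin{align*}
\|\Z-\tilde{\Z}_l\|^2 = \|\Z(\I-\P_l)\Z^\top\| \;\geq\; \|\Z(\I-\P_{\S})\Z^\top\| = \|\Z-\tilde{\Z}\|^2,
\end{align*}
where the middle inequality follows because conjugation by $\Z$ preserves the Loewner order and the operator norm of a PSD matrix equals its largest eigenvalue, which is monotone in the Loewner order. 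Taking square roots yields $\|\Z-\tilde{\Z}_l\| \geq \|\Z-\tilde{\Z}\|$.

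I do not anticipate any significant obstacle: the first part is a straightforward range containment, and the second is a two-line Loewner monotonicity argument. The only mildly subtle point is remembering that one cannot directly deduce $\|\Z\M_1\|\geq\|\Z\M_2\|$ from $\M_1\succeq\M_2$ in general, which is why passing through $\Z(\I-\P)\Z^\top$ via idempotence is the right move.
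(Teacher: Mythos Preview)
Your proposal is correct. For the first claim, your argument is essentially the same as the paper's: both establish $\P_{\S}\P_l=\P_l$ by showing that the columns of $\tilde{\V}_l$ lie in $\range(\P_{\S})$. The paper does this by writing $\P_{\S}=\Q\Q^\top$ for an orthonormal $\Q$ and expressing $\tilde{\V}_l=\Q\I_l$ explicitly, whereas you argue more directly via the row space of $\tilde{\Z}=\Z\P_{\S}$; the content is the same.

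For the norm inequality, the paper's written proof actually stops after establishing $\P_{\S}\P_l=\P_l$ and does not spell out the second claim. Your Loewner-order argument---deriving $\I-\P_l\succeq \I-\P_{\S}$ from the identity $\P_{\S}-\P_l=(\I-\P_l)\P_{\S}(\I-\P_l)$, then conjugating by $\Z$ and using idempotence to pass from $\|\Z(\I-\P)\Z^\top\|$ to $\|\Z(\I-\P)\|^2$---is exactly the natural completion and is cleanly executed. Your remark that one cannot go directly from $\M_1\succeq\M_2$ to $\|\Z\M_1\|\geq\|\Z\M_2\|$, and that the passage through $\Z(\I-\P)\Z^\top$ via idempotence is what makes it work, is a good observation.
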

\begin{proof}
Let $\tilde{\Z} = \tilde{\U}\tilde{\mathbf{\Sigma}}\tilde{\V}^\top$ be its SVD, then according to definition we have $\tilde{\Z}_l = \tilde{\U}_l\tilde{\mathbf{\Sigma}}_l\tilde{\V}_l^\top$, and we can write the projection matrix $\P_l$ as $\P_l = \tilde{\V}_l\tilde{\V}_l^\top$ where $\tilde{\V}_l\in\R^{n \times l}$. With these, we can always write $\P_{\S}$ as $\P_{\S} = \Q\Q^\top$ where $\Q \in\R^{n\times s}$ satisfies $\tilde{\V}_l = \Q \I_l$ for some $\I_l\in\R^{s\times l}$ satisfying $\I_l^\top\I_l = \I$. Thus we have
\begin{align*}
\P_{\S}\P_l = \Q\Q^\top\tilde{\V}_l\tilde{\V}_l^\top = \Q\Q^\top\Q\I_l\I_l^\top\Q^\top = \Q\I_l\I_l^\top\Q^\top = \tilde{\V}_l\tilde{\V}_l^\top= \P_l.
\end{align*}
\end{proof}

\subsection{Variable Transformation}
\label{sec:trans_variable}
In this section we give a proof of Lemma~\ref{lem:trans_variable}.
\begin{proof}[Proof of Lemma~\ref{lem:trans_variable}]
We first show the relation between the two aforementioned linear systems $(\C^\top\C+\tilde{\lambda}\W)\y = \C^\top\r$ and $\M\w = \r$. Notice that 
\begin{align*}
& ~ \M\cdot\left(\I - \C(\C^\top\C+\tilde{\lambda}\W)^{-1}\C^\top\right) \\
= & ~ \left(\C\W^{-1}\C^\top + \tilde{\lambda}\I\right)\cdot\left(\I - \C(\C^\top\C+\tilde{\lambda}\W)^{-1}\C^\top\right) \\
= & ~ \C\W^{-1}\C^\top - \C\W^{-1}\C^\top \C(\C^\top\C+\tilde{\lambda}\W)^{-1}\C^\top + \tilde{\lambda}\I - \tilde{\lambda} \C(\C^\top\C+\tilde{\lambda}\W)^{-1}\C^\top \\
= & ~ \C\W^{-1}\C^\top - \C\W^{-1}\C^\top + \tilde{\lambda}\C(\C^\top\C+\tilde{\lambda}\W)^{-1}\C^\top + \tilde{\lambda}\I - \tilde{\lambda} \C(\C^\top\C+\tilde{\lambda}\W)^{-1}\C^\top  \\
= & ~ \tilde{\lambda}\I
\end{align*}
which gives 
\begin{align}\label{eq:M_inverse}
\M^{-1} = \frac{1}{\tilde{\lambda}}\left(\I - \C(\C^\top\C+\tilde{\lambda}\W)^{-1}\C^\top\right).
\end{align}
Thus, if we denote $\y^* = (\C^\top\C+\tilde{\lambda}\W)^{-1}\C^\top\r$ as the solution of the first linear system, then the solution of the second system can be expressed as $\w^* = \frac{1}{\tilde{\lambda}}(\r - \C\y^*)$, and we further have $\tilde{\lambda} (\hat{\w} - \w^*) = \C(\y - \hat{\y})$. Furthermore, by assumption $\M = \hat{\A}_{\nys} +\tilde{\lambda}\I \approx_2 \A+\tilde{\lambda}\I$ we have
\begin{align*}
\C^\top\M\C \approx_2 \C^\top(\A+\tilde{\lambda}\I)\C = \S\A(\A+\tilde{\lambda}\I)\A\S^\top.
\end{align*}
Based on this observation and the assumption that $\|\hat{\y} - \y^*\|_{\C^\top\C+\tilde{\lambda}\W}\leq \epsilon_1 \|\y^*\|_{\C^\top\C+\tilde{\lambda}\W}$, we have
\begin{align*}
\tilde{\lambda}^2\|\hat{\w} - \w^*\|_{\M}^2 = & ~ \|\C(\hat{\y} - \y^*)\|_{\M}^2 = \|\hat{\y} - \y^*\|_{\C^\top\M\C}^2 \leq 2\|\hat{\y} - \y^*\|_{\S\A(\A+\tilde{\lambda}\I)\A\S^\top}^2 \\
= & ~ 2\|\A\S^\top(\hat{\y} - \y^*)\|_{\A+\tilde{\lambda}\I}^2 \leq 2\|\A^{1/2}\|^2 \cdot \|\A^{1/2}\S^\top(\hat{\y} - \y^*)\|_{\A+\tilde{\lambda}\I}^2 \\
= & ~ 2\|\A\|\cdot \|\hat{\y} - \y^*\|_{\S(\A^2+\tilde{\lambda}\A)\S^\top}^2 = 2\|\A\|\cdot \|\hat{\y} - \y^*\|_{\C^\top\C+\tilde{\lambda}\W}^2 \\
\leq & ~ 2\epsilon_1^2\|\A\| \cdot \|\y^*\|_{\C^\top\C+\tilde{\lambda}\W}^2 = 2\epsilon_1^2\|\A\| \cdot \|\A^{1/2}\S^\top\y^*\|_{\A+\tilde{\lambda}\I}^2 \\
\leq & ~ 2\epsilon_1^2 \kappa(\A) \cdot \|\A\S^\top\y^*\|_{\A+\tilde{\lambda}\I}^2  = 2\epsilon_1^2 \kappa(\A) \cdot \|\y^*\|_{\S\A(\A+\tilde{\lambda}\I)\A\S^\top}^2 \\
\leq & ~ 4\epsilon_1^2 \kappa(\A) \cdot \|\y^*\|_{\C^\top\M\C}^2 = 4\epsilon_1^2 \kappa(\A) \cdot \|\C\y^*\|_{\M}^2.
\end{align*}
Thus we have
\begin{align}\label{eq:trans_1}
\|\hat{\w} - \w^*\|_{\M} \leq \frac{2\epsilon_1}{\tilde{\lambda}}\sqrt{\kappa(\A)} \cdot \|\C\y^*\|_{\M}
\end{align}
Furthermore, since $\r = \M\w^* = \frac{1}{\tilde{\lambda}}\M(\r - \C\y^*)$, we have
\begin{align}\label{eq:trans_2}
\|\r\|_{\M} =\|\M^{1/2}\r\| \leq \|\M\|\|\M^{-1/2}\r\| = \frac{1}{\tilde{\lambda}}\|\M\| \|\M^{1/2}(\r-\C\y^*)\| = \frac{1}{\tilde{\lambda}} \|\M\| \|\r-\C\y^*\|_{\M}.
\end{align}
By combining Eq.\eqref{eq:trans_1} and \eqref{eq:trans_2}, we have the following holds for some constant $c_0>4$:
\begin{align*}
\|\hat{\w} - \w^*\|_{\M} \leq & ~ \frac{2\epsilon_1\kappa^{1/2}}{\tilde{\lambda}} \|\C\y^*\|_{\M}
\leq \frac{2\epsilon_1\kappa^{1/2}}{\tilde{\lambda}} (\|\r-\C\y^*\|_{\M} + \|\r\|_{\M}) \\
\leq & ~ 2\epsilon_1\kappa^{1/2}\left(\frac{\|\M\|}{\tilde{\lambda}} + 1\right)\cdot \frac{1}{\tilde{\lambda}}\|\r-\C\y^*\|_{\M} = 2\epsilon_1\kappa^{1/2}\frac{\|\M+\tilde{\lambda}\I\|}{\tilde{\lambda}}\cdot \|\hat{\w}\|_{\M} \\
\leq & ~ 4\epsilon_1\kappa^{1/2} \frac{\|\A+\tilde{\lambda}\I\|}{\tilde{\lambda}}\cdot\|\hat{\w}\|_{\M} \leq  c_0\epsilon_1\kappa^{3/2} \|\hat{\w}\|_{\M}
\end{align*}
where $\kappa = \kappa(\A)$. Thus we finish the proof.
\end{proof}

\section{Three-Level \AlgName{} for Solving General Linear Systems} \label{sec:proof_rec}
With the easier to analyze PD case being proved in Section~\ref{s:psd-proof}, in this section we give a proof of Theorem~\ref{thm:main_rec}, where given $\A\in\R^{m\times n}$ with full column rank, $\c\in\R^n$ and $\lambda\geq 0$, we want to solve $(\A^\top\A+\lambda\I) \x = \c$. As we discuss in Section~\ref{sec:main_results}, such a linear system is more general in the sense that it can not only recover the rectangular linear system, but also have applications to kernel ridge regression and Schatten norm estimation.

Since $\A^\top\A+\lambda\I \in \mathcal{S}_n^{++}$, one natural idea is to construct the Nystr\"om preconditioner similar as before. Given sketching matrix $\S\in\R^{s\times n}$, define $\tilde{\A} \coloneqq \A\S^\top, \C \coloneqq \A^\top\A\S^\top = \A^\top\tilde{\A}$ and $\W \coloneqq \S\A^\top\A\S^\top = \tilde{\A}^\top\tilde{\A}$. Then, $\C\W^{-1}\C^\top$ is the classical Nystr\"om preconditioner of matrix $\A^\top\A$. Let $\{\sigma_i\}_{i=1}^n$ be the singular values of $\A$ in decreasing order. By using Lemma~\ref{lem:precondition_sparse_main}, we can again show that the matrix $\M \coloneqq \C\W^{-1}\C^\top + \tilde{\lambda}\I$ where $\tilde{\lambda} \coloneqq \lambda + \frac{2}{l}\sum_{i>l}\sigma_i^2$, is a useful preconditioner for this problem, and we also have the inversion formula:
\begin{align*}
\M^{-1} = \frac{1}{\tilde{\lambda}} \left(\I - \C(\C^\top\C + \tilde{\lambda}\W)^{-1}\C^\top\right).
\end{align*}
However, different from the positive definite case where we can compute $\C$ and $\W$ explicitly, in this case constructing $\C$ and $\W$ takes $\tilde{O}(s\cdot \nnz(\A))$ time which is much too expensive. To avoid computing $\C$, we first observe that $\tilde{\A}\tilde{\A}^\top+\tilde{\lambda}\I \approx_{2} \A\A^\top+\tilde{\lambda}\I$ (which is proved in Lemma~\ref{lem:spec_approx}), concluding that
\begin{align*}
\W^2 + \tilde{\lambda}\W = \tilde{\A}^\top (\tilde{\A}\tilde{\A}^\top + \tilde{\lambda}\I)\tilde{\A} \approx_{2} \tilde{\A}^\top(\A\A^\top+\tilde{\lambda}\I)\tilde{\A} = \C^\top\C+\tilde{\lambda}\W.
\end{align*}
With this observation, we do not need to compute $\C$, and we can use $\M_2 \coloneqq \W^2+\tilde{\lambda}\W$ as the Level-2 preconditioner for solving the linear system $(\C^\top\C+\tilde{\lambda}\W)\y = \r$. However we still cannot afford to compute $\W$, not to mention $\M_2$. To address this, notice that
\begin{align*}
\M_2^{-1} = (\W^2+\tilde{\lambda}\W)^{-1} = & ~ (\tilde{\A}^\top\tilde{\A})^{-1}(\tilde{\A}^\top\tilde{\A}+\tilde{\lambda}\I)^{-1} \\
= & ~ (\tilde{\A}^\top\tilde{\A})^{-1} \frac{1}{\tilde{\lambda}} \left(\I - \tilde{\A}^\top\tilde{\A}(\tilde{\A}^\top\tilde{\A}+\tilde{\lambda}\I)^{-1}\right) \\
= & ~ \frac{1}{\tilde{\lambda}}\left((\tilde{\A}^\top\tilde{\A})^{-1} - (\tilde{\A}^\top\tilde{\A} + \tilde{\lambda}\I)^{-1}\right).
\end{align*}
Since we only need to have access to $\M_2^{-1}\r$ in Level-2 preconditioned Lanczos, it suffices to compute $(\tilde{\A}^\top\tilde{\A})^{-1}\r$ and $(\tilde{\A}^\top\tilde{\A} + \lambda\I)^{-1}\r$. To solve this, we introduce the third level, where we again use preconditioned Lanczos to solve two positive definite linear systems $(\tilde{\A}^\top\tilde{\A})\u = \r$ and $(\tilde{\A}^\top\tilde{\A} + \tilde{\lambda}\I)\v = \r$. To construct the Level-3 preconditioners, notice that $\tilde{\A}$ is an $m \times s$ tall matrix, and we can thus construct an ${O}(s) \times s$ sketch of it.

Let $\mathbf{\Phi} \in \R^{\phi \times m}$ be the oblivious embedding matrix in Lemma~\ref{lem:precondition_sparse} with   $\phi = {O}(s)$. We first compute $\hat{\A} = \mathbf{\Phi}\tilde{\A} \in\R^{\phi\times s}$, then construct the preconditioner $\M_{3a}^{-1} \coloneqq \hat{\A}^\top\hat{\A}$ and $\M_{3b}^{-1} \coloneqq \hat{\A}^\top\hat{\A} + \tilde{\lambda}\I$. Lemma~\ref{lem:precondition_sparse} guarantees that both preconditioned linear systems have a constant condition number, thus solving them using preconditioned Lanczos only takes $O(\log 1/\epsilon_2)$ steps for the targeted accurancy $\epsilon_2$. Moreover, since both $\M_{3a}$ and $\M_{3b}$ are $s \times s$ positive definite matrices, we can afford to compute their inverse exactly, and the errors caused by the Level-3 preconditioners are $0$. By going back from Level-3 $\to$ Level-2 $\to$ Level-1 and analyzing the error bounds carefully, we can finally arrive at the convergence result on the original linear system.

Our algorithms are as follows. Algorithm~\ref{alg:msp_rec_main} is the main algorithm for solving the linear system $(\A^\top\A+\lambda\I)\x = \c$, in which we run preconditioned Lanczos with Level-1 preconditioner $\M$ defined by function $\mathsf{SolverM1}$. Algorithm~\ref{alg:msp_rec_func_outer} defines function $\mathsf{SolverM1}$ by solving linear system $\M\w = \r$ with preconditioned Lanczos, where the Level-2 preconditioner $\M_2$ is defined by function $\mathsf{SolverM2}$. Lastly, Algorithm~\ref{alg:msp_rec_func_inner} defines function $\mathsf{SolveM2}$ by solving two linear systems $(\tilde{\A}^\top\tilde{\A})\u = \r$ and $(\tilde{\A}^\top\tilde{\A} + \tilde{\lambda}\I)\v = \r$ with preconditioned Lanczos respectively, whereas the Level-3 preconditioners $\M_{3a}$ and $\M_{3b}$ are pre-computed (see line 6 of Algorithm~\ref{alg:msp_rec_main}). As a comparison of the quality of the preconditioners we use in different levels, we have:
\begin{align*}
\begin{cases}
\text{Level-1: } \kappa_{\M} \coloneqq \kappa(\M^{-1/2}(\A^\top\A+\lambda\I)\M^{-1/2}) = O(\bar{\kappa}_{l,\lambda}^2 n/l) & \textbf{Coarse} \\
\text{Level-2: } \kappa_{\M_2} \coloneqq \kappa(\M_2^{-1/2}(\C^\top\C+\tilde{\lambda}\W)\M_2^{-1/2}) = O(1) & \textbf{Fine} \\
\text{Level-3a: } \kappa_{\M_{3a}} \coloneqq\kappa(\M_{3a}^{-1/2} (\tilde{\A}^\top\tilde{\A})\M_{3a}^{-1/2}) = O(1) & \textbf{Fine} \\
\text{Level-3b: } \kappa_{\M_{3b}} \coloneqq \kappa (\M_{3b}^{-1/2} (\tilde{\A}^\top\tilde{\A}+\tilde{\lambda}\I) \M_{3b}^{-1/2}) = O(1) & \textbf{Fine} 
\end{cases}
\end{align*}

\begin{algorithm}[!ht]
\caption{\AlgName{} for solving linear system $(\A^\top\A+\lambda\I)\x = \c$.}
\label{alg:msp_rec_main}
\begin{algorithmic}[1]
\State \textbf{Input: }matrix $\A\in\R^{m\times n}$, regularize term $\lambda$ and $\lambda_0$, vector $\c \in \R^n$, sparse sketch size $s$, \# of non-zeros $\gamma$, level-2 sketch size $\phi$, \# of iterations $t_{\max}$;
\State Compute $\tilde{\lambda} = \lambda+\lambda_0$;
\State Construct sparse embedding matrix $\S\in\R^{s\times n}$ with $\gamma$ non-zeros per column; \Comment{Def.~\ref{def:sparse_embed}.}
\State Compute $\tilde{\A} = \A\S^\top$ and $\hat{\A} = \mathbf{\Phi}\tilde{\A} \in \R^{\phi \times s}$; \Comment{Lemma~\ref{lem:precondition_sparse}.}
\State Compute $\M_{3a} = \hat{\A}^\top\hat{\A}$ and $\M_{3b} = \M_{3a} + \tilde{\lambda}\I$
\State Compute $\M_{3a}^{-1}$ and $\M_{3b}^{-1}$;
\State Solve $\tilde{\x}$ by preconditioned Lanczos with $(\A^\top\A+\lambda\I, \c, \mathsf{SolveM1}, t_{\max})$; \Comment{Alg.~\ref{alg:lanczos_left_precon}.}
\\
\Return $\tilde{\x}$; \Comment{Solves $(\A^\top\A+\lambda\I)\x = \c$.}
\end{algorithmic}
\end{algorithm}

\begin{algorithm}[!ht]
\caption{Level-1 auxiliary function $\mathsf{SolveM1}$ for solving $\M\w=\r$.}
\label{alg:msp_rec_func_outer}
\begin{algorithmic}[1]
\State \textbf{function} $\mathsf{SolveM1}(\r)$:
\State \quad Solve $\hat{\y}$ by Lanczos with $(\tilde{\A}^\top\A\A^\top\tilde{\A}+\tilde{\lambda}\tilde{\A}^\top\tilde{\A}, \tilde{\A}^\top\A\r, \mathsf{SolveM2}, O(\log \kappa / \epsilon_0))$; \Comment{Alg.~\ref{alg:lanczos_left_precon}.}
\State \quad Compute $\hat{\w} = \frac{1}{\tilde{\lambda}}(\r - \A^\top\tilde{\A} \hat{\y})$;
\\
\Return $\hat{\w}$; \Comment{Solves $\M\w = \r$ for $\M = \C\W^{-1}\C^\top + \tilde{\lambda}\I$.}
\end{algorithmic}
\end{algorithm}

\begin{algorithm}[!ht]
\caption{Level-2 auxiliary function $\mathsf{SolveM2}$ for solving $\M_2\z = \r$.}
\label{alg:msp_rec_func_inner}
\begin{algorithmic}[1]
\State \textbf{function} $\mathsf{SolveM2}(\r)$:
\State \quad Solve $\hat{\u}$ by preconditioned Lanczos with $(\tilde{\A}^\top\tilde{\A}, \r, \M_{3a}^{-1}, O(\log \kappa l/\epsilon_0) )$; \Comment{Alg.~\ref{alg:lanczos_left_precon}.}
\State \quad Solve $\hat{\v}$ by preconditioned Lanczos with $(\tilde{\A}^\top\tilde{\A}+\tilde{\lambda}\I, \r, \M_{3b}^{-1}, O(\log \kappa l/\epsilon_0) )$; \Comment{Alg.~\ref{alg:lanczos_left_precon}.}
\State \quad Compute $\hat{\z} = \frac{1}{\tilde\lambda}(\hat{\u}-\hat{\v})$;
\\
\Return $\hat{\z}$; \Comment{Solves $\M_2 \z = \r$ for $\M_2 = \W^2 + \tilde{\lambda}\W$.}
\end{algorithmic}
\end{algorithm}

\subsection{Proof of Theorem~\ref{thm:main_rec}}
Given matrix $\A\in\R^{m\times n}$ and $\lambda\geq 0$, we consider to solve the regularized linear system $(\A^\top\A+\lambda\I)\x = \c$ without explicitly computing $\A^\top\A$. Since $\A$ has full column rank, we know $\A^\top\A+\lambda\I\in\R^{n\times n}$ is positive definite. We prove the result in following five parts.

\paragraph{Part 1: Nystr\"om Preconditioner with Sparse Embedding.}
Let $\S\in\R^{s\times n}$ be the sparse embedding matrix according to Lemma~\ref{lem:precondition_sparse_main} with $s = O(l\log(l/\delta))$ and $\gamma = O(\log(l/\delta))$. Denote $\C \coloneqq \A^\top\A\S^\top$ and $\W \coloneqq \S\A^\top\A\S^\top$, then $\C\W^{-1}\C^\top$ is the Nystr\"om approximation of $\A^\top\A$. Denote $\tilde{\lambda} = \lambda+\lambda_0$, let $\M = \C\W^{-1}\C^\top+\tilde{\lambda}\I$ be the preconditioner, then according to Eq.\eqref{eq:M_inverse} in the proof of Lemma~\ref{lem:trans_variable} we have
\begin{align*}
\M^{-1} = \frac{1}{\tilde{\lambda}} \left(\I - \C(\C^\top\C+\tilde{\lambda}\W)^{-1}\C^\top\right).
\end{align*}
Given $0 < \delta <1/8$, by applying Lemma~\ref{lem:precondition_sparse_main} to matrix $\A^\top\A$, if we set $\lambda_0 = \frac{2}{l}\sum_{i\geq l+1} \lambda_i(\A^\top\A) = \frac{2}{l}\sum_{i\geq l+1} \sigma_i^2$, then with probability at least $1-\delta/3$ we have $\kappa_{\M} \leq \frac{C\bar{\kappa}_{l,\lambda}^2 n}{l}$ for some $C=O(1)$.

\paragraph{Part 2: Level-1 Preconditioning.}
Similar with the positive definite case, for the first level we use preconditioned Lanczos method (Algorithm~\ref{alg:lanczos_left_precon}) where in each iteration we need to compute $\w^* \coloneqq\M^{-1}\r$. Similarly, we introduce $\y^* \coloneqq (\C^\top\C+\tilde{\lambda}\W)^{-1}\C^\top\r$ and approximate $\w^*$ by $\hat{\w} = \frac{1}{\tilde{\lambda}}(\r - \C\hat{\y})$. Suppose we can solve the linear system $(\C^\top\C+\tilde{\lambda}\W)\y = \C^\top\r$ and have the guarantee that (which will be proved in Level-2 and Level-3):
\begin{align}\label{eq:level_1_req}
\|\hat{\y} - \y^*\|_{\C^\top\C+\tilde{\lambda}\W} \leq \epsilon_1\cdot \|\y^*\|_{\C^\top\C+\tilde{\lambda}\W}.
\end{align}
Notice that since $\kappa(\A^\top\A) = \kappa^2$, Lemma~\ref{lem:trans_variable} guarantees that 
\begin{align*}
\|\hat{\w} - \M^{-1}\r\|_{\M} \leq c_0\epsilon_1\kappa^3 \|\M^{-1} \r\|_{\M} ~~~\text{for some}~~~c_0=O(1).
\end{align*}
By setting $\epsilon_1 = O(\epsilon_0 / \kappa^3)$ we verify the assumption of Lemma~\ref{lem:stable_lanczos} stated in Eq.\eqref{eq:lanczos_assump}, also note that $\kappa(\A^\top\A +\lambda\I) \leq \kappa^2$. Thus by using Lemma~\ref{lem:stable_lanczos}, after $t_{\max} = O(\sqrt{\kappa_{\M}}\log(\kappa_{\M} / \epsilon))$ iterations of preconditioned Lanczos, we can obtain $\tilde{\x}$ such that
\begin{align*}
\|\tilde{\x} - \x^*\|_{\A^\top\A+\lambda\I} \leq \epsilon\cdot \|\x^*\|_{\A^\top\A+\lambda\I}.
\end{align*}
Thus we obtain the final convergence result we want. Next, we analyze the second level.

\paragraph{Part 3: Level-2 Preconditioning.}
Compared with the positive definite case in Section~\ref{s:psd-proof}, the difference here is that we cannot afford to compute $\C$ and $\W$, which makes the analysis more difficult. To address this, we first show in the following Lemma~\ref{lem:spec_approx} that the sketch $\tilde{\A} = \A\S^\top$ satisfies $\tilde{\A}\tilde{\A}^\top+\tilde{\lambda}\I \approx_2\A\A^\top+\tilde{\lambda}\I$ holds with probability $1 - \delta/3$. 
\begin{lemma}[Spectral approximation]\label{lem:spec_approx}
Given matrix $\A\in\R^{m\times n}$ with full column rank, $\lambda\geq 0$ and $0<\delta<1/2$. Given $\log n < l<n$, let $\lambda_0 = \frac{2}{l}\sum_{i>l}\sigma_i^2$ and denote $\tilde{\lambda} = \lambda + \lambda_0$. Let $\S\in\R^{n \times s}$ be the sparse subspace embedding matrix with $s = O(l\log (l/\delta))$ and each column having $\gamma = O(\log(l/\delta))$ non-zeros. Denote $\tilde{\A} = \A\S^\top$, then with probability at least $1-\delta$ we have
\begin{align*}
\tilde{\A}\tilde{\A}^\top+\tilde{\lambda}\I \approx_2 \A\A^\top+\tilde{\lambda}\I.
\end{align*}
\end{lemma}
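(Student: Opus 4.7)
\medskip
\noindent
\textbf{Proof proposal for Lemma \ref{lem:spec_approx}.}
The plan is to reduce the spectral approximation claim to an approximate matrix multiplication bound applied to a single ridge-reweighted matrix whose Frobenius norm is tightly controlled by the choice $\tilde\lambda\geq\lambda_0$. Let $\M \coloneqq \A\A^\top + \tilde\lambda\I$. Since $\tilde\A\tilde\A^\top - \A\A^\top = \A(\S^\top\S - \I)\A^\top$, the conclusion $\approx_2$ is equivalent to showing
\begin{align*}
\bigl\|\M^{-1/2}\A(\S^\top\S - \I)\A^\top\M^{-1/2}\bigr\| \leq 1/2.
\end{align*}

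To massage this into a form where we can apply OSE moment bounds, I would write the thin SVD $\A = \U\mathbf{\Sigma}\V^\top$ with $\V\in\R^{n\times n}$ orthogonal (using that $\A$ has full column rank), and define $\D \coloneqq (\mathbf{\Sigma}^2 + \tilde\lambda\I)^{-1/2}\mathbf{\Sigma}$. A direct calculation in the full SVD basis of $\M$ gives $\M^{-1/2}\A = \U\D\V^\top$. Since $\U$ has orthonormal columns, the spectral norm above reduces to
\begin{align*}
\bigl\|\D\V^\top(\S^\top\S - \I)\V\D\bigr\| = \bigl\|(\S\V\D)^\top(\S\V\D) - (\V\D)^\top(\V\D)\bigr\|,
\end{align*}
so it suffices to show the right-hand side is at most $1/2$.

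Next, I would bound the relevant norms of $\V\D\in\R^{n\times n}$: clearly $\|\V\D\|^2 = \|\D\|^2 \leq 1$, and splitting the Frobenius sum at index $l$ together with the choice $\tilde\lambda \geq \lambda_0 = \frac{2}{l}\sum_{i>l}\sigma_i^2$ yields
\begin{align*}
\|\V\D\|_F^2 = \sum_{i=1}^n \frac{\sigma_i^2}{\sigma_i^2+\tilde\lambda} \leq l + \frac{1}{\tilde\lambda}\sum_{i>l}\sigma_i^2 \leq \frac{3l}{2},
\end{align*}
so $\|\V\D\|^2 + \|\V\D\|_F^2/l \leq 5/2$. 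Now I would apply Lemma~\ref{lem:cohen_bound} with constant accuracy $\epsilon=1/5$ and dimension $d = 2l$ to conclude that the sparse embedding $\S$ with $s = O(l\log(l/\delta))$ rows and $\gamma = O(\log(l/\delta))$ nonzeros per column satisfies the $(1/5,\delta,2l,\log(l/\delta))$-OSE moments property, and then invoke Lemma~\ref{lem:OSE_req} with $\A = \B = \V\D$ and $k = l$ to obtain, with probability at least $1-\delta$,
\begin{align*}
\bigl\|(\S\V\D)^\top(\S\V\D) - (\V\D)^\top(\V\D)\bigr\| \leq \tfrac{1}{5} \cdot \tfrac{5}{2} = \tfrac{1}{2}.
\end{align*}

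The argument is conceptually routine once the reduction to $\V\D$ is made; no recursion or stability analysis is needed. The only place where the specific value of $\lambda_0$ is used, and arguably the main point of the lemma, is in the Frobenius norm bound above: the ridge shift $\tilde\lambda \geq \lambda_0$ is precisely what makes the ``effective dimension'' of $\V\D$ at most $O(l)$, which in turn allows us to get away with a sparse embedding of only $O(l\log(l/\delta))$ rows rather than one whose size scales with the ambient dimension $n$.
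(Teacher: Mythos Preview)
Your proposal is correct and follows essentially the same route as the paper. The paper defines $\B = \A^\top(\A\A^\top+\tilde\lambda\I)^{-1/2}$ (which differs from your $\V\D$ only by the orthogonal factor $\U^\top$ and hence has the same operator and Frobenius norms), bounds $\|\B\|\leq 1$ and $\|\B\|_F^2 = d_{\tilde\lambda}\leq \tfrac32 l$ via the same head/tail split, and then applies the same OSE-moment $\to$ AMM combination (Lemmas~\ref{lem:cohen_bound} and~\ref{lem:OSE_req}) to conclude $\|\B^\top\S^\top\S\B-\B^\top\B\|\leq\tfrac12$, which is exactly your reduction.
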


\noindent
Conditioned on this event, we have
\begin{align*}
\W^2+\tilde{\lambda}\W = \tilde{\A}^\top(\tilde{\A}\tilde{\A}^\top+\tilde{\lambda}\I)\tilde{\A}\approx_2 \tilde{\A}^\top(\A\A^\top+\tilde{\lambda}\I)\tilde{\A} = \C^\top\C+\tilde{\lambda}\W.
\end{align*}
With this result, to solve the linear system $(\C^\top\C+\tilde{\lambda}\W)\y = \C^\top\r$, we can use preconditioned Lanczos with $\M_2 \coloneqq \W^2+\tilde{\lambda}\W$ as the preconditioner, and have $\kappa_{\M_2} = \kappa(\M_2^{-1/2} (\C^\top\C+\tilde{\lambda}\W) \M_2^{-1/2}) \leq 4$. However, we still can not construct $\M_2$ since we cannot afford to compute $\W$. Instead of constructing $\M_2$ explicitly, in each iteration we only need access to $\M_2^{-1}\r$ for a given vector $\r$. Further notice that $\M_2^{-1}$ can be expressed as follows:
\begin{align*}
(\W^2+\tilde{\lambda}\W)^{-1} = & ~ (\tilde{\A}^\top\tilde{\A})^{-1}(\tilde{\A}^\top\tilde{\A}+\tilde{\lambda}\I)^{-1} \\
= & ~ (\tilde{\A}^\top\tilde{\A})^{-1} \frac{1}{\tilde{\lambda}} \left(\I - \tilde{\A}^\top\tilde{\A}(\tilde{\A}^\top\tilde{\A}+\tilde{\lambda}\I)^{-1}\right) \\
= & ~ \frac{1}{\tilde{\lambda}}\left((\tilde{\A}^\top\tilde{\A})^{-1} - (\tilde{\A}^\top\tilde{\A} + \tilde{\lambda}\I)^{-1}\right),
\end{align*}
thus in order to compute $\M_2^{-1}\r$, it suffices to compute $\u^* \coloneqq (\tilde{\A}^\top\tilde{\A})^{-1}\r$ and $\v^* \coloneqq (\tilde{\A}^\top\tilde{\A} + \tilde{\lambda}\I)^{-1}\r$. Suppose we can solve two linear systems $(\tilde{\A}^\top\tilde{\A})\u = \r$, $(\tilde{\A}^\top\tilde{\A} +\tilde{\lambda}\I)\v = \r$ approximately, and have the following guarantee (which will be proved in Level-3):
\begin{align*}
\begin{cases}
\|\hat{\u} - \u^*\|_{\tilde{\A}^\top\tilde{\A}} \leq \epsilon_{2,1} \cdot \|\u^*\|_{\tilde{\A}^\top\tilde{\A}}; \\
\|\hat{\v} - \v^*\|_{\tilde{\A}^\top\tilde{\A}+\tilde{\lambda}\I} \leq \epsilon_{2,2} \cdot \|\v^*\|_{\tilde{\A}^\top\tilde{\A}+\tilde{\lambda}\I}.
\end{cases}
\end{align*}
Denote $\z^* \coloneqq \M_2^{-1}\r = \frac{1}{\tilde{\lambda}}(\u^* - \v^*)$ as the quantity we want to approximate, and denote $\hat{\z} \coloneqq \frac{1}{\tilde{\lambda}}(\hat{\u} - \hat{\v})$.
Observe that $\W^2 +\tilde{\lambda}\W = \tilde{\A}^\top\tilde{\A}\tilde{\A}^\top\tilde{\A} + \tilde{\lambda}\tilde{\A}^\top\tilde{\A} = \tilde{\A}^\top(\tilde{\A}\tilde{\A}^\top + \tilde{\lambda}\I)\tilde{\A}$, then we have
\begin{align*}
\tilde{\lambda} \|\hat{\z} - \z^*\|_{\W^2+\tilde{\lambda}\W} = & ~ \left\|(\hat{\u} - \hat{\v}) - (\u^* - \v^*) \right\|_{\W^2+\tilde{\lambda}\W} \\
\leq & ~ \|\hat{\u} - \u^*\|_{\W^2 + \tilde{\lambda}\W} + \|\hat{\v} - \v^*\|_{\W^2+ \tilde{\lambda}\W} \\
= & ~ \big\|\tilde{\A}(\hat{\u} - \u^*)\big\|_{\tilde{\A}\tilde{\A}^\top+\tilde{\lambda}\I} + \big\|\tilde{\A}(\hat{\v} - \v^*)\big\|_{\tilde{\A}\tilde{\A}^\top+\tilde{\lambda}\I} \\
\leq & ~ \|\tilde{\A}\tilde{\A}^\top+\tilde{\lambda}\I\|^{1/2}\cdot \big\|\tilde{\A}(\hat{\u} - \u^*)\big\| + \big\|\tilde{\A}(\tilde{\A}^\top\tilde{\A}+\tilde{\lambda}\I)^{1/2}(\hat{\v} - \v^*)\big\| \\
\leq & ~ \|\tilde{\A}\tilde{\A}^\top+\tilde{\lambda}\I\|^{1/2}\cdot \epsilon_{2,1} \|\tilde{\A}\u^*\| + \|\tilde{\A}^\top\tilde{\A}\|^{1/2} \cdot \epsilon_{2,2} \big\|(\tilde{\A}^\top\tilde{\A}+\tilde{\lambda}\I)^{1/2}\v^*\big\|
\end{align*}
where in the fourth step we use the commutable property, i.e., $(\tilde{\A}\tilde{\A}^\top+\tilde{\lambda}\I)^{1/2}\tilde{\A} = \tilde{\A} (\tilde{\A}^\top\tilde{\A}+\tilde{\lambda}\I)^{1/2}$. Notice that from the definition of $\u^*$ and $\v^*$ we naturally have $(\tilde{\A}^\top\tilde{\A})\u^* = \r =(\tilde{\A}^\top\tilde{\A}+\tilde{\lambda}\I)\v^*$, which gives
$\u^* = \frac{1}{\tilde{\lambda}} (\tilde{\A}^\top\tilde{\A} + \tilde{\lambda}\I)(\u^*-\v^*) = (\tilde{\A}^\top\tilde{\A} + \tilde{\lambda}\I)\z^*$, and also $\v^* = \frac{1}{\tilde{\lambda}}\tilde{\A}^\top\tilde{\A}(\u^*-\v^*) = \tilde{\A}^\top\tilde{\A}\z^*$. By using these facts we can express $\u^*$ and $\v^*$ in terms of $\z^*$, and have
\begin{align*}
& ~ \tilde{\lambda} \|\hat{\z} - \z^*\|_{\W^2+\tilde{\lambda}\W}\\
\leq & ~ \epsilon_{2,1} \|\tilde{\A}\tilde{\A}^\top+\tilde{\lambda}\I\|^{1/2}\cdot \|\tilde{\A}\u^*\| + \epsilon_{2,2} \|\tilde{\A}^\top\tilde{\A}\|^{1/2} \cdot \| (\tilde{\A}^\top\tilde{\A} + \tilde{\lambda}\I)^{1/2} \v^*\| \\
= & ~ \epsilon_{2,1} \|\tilde{\A}\tilde{\A}^\top+\tilde{\lambda}\I\|^{1/2}\cdot \|\tilde{\A}(\tilde{\A}^\top\tilde{\A} + \tilde{\lambda}\I)\z^*\| + \epsilon_{2,2} \|\tilde{\A}^\top\tilde{\A}\|^{1/2} \cdot \| (\tilde{\A}^\top\tilde{\A} + \tilde{\lambda}\I)^{1/2} \tilde{\A}^\top\tilde{\A}\z^*\| \\
= & ~ \epsilon_{2,1} \|\tilde{\A}\tilde{\A}^\top+\tilde{\lambda}\I\|^{1/2}\cdot \|(\tilde{\A}\tilde{\A}^\top+\tilde{\lambda}\I) \tilde{\A}\z^*\| + \epsilon_{2,2} \|\tilde{\A}^\top\tilde{\A}\|^{1/2} \cdot \|\tilde{\A}^\top(\tilde{\A}\tilde{\A}^\top+\tilde{\lambda}\I)^{1/2}\tilde{\A}\z^*\| \\
\leq & ~ \epsilon_{2,1} \|\tilde{\A}\tilde{\A}^\top+\tilde{\lambda}\I\|\cdot \|(\tilde{\A}\tilde{\A}^\top+\tilde{\lambda}\I)^{1/2} \tilde{\A}\z^*\| + \epsilon_{2,2} \|\tilde{\A}\tilde{\A}^\top\| \cdot \|(\tilde{\A}\tilde{\A}^\top+\tilde{\lambda}\I)^{1/2}\tilde{\A}\z^*\| \\
= & ~ \left(\epsilon_{2,1} \|\tilde{\A}\tilde{\A}^\top+\tilde{\lambda}\I\| + \epsilon_{2,2} \|\tilde{\A}\tilde{\A}^\top\| \right)\cdot \|\z^*\|_{\W^2+\tilde{\lambda}\W}.
\end{align*}
Thus given $\epsilon_0>0$, by choosing $\epsilon_{2,1} \leq \frac{\epsilon_0\tilde{\lambda}}{2\|\tilde{\A}\tilde{\A}^\top+\tilde{\lambda}\I\|}$ and $\epsilon_{2,2} \leq \frac{\epsilon_0\tilde{\lambda}}{2\|\tilde{\A}\tilde{\A}^\top\|}$, we have
\begin{align}\label{eq:level_2_req}
\|\hat{\z} - \z^*\|_{\W^2+\tilde{\lambda}\W} \leq \epsilon_0 \cdot \|\z^*\|_{\W^2+\tilde{\lambda}\W}.
\end{align}
Notice that $\z^* = \M_2^{-1}\r$. With this result, we apply Lemma~\ref{lem:stable_lanczos} to linear system $(\C^\top\C+\tilde{\lambda}\W)\y = \C^\top\r$ with preconditioner $\M_2$, and obtain $\hat{\y}$ such that 
\begin{align*}
\|\hat{\y} - \y^*\|_{\C^\top\C+\tilde{\lambda}\W} \leq \epsilon_1 \cdot \|\y^*\|_{\C^\top\C+\tilde{\lambda}\W}
\end{align*}
in $O(\sqrt{\kappa_{\M_2}} \log (\kappa_{\M_2} / \epsilon_1)) = O(\log 1/\epsilon_1)$ iterations, since $\kappa_{\M_2} = O(1)$. Notice that this is exactly the requirement we need for Level-1 (see Eq.\eqref{eq:level_1_req}). Moreover, we have the following bounds:
\begin{align*}
\frac{\epsilon_0\tilde{\lambda}}{2\|\tilde{\A}\tilde{\A}^\top+\tilde{\lambda}\I\|} \geq & ~ \frac{\epsilon_0\tilde{\lambda}}{4\|\A\A^\top+\tilde{\lambda}\I\|} \geq \frac{\epsilon_0\lambda_0}{4(\sigma_1^2 + \lambda_0)} = \frac{\epsilon_0\lambda_0/\sigma_n^2}{4\kappa^2 + 4\lambda_0/\sigma_n^2} \\
= & ~ \frac{\epsilon_0(n-l)\bar{\kappa}_{l}^2}{2 \kappa^2 l + 4(n-l)\bar{\kappa}_{l}^2} \geq \frac{\epsilon_0(n-l)}{2\kappa^2 l + 4(n-l)} \geq \frac{\epsilon_0}{2\kappa^2 l + 4} \geq \frac{\epsilon_0}{4\kappa^2 l}
\end{align*}
where we use the fact that $\frac{\lambda_0}{\sigma_n^2} = \frac{2(n-l)}{l}\bar{\kappa}_{l}^2$ and $\bar{\kappa}_{l} \geq 1$, and similarly,
\begin{align*}
\frac{\epsilon_0\tilde{\lambda}}{2\|\tilde{\A}\tilde{\A}^\top\|} \geq \frac{\epsilon_0\tilde{\lambda}}{4\|\A\A^\top\|} \geq \frac{\epsilon_0\lambda_0}{4\|\A\A^\top\|} = \frac{\epsilon_0\lambda_0}{4\sigma_1^2} \geq \frac{\epsilon_0(n-l)\bar{\kappa}_{l,\lambda}^2}{2\kappa^2 l} \geq \frac{\epsilon_0}{4\kappa^2 l}.
\end{align*}
Thus we can choose $\epsilon_{2,1} = \epsilon_{2,2} = \epsilon_0/4\kappa^2 l$ to guarantee that the requirement we need for Level-2 (see Eq.\eqref{eq:level_2_req}) holds. Finally we step into the innermost Level-3.

\paragraph{Part 4: Level-3 Preconditioning.}
As the innermost level (Algorithm~\ref{alg:msp_rec_func_inner}), we use preconditioned Lanczos to solve two positive definite linear systems $(\tilde{\A}^\top\tilde{\A})\u = \r$ and $(\tilde{\A}^\top\tilde{\A} + \tilde{\lambda}\I)\v = \r$, with preconditioners $\M_{3a} \coloneqq \hat{\A}^\top\hat{\A}$ and $\M_{3b} \coloneqq \hat{\A}^\top\hat{\A} + \tilde{\lambda}\I$ respectively, where $\hat{\A} \coloneqq \mathbf{\Phi}\tilde{\A} \in\R^{\phi\times s}$ and $\mathbf{\Phi}$ is the oblivious embedding matrix defined by Lemma~\ref{lem:precondition_sparse} with $\phi = O(s + \log (1/\delta))$. With probability $1 - \delta / 3$ we have $\kappa_{\M_{3a}} = \kappa_{\M_{3b}} = O(1)$. Since in Algorithm~\ref{alg:msp_rec_main} we pre-compute $\M_{3a}^{-1}$ and $\M_{3b}^{-1}$ exactly, the error terms for these two preconditioners are $0$. By using Lemma~\ref{lem:stable_lanczos}, we need $O(\sqrt{\kappa_{\M_{3a}}} \log (\kappa_{\M_{3a}}/\epsilon_{2,1}))$ and $O(\sqrt{\kappa_{\M_{3b}}} \log (\kappa_{\M_{3b}} / \epsilon_{2,2}))$ iterations respectively to guarantee
\begin{align*}
\begin{cases}
\|\hat{\u} - \u^*\|_{\tilde{\A}^\top\tilde{\A}} \leq \epsilon_{2,1} \cdot \|\u^*\|_{\tilde{\A}^\top\tilde{\A}}; \\
\|\hat{\v} - \v^*\|_{\tilde{\A}^\top\tilde{\A}+\tilde{\lambda}\I} \leq \epsilon_{2,2} \cdot \|\v^*\|_{\tilde{\A}^\top\tilde{\A}+\tilde{\lambda}\I}.
\end{cases}
\end{align*}
Furthermore, since we choose $\epsilon_{2,1} = \epsilon_{2,2} = \epsilon_0 / 4\kappa^2 l$, thus the number of iterations in Level-3 are $O(\sqrt{\kappa_{\M_{3a}}} \log (\kappa_{\M_{3a}}/\epsilon_{2,1})) = O(\log 1/\epsilon_{2,1}) = O(\log(\kappa l / \epsilon_0))$. Notice that the above guarantees are exactly the requirements we need for Level-2, thus, by going back from Level-3 $\to$ Level-2 $\to$ Level-1 and applying a union bound over the probabilities of ``$\kappa_{\M} = O(\bar{\kappa}_{l,\lambda}^2 n/l), \kappa_{\M_2} = \kappa_{\M_{3a}} = \kappa_{\M_{3b}} = O(1)$'', we conclude that $\|\tilde{\x} - \x^*\|_{\A^\top\A+\lambda\I} \leq \epsilon \|\x^*\|_{\A^\top\A+\lambda\I}$ holds with probability $1-\delta$.

\paragraph{Part 5: Total Cost.}
Finally we consider the total cost of Algorithm~\ref{alg:msp_rec_main} together with Algorithm~\ref{alg:msp_rec_func_outer} and \ref{alg:msp_rec_func_inner}. We first pre-compute $\tilde{\A} = \A\S^\top$ which takes $O(\nnz(\A) \log(l/\delta))$, and also compute $\hat{\A} = \mathbf{\Phi}\tilde{\A}$ which takes $O(\nnz(\tilde{\A}) \log(s/\delta) + s^2\log^4(s/\delta))$ according to Lemma~\ref{lem:precondition_sparse}. With these, we can compute the Level-3 preconditioners $\M_{3a}^{-1} = (\hat{\A}^\top\hat{\A})^{-1}$ and $\M_{3b}^{-1}=(\hat{\A}^\top\hat{\A}+\tilde{\lambda}\I)^{-1}$ in time $O(s^{\omega})$. To conclude, line 2-6 of Algorithm~\ref{alg:msp_rec_main} takes in total:
\begin{align*}
O(\nnz(\tilde{\A}) \log(s/\delta) + s^2\log^4(s/\delta) + s^{\omega}) = O(\nnz(\A)\log^2(l/\delta) + l^{\omega}\log^\omega(l/\delta)).
\end{align*}
Within each iteration, we need to compute matrix-vector products of matrix $\A^\top\A+\lambda\I$ which takes $O(\nnz(\A))$, and call $\mathsf{SolveM1}$ as in Algorithm~\ref{alg:msp_rec_func_outer}. Denote $T_{\M^{-1}}$ as the time of calling $\mathsf{SolveM1}$, and denote $T_{\M_2^{-1}}$ as the time of calling $\mathsf{SolveM2}$ as in Algorithm~\ref{alg:msp_rec_func_inner}. Observe that in each iteration of Algorithm~\ref{alg:msp_rec_func_inner} we need to compute $\tilde{\A}^\top\tilde{\A}\r = \S\A^\top\A\S^\top\r$ for any given vector $\r$, which takes $O(n\log(l/\delta) + \nnz(\A)) = O(\nnz(\A)\log(l/\delta))$. Since there are $O(\log (\kappa l/\epsilon_0))$ iterations, the complexity for Algorithm~\ref{alg:msp_rec_func_inner} becomes
\begin{align*}
T_{\M_2^{-1}} = O\left(\nnz(\A)\log(l/\delta) \log(\kappa l/\epsilon_0) \right).
\end{align*}
In each iteration of Algorithm~\ref{alg:msp_rec_func_outer} we need to compute $\tilde{\A}^\top\tilde{\A}\r$ which is similar, thus we have
\begin{align*}
T_{\M^{-1}} = & ~ O((\nnz(\A)\log(l/\delta) +T_{\M_2^{-1}})\cdot\log 1/\epsilon_1) \\
= & ~ O(\nnz(\A)\log(l/\delta)\log(\kappa l/\epsilon_0)\log 1/\epsilon_1) \\
= & ~ O(\nnz(\A)\log(l/\delta)\log(\kappa l/\epsilon_0) \log (\kappa / \epsilon_0)).
\end{align*}
Finally, since there are $t_{\max} = O(\bar{\kappa}_{l,\lambda}\sqrt{n/l}\cdot \log (\bar{\kappa}_{l,\lambda} n/\epsilon))$ iterations in total, we compute the overall complexity of Algorithm~\ref{alg:msp_rec_main} as the following:
\begin{align*}
& ~ O(\underbrace{\nnz(\A)\log^2(l/\delta) + l^{\omega}\log^\omega(l/\delta)}_{\text{preconditioning}} + \underbrace{(\nnz(\A)+T_{\M^{-1}})\cdot \bar{\kappa}_{l,\lambda}\sqrt{n/l} \log (\bar{\kappa}_{l,\lambda} n/\epsilon)}_{\text{iteration}}) \\
= & ~ O(\nnz(\A)\log^2(l/\delta) + l^{\omega}\log^\omega(l/\delta) + \nnz(\A)\log(l/\delta)\log(\kappa l/\epsilon_0) \log(\kappa/\epsilon_0)\cdot \bar{\kappa}_{l,\lambda}\sqrt{n/l}\log (\bar{\kappa}_{l,\lambda} n/\epsilon))\\
= & ~ O(\nnz(\A)\log^2(l/\delta)\log^2(\kappa/\epsilon_0)\cdot \bar{\kappa}_{l,\lambda}\sqrt{n/l}\log (\bar{\kappa}_{l,\lambda} n/\epsilon) + l^{\omega} \log^\omega (l/\delta)) \\
= & ~ \tilde{O}\left(\nnz(\A)\sqrt{\frac{n}{l}} \bar{\kappa}_{l,\lambda}\log^3 (\kappa/\epsilon) + l^{\omega}\right).
\end{align*}

\subsection{Proof of Lemma \ref{lem:spec_approx}}
For $\tilde{\lambda} = \lambda+\lambda_0\geq \frac{2}{l}\sum_{i>l}\sigma_i^2$, denote $\Sig_{\lambda} = \A\A^\top+\tilde{\lambda}\I$ and $\B = \A^\top\Sig_{\tilde{\lambda}}^{-1/2}$. Here we list some basic properties of $\B$:
\begin{align*}
\begin{cases}
\|\B\|^2 = \|\B\B^\top\| = \|\A^\top\Sig_{\tilde{\lambda}}^{-1}\A\| \leq 1 \\
\|\B\|_F^2 = \tr(\B\B^\top) = \tr(\A\A^\top(\A\A^\top+\tilde{\lambda}\I)^{-1}) =: d_{\tilde{\lambda}}
\end{cases}
\end{align*}
By using Lemma~\ref{lem:cohen_bound}, we know that the sparse subspace embedding matrix with $s= O(d_{\tilde{\lambda}} \log(d_{\tilde{\lambda}}/\delta))$ and each column having $\gamma = O(\log(d_{\tilde{\lambda}}/\delta))$ satisfies the $(\frac{1}{6}, \delta, d_{\tilde{\lambda}}, \log(d_{\tilde{\lambda}}/\delta))$-OSE moment property, thus by using Lemma~\ref{lem:OSE_req}, with probability $1-\delta$ we have
\begin{align*}
\|\B^\top\S^\top\S\B - \B^\top\B\| \leq \frac{1}{6} (\|\B\|^2 + 2\|\B\|_F^2/d_{\tilde{\lambda}}) \leq \frac{1}{2}.
\end{align*}
Conditioned on this guarantee, we have the following holds:
\begin{align*}
& ~ \B^\top\B - \frac{1}{2}\I \preceq \B^\top\S^\top\S\B \preceq \B^\top\B+\frac{1}{2}\I \\
\Leftrightarrow & ~ \Sig_{\tilde{\lambda}}^{-1/2}\A\A^\top\Sig_{\tilde{\lambda}}^{-1/2} - \frac{1}{2}\I \preceq \Sig_{\tilde{\lambda}}^{-1/2}\A\S^\top\S\A^\top \Sig_{\tilde{\lambda}}^{-1/2} \preceq \Sig_{\tilde{\lambda}}^{-1/2}\A\A^\top\Sig_{\tilde{\lambda}}^{-1/2}+\frac{1}{2}\I \\
\Leftrightarrow & ~ \A\A^\top - \frac{1}{2}(\A\A^\top+\tilde{\lambda}\I) \preceq \A\S^\top\S\A^\top \preceq \A\A^\top+\frac{1}{2}(\A\A^\top+\tilde{\lambda}\I) \\
\Leftrightarrow & ~ \frac{1}{2} (\A\A^\top+\tilde{\lambda}\I) \preceq \A\S^\top\S\A^\top+\tilde{\lambda}\I \preceq \frac{3}{2}(\A\A^\top+\tilde{\lambda}\I) \\
\Rightarrow & ~ \tilde{\A}\tilde{\A}^\top + \tilde{\lambda}\I \approx_2 \A\A^\top+\tilde{\lambda}\I.
\end{align*}
Finally we bound $d_{\tilde{\lambda}}$ by $l$ by using the fact that $\tilde{\lambda} \geq \frac{2}{l}\sum_{i>l}\sigma_i^2$:
\begin{align*}
d_{\tilde{\lambda}} = \sum_{i=1}^n \frac{\sigma_i^2}{\sigma_i^2 + \tilde{\lambda}} = \sum_{i=1}^l \frac{\sigma_i^2}{\sigma_i^2 + \tilde{\lambda}} + \sum_{i=l+1}^n \frac{\sigma_i^2}{\sigma_i^2 + \tilde{\lambda}} \leq l+\sum_{i=l+1}^n \frac{\sigma_i^2 l}{2\sum_{j>l}\sigma_j^2} = \frac{3}{2}l,
\end{align*}
thus we conclude that $s= O(l \log(l/\delta))$ and $\gamma=O(\log(l/\delta))$ suffices for the spectral approximation.

\section{Analysis of Inexact Preconditioned Lanczos Iteration}\label{s:lanczos}
Our \AlgFullName{} (\AlgName{}) algorithms described in \Cref{s:psd-proof,sec:proof_rec} rely on multiple calls to a preconditioned Lanczos algorithm. 
This method requires routines for matrix-vector multiplication with $\A$, as well as matrix-vector multiplication with the inverse of the preconditioner, $\M$. A key feature of our framework is that, in many cases, we do not explicitly construct $\M^{-1}$, but rather compute $\M^{-1}\r$ for a given vector $\r\in \R^{n}$ \emph{approximately}. It is not clear a priori that the outer iterative method used in our algorithms is robust to this approximation, i.e. that Lanczos still converges as quickly as the case when $\M^{-1}$ is applied exactly. In particular, while there has been significant work showing that the \emph{unpreconditioned} Lanczos method is robust to inexact matrix-vector multiplications, and actually to round-off error in all arithmetic operations \cite{paige1976error,greenbaum_89,musco_musco_sidford_18}, there is less work on analyzing \emph{preconditioned} Lanczos. 

One line of work studies the closely related Preconditioned Conjugate Gradient (PCG) method, which returns exactly the same output as Lanczos when all computations are performed exactly \cite{golub1999inexact}. That work establishes that PCG still converges when $\M^{-1}$ is applied approximately. However, it has the disadvantage of only providing a \emph{non-accelerated} convergence rate. I.e., it establishes convergence depending on the condition number $\kappa_\M = \kappa(\M^{-1/2}\A\M^{-1/2})$ of the preconditioned system. On the other hand, if $\M^{-1}$ is applied exactly, a dependence on $\sqrt{\kappa_\M}$ is achievable. This better $\sqrt{\kappa_\M}$ dependence is critical for obtaining our final results. Other efforts to analyze the stability of iterative methods also fail to obtain an accelerated rate. For example, there has been interesting recent work on the stability of preconditioned gradient descent \cite{epperly23}, but any gradient descent type method will have convergence depending on $\kappa_\M$ instead of $\sqrt{\kappa_\M}$.

We address this issue by proving a new result for the Lanczos method with inexact preconditioning that maintains the accelerated rate. We suspect that a similar result can be achieved for PCG, which is more commonly used in practice than Lanczos, although we leave this effort to future work. As discussed in \Cref{s:techniques}, we also note that our analysis of Lanczos is not strictly necessary for this paper. For example, \cite{SpielmanTeng:2014} give an analysis of inexact preconditioned \emph{Chebyshev iteration}, which could have been used instead. However, the preconditioned Lanczos method is easier to implement (e.g., does not require eigenvalue bounds) and much more widely used in practice since it tends to converge much more quickly than Chebyshev iteration, 
even though it has the same worst-case guarantees \cite{lanczos_near_opt}. As such, we expect that our stability analysis of the preconditioned Lanczos method will be of general interest beyond this work.

\subsection{Inexact Preconditioned Lanczos Method}
Concretely, in this section we analyze the preconditioned Lanczos method described in \Cref{alg:lanczos_left_precon}.
The specific implementation of Lanczos in \Cref{alg:lanczos_left_precon} is based on an implementation of an unpreconditioned Lanczos method whose numerical stability was analyzed in seminal work by Christopher Paige \cite{paigeThesis,paige1976error}. Paige's work was later used to show that Lanczos can stably solve linear systems and more general matrix function problems on finite precision computers \cite{greenbaum_89,musco_musco_sidford_18}. Our analysis will rely on these previous results. 
Formally, we prove the following result:
\begin{theorem}
\label{thm:stable_lanczos}
Consider solving $\A\x=\b$ for positive definite $\A$ using \Cref{alg:lanczos_left_precon} provided with a function $\mathsf{SolveM}$ that, for some positive semidefinite preconditioner $\M$ and any vector $\r$ returns
\begin{align}
\label{eq:m_assump_informal}
\|\mathsf{SolveM}(\r) - \M^{-1}\r\|_{\M} \leq \epsilon_0 \cdot \|\M^{-1}\r\|_{\M}.\end{align}
If $\epsilon_0 \leq (\frac{\epsilon}{\kappa_{\M} n})^c$ for a fixed constant $c>0$, where ${\kappa_{\M}}$ is the condition number of $\M^{-1/2}\A\M^{-1/2}$, then the algorithm outputs $\x_t$ such that $\|\x_t - \A^{-1}\b\|_{\A} \leq \epsilon\|\A^{-1}\b\|_{\A}$ in $t= O(\sqrt{\kappa_{\M}} \log (\kappa_{\M}/\epsilon))$ iterations.
\end{theorem}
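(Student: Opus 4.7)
\medskip
\noindent
\textbf{Proof proposal.}
The plan is to reduce the preconditioned setting to the unpreconditioned setting via the standard change of variables, and then invoke existing finite-precision stability results for unpreconditioned Lanczos (in particular those of Paige, Greenbaum, and \cite{musco_musco_sidford_18}), where the role of machine epsilon is played by $\epsilon_0$. Concretely, set $\widehat\A = \M^{-1/2}\A\M^{-1/2}$, $\widehat\b = \M^{-1/2}\b$, and $\widehat\x = \M^{1/2}\x$, so that $\A\x=\b$ is equivalent to $\widehat\A\widehat\x = \widehat\b$, and $\kappa(\widehat\A)=\kappa_{\M}$. A well-known calculation shows that, when $\M^{-1}$ is applied \emph{exactly}, preconditioned Lanczos (\Cref{alg:lanczos_left_precon}) is mathematically equivalent to unpreconditioned Lanczos on $(\widehat\A,\widehat\b)$ with iterates $\widehat\v_i = \M^{1/2}\v_i$, and the natural error norm $\|\widehat\x_t - \widehat\A^{-1}\widehat\b\|_{\widehat\A}$ coincides exactly with $\|\x_t - \A^{-1}\b\|_{\A}$. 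I would begin by verifying this identification carefully for the specific variant in \Cref{alg:lanczos_left_precon}.

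The next step is to translate the inexact-preconditioner assumption \eqref{eq:m_assump_informal} into a pointwise relative error on matrix--vector products with $\widehat\A$. Whenever the algorithm forms a quantity of the form $\M^{-1}\A\v_i$, the implemented output $\hat\w$ satisfies $\|\hat\w - \M^{-1}\A\v_i\|_\M \le \epsilon_0\,\|\M^{-1}\A\v_i\|_\M$, which is equivalent to
\begin{equation*}
\bigl\|\M^{1/2}\hat\w - \widehat\A\,\widehat\v_i\bigr\|
\;\le\; \epsilon_0\,\|\widehat\A\,\widehat\v_i\|.
\end{equation*}
Thus, in the transformed problem, every matrix--vector product with $\widehat\A$ is computed with a relative Euclidean-norm error of at most $\epsilon_0$. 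All other operations in \Cref{alg:lanczos_left_precon} (inner products, scalar multiplications, axpys) are performed exactly in the real RAM model, so the only source of perturbation is the inexact application of $\widehat\A$.

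The third step is to invoke the finite-precision analyses of unpreconditioned Lanczos for linear systems. The results of \cite{paige1976error,greenbaum_89,musco_musco_sidford_18} establish, roughly, that Lanczos on a positive definite $\widehat\A$ with condition number $\kappa_\M$ still converges at the accelerated rate
\begin{equation*}
\|\widehat\x_t - \widehat\A^{-1}\widehat\b\|_{\widehat\A}
\;\le\; \epsilon\,\|\widehat\A^{-1}\widehat\b\|_{\widehat\A}
\quad\text{after}\quad t = O(\sqrt{\kappa_\M}\,\log(\kappa_\M/\epsilon))
\end{equation*}
iterations, provided every matvec is performed with relative error at most $(\epsilon/(\kappa_\M n))^c$ for a fixed constant $c$. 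Under our hypothesis $\epsilon_0 \le (\epsilon/(\kappa_\M n))^c$, this is precisely the regime in which those theorems apply, so we may quote them as a black box. Finally, using the norm identity $\|\widehat\x_t - \widehat\A^{-1}\widehat\b\|_{\widehat\A} = \|\x_t - \A^{-1}\b\|_{\A}$ from the first step (together with $\widehat\A^{-1}\widehat\b = \M^{1/2}\A^{-1}\b$), the claimed $\A$-norm bound on $\x_t$ follows immediately.

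The main obstacle will be Step 2/3: confirming that the inexact preconditioner solves in \Cref{alg:lanczos_left_precon} really do enter the recurrence only through relative-error perturbations of $\widehat\A$-matvecs, in exactly the form assumed by the existing finite-precision analyses. In particular, Paige-style analyses require the perturbation to affect specific lines of the three-term recurrence in a controlled way, and one must check that the $\M$-norm error in $\hat\w$ does not blow up in Lanczos's implicit normalization steps. If a direct citation of \cite{musco_musco_sidford_18} does not cover this precise perturbation model, the fallback is to reprove the key invariants of that paper (local orthogonality of the Lanczos basis up to $O(\epsilon_0)$, bounded norm growth, and a backward-stable tridiagonalization) with $\epsilon_0$ replacing machine epsilon throughout, which is a mechanical but somewhat tedious adaptation.
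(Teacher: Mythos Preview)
Your high-level strategy---reduce to unpreconditioned Lanczos on $\widehat\A = \M^{-1/2}\A\M^{-1/2}$ and invoke the Paige/Greenbaum/\cite{musco_musco_sidford_18} finite-precision analysis with $\epsilon_0$ playing the role of machine epsilon---is exactly the approach the paper takes. However, your Step~2 mischaracterizes where the error enters, and this is not merely a bookkeeping detail.

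\Cref{alg:lanczos_left_precon} never forms $\M^{-1}\A\v$; it applies $\A$ \emph{exactly} to $\overline{\q_i}$, performs the two orthogonalization subtractions (against $\underline{\q_{i-1}}$ and $\underline{\q_i}$), and only then calls $\mathsf{SolveM}$ on the result $\underline{\w_i}$. In the transformed picture the $\mathsf{SolveM}$ error therefore lands on the step $\w_i \leftarrow \u_i - \alpha_i\q_i$, not on the $\widehat\A$-matvec. Moreover, the algorithm carries both $\overline{\q_i}\approx\M^{-1/2}\q_i$ and $\underline{\q_i}\approx\M^{1/2}\q_i$; only the former is refreshed by $\mathsf{SolveM}$, so a second error contribution appears at the subtraction $-\beta_i'\underline{\q_{i-1}}$ in the next iteration's line~6. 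Hence your claim that ``all other operations are performed exactly, so the only source of perturbation is the inexact application of $\widehat\A$'' is false: the perturbation surfaces at several distinct lines of the three-term recurrence, each of which must be matched to the specific additive tolerance Paige allows at that line (his equations (8) and (12)). Your fallback plan is what is actually required, and the paper executes it: a line-by-line induction showing that for a suitable choice of allowable round-off in the symmetric algorithm one can maintain $\alpha_i'=\alpha_i$, $\beta_i'=\beta_i$, and $\overline{\q_i}=\M^{-1/2}\q_i$ \emph{exactly}, together with a controlled drift $\|\underline{\q_i}-\M^{1/2}\q_i\|_{\M^{-1}}\le\mach$.

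One further subtlety you omit: the very first call $\mathsf{SolveM}(\b)$ perturbs the starting vector, so the equivalent symmetric run is on $\b' = \b + \M\Delta_0$ rather than $\b$. The paper closes with a short triangle-inequality argument converting the guarantee for $\A^{-1}\b'$ back to one for $\A^{-1}\b$; this costs an extra factor of $\sqrt{\kappa_{\M}}$ that must be absorbed into the polynomial bound on $\epsilon_0$.
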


\begin{algorithm}[!ht]
\caption{Preconditioned Lanczos Iteration $\textsc{}$}
\label{alg:lanczos_left_precon}
\begin{algorithmic}[1]
\State \textbf{Input: }positive definite $\A\in\R^{n\times n}$, vector $\b \in \R^n$, function $\mathsf{SolveM}$, \# of iterations $t$;
\State \textbf{Output: }vector $\x_t\in\R^n$ that approximates $\A^{-1}\b$;
\State $\overline{\w_0} \gets \mathsf{SolveM}(\b), z' = \sqrt{\langle \b, \overline{\w_0}\rangle};$
\State $\underline{\q_0} = \mathbf{0}, \overline{\q_1} = \w' / z, \underline{\q_1} = \b / z, \beta_1' = 0$;
\For{$i = 1, \ldots, t$}
\State $\underline{\u_{i}} \leftarrow \A\overline{\q_i} - \beta_i' \underline{\q_{i-1}}$;
\State $\alpha_i' \leftarrow \langle \underline{\u_{i}}, \overline{\q_i}\rangle$;
\State $\underline{\w_{i}} \leftarrow \underline{\u_{i}} - \alpha_i' \underline{\q_{i}}$,  $\overline{\w_i} = \mathsf{SolveM}(\underline{\w_{i}})$;
\State $\beta_{i+1}' \leftarrow \langle \underline{\w_{i}},\overline{\w_i}\rangle^{1/2}$;
\State \textbf{if} $\beta_{i+1}' == 0$ \textbf{then}
\State \quad \textbf{break};
\State \textbf{end if}
\State $\underline{\q_{i+1}} \leftarrow \underline{\w_{i}} / \beta_{i+1}'$, $\overline{\q_{i+1}} \leftarrow \overline{\w_{i}} / \beta_{i+1}'$;  
\EndFor
\State $\T'\in \R^{t\times t} \leftarrow \left[\begin{array}{cccc}
\alpha_1' & \beta_2' & & 0 \\
\beta_2' & \alpha_2' & \ddots & \\
& \ddots & \ddots & \beta_t' \\
0 & & \beta_t' & \alpha_t'
\end{array}\right], \quad \overline{\mathbf{Q}}\in \R^{n\times t} \leftarrow\left[\begin{array}{lll}
\overline{\mathbf{q}_1} & \ldots & \overline{\mathbf{q}_t}
\end{array}\right]$; \\
\Return $\x_t' = z' \overline{\Q} \T'^{-1} \e_1$ where $\e_1\in \R^t$ denotes the first standard basis vector.
\end{algorithmic}
\end{algorithm}

\begin{algorithm}[!ht]
\caption{Symmetric Preconditioned Lanczos Iteration.}
\label{alg:lanczos_sym_precon}
\begin{algorithmic}[1]
\State \textbf{Input: }positive definite $\A\in\R^{n\times n}$, vector $\b \in \R^n$, positive definite  preconditioner $\M\in\R^{n\times n}$, \# of iterations $t$;\State \textbf{Output: }vector $\y\in\R^n$ that approximates $\A^{-1}\b$;
\State $\w_0 \gets \M^{-1/2}\b$, $z \gets \|\w_0\|$;
\State $\q_0 \gets \mathbf{0}, \q_1 \gets \w / z, \beta_1 = 0$;
\For{$i = 1, \ldots, t$}
\State $\u_{i} \leftarrow \M^{-1/2}\A\M^{-1/2}\q_i - \beta_i \q_{i-1}$;
\State $\alpha_i \leftarrow \langle \u_{i}, \q_i\rangle$;
\State $\w_{i} \leftarrow \u_{i} - \alpha_i \q_{i}$;
\State $\beta_{i+1} \leftarrow \|\w_{i}\|$;
\State \textbf{if} $\beta_{i+1} == 0$ \textbf{then}
\State \quad \textbf{break};
\State \textbf{end if}
\State $\q_{i+1}\leftarrow \w_{i} / \beta_{i+1}$; 
\EndFor
\State $\T\in \R^{t\times t} \leftarrow \left[\begin{array}{cccc}
\alpha_1 & \beta_2 & & 0 \\
\beta_2 & \alpha_2 & \ddots & \\
& \ddots & \ddots & \beta_k \\
0 & & \beta_k & \alpha_k
\end{array}\right], \quad \mathbf{Q}\in \R^{n\times t} \leftarrow\left[\begin{array}{lll}
\mathbf{q}_1 & \ldots & \mathbf{q}_k
\end{array}\right]$; \\
\Return $\x_t = \M^{-1/2}(z\Q \T^{-1} \e_1)$, where $\e_1\in \R^t$ denotes the first standard basis vector.
\end{algorithmic}
\end{algorithm}

We prove Theorem \ref{thm:stable_lanczos} in an indirect way by comparing the output of \Cref{alg:lanczos_left_precon} to the output of a more easily analyzed variant of Lanczos with \emph{symmetric preconditioning}, which is given as \Cref{alg:lanczos_sym_precon}. This variant assumes access to the symmetrically preconditioned matrix $\M^{-1/2}\A\M^{-1/2}$. To motivate the algorithm, observe that:
\begin{align*}
    \A^{-1}\b = \M^{-1/2}(\M^{-1/2}\A\M^{-1/2})^{-1}\M^{-1/2}\b.
\end{align*}
\Cref{alg:lanczos_sym_precon} is obtained by just running the standard unpreconditioned Lanczos method (see e.g., \cite{musco_musco_sidford_18}) with right hand side $\M^{-1/2}\b$ and matrix $(\M^{-1/2}\A\M^{-1/2})$, and then multiplying the final result by $\M^{-1/2}$.
In our application, we only have access to $\M^{-1}$, so we cannot implement \Cref{alg:lanczos_sym_precon} directly: it is only used for our analysis of \Cref{alg:lanczos_left_precon}. In particular, we will take advantage of the fact that it can be easily shown that \Cref{alg:lanczos_sym_precon} is robust to implementation in \emph{finite precision arithmetic}. This is because, unlike  \Cref{alg:lanczos_left_precon}, \Cref{alg:lanczos_sym_precon}  is formally equivalent to {unpreconditioned} Lanczos run with a particular choice of inputs. The robustness of the \emph{unpreconditioned} Lanczos method to round-off errors has been widely studied.  

Formally, suppose all basic arithmetic operators are computed up to \emph{relative accuracy}. I.e., for any operation $x\circ y$ where $\circ \in \{+,-,\times,\div\}$, the computer running \Cref{alg:lanczos_sym_precon} returns a result 
\begin{align}
\label{eq:finite_precision}
    z &= (1+\delta)(x\circ y) & &\text{such that} & |\delta|&\leq \mach,
\end{align}
where $\mach$ is the machine precision. Similarly, for any $x$, the computer can compute $z = (1+\delta)(x\circ y)$  where $|\delta| \leq \mach$. In this setting, we have the following bound:
\begin{fact}[Corollary of Theorem 1 \cite{musco_musco_sidford_18}]
\label{f:standard_system_stable}
Suppose \Cref{alg:lanczos_sym_precon} is given access to the exact value of $\M^{-1/2}\A\M^{-1/2}$ and $\M^{-1/2}\b$ and is then implemented on a finite precision computer with  machine precision $\mach = \poly(\epsilon/n\kappa_{\M})$, except that the final multiplication by $\M^{-1/2}$ on line 16 is performed exactly. Here $\kappa_\M = \kappa(\M^{-1/2}\A\M^{-1/2})$. Then the algorithm will return $\x_t$ such that $\|\x_t - \A^{-1}\b\|_{\A} \leq \epsilon\|\A^{-1}\b\|_{\A}$ after $t= O\left(\sqrt{\kappa_{\M}} \log (\kappa_{\M}/\epsilon)\right)$ iterations.
\end{fact}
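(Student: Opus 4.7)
The plan is to derive Fact~\ref{f:standard_system_stable} by observing that Algorithm~\ref{alg:lanczos_sym_precon}, despite its ``preconditioned'' name, is algorithmically identical to the ordinary unpreconditioned Lanczos iteration applied to the transformed problem $\hat{\A}\hat{\x} = \hat{\b}$, where $\hat{\A} \coloneqq \M^{-1/2}\A\M^{-1/2}$ and $\hat{\b} \coloneqq \M^{-1/2}\b$, followed by a single closing multiplication of the final iterate by $\M^{-1/2}$. Inspecting the pseudocode line by line, the Krylov basis is built by repeated multiplication with $\hat{\A}$ starting from $\hat{\b}/\|\hat{\b}\|$, the three-term recurrence produces the same tridiagonal $\T$ and orthonormal basis $\Q$ that unpreconditioned Lanczos on $(\hat{\A},\hat{\b})$ would produce, and the output is $\x_t = \M^{-1/2}(z\Q\T^{-1}\e_1) = \M^{-1/2}\hat{\x}_t$, where $\hat{\x}_t$ is exactly the Lanczos approximation to $\hat{\A}^{-1}\hat{\b}$. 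Under the stated precision model, in which $\hat{\A}$ and $\hat{\b}$ are available to arithmetic accuracy $\mach$ and only the final multiplication by $\M^{-1/2}$ is performed exactly, the recurrence loop is therefore nothing more than an unpreconditioned Lanczos system solve on $(\hat{\A},\hat{\b})$ run in finite precision.

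Next, I invoke Theorem~1 of \cite{musco_musco_sidford_18} directly on the system $\hat{\A}\hat{\x} = \hat{\b}$. That theorem guarantees that with machine precision $\mach = \poly(\epsilon/(n\kappa(\hat{\A})))$, after $t = O(\sqrt{\kappa(\hat{\A})}\log(\kappa(\hat{\A})/\epsilon))$ iterations, the finite-precision unpreconditioned Lanczos method returns $\hat{\x}_t$ satisfying
\begin{align*}
\|\hat{\x}_t - \hat{\A}^{-1}\hat{\b}\|_{\hat{\A}} \leq \epsilon\,\|\hat{\A}^{-1}\hat{\b}\|_{\hat{\A}}.
\end{align*}
Since $\kappa(\hat{\A}) = \kappa_{\M}$ by the definition of $\kappa_{\M}$, both the precision requirement $\poly(\epsilon/(n\kappa_{\M}))$ and the iteration count $O(\sqrt{\kappa_{\M}}\log(\kappa_{\M}/\epsilon))$ match the statement of the Fact.

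It then remains to translate the relative error bound from the $\hat{\A}$-norm to the $\A$-norm. The key identity is that $\M^{-1/2}$ is an isometry from $(\R^n, \|\cdot\|_{\hat{\A}})$ onto $(\R^n, \|\cdot\|_{\A})$: for any $\v$, $\|\M^{-1/2}\v\|_{\A}^2 = \v^\top \M^{-1/2}\A\M^{-1/2}\v = \|\v\|_{\hat{\A}}^2$. Combined with the algebraic identity $\hat{\A}^{-1}\hat{\b} = \M^{1/2}\A^{-1}\b$, so that $\M^{-1/2}\hat{\A}^{-1}\hat{\b} = \A^{-1}\b$, this gives
\begin{align*}
\|\x_t - \A^{-1}\b\|_{\A} = \|\M^{-1/2}(\hat{\x}_t - \hat{\A}^{-1}\hat{\b})\|_{\A} = \|\hat{\x}_t - \hat{\A}^{-1}\hat{\b}\|_{\hat{\A}},
\end{align*}
and analogously $\|\A^{-1}\b\|_{\A} = \|\hat{\A}^{-1}\hat{\b}\|_{\hat{\A}}$. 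The claimed bound $\|\x_t - \A^{-1}\b\|_{\A} \leq \epsilon\,\|\A^{-1}\b\|_{\A}$ follows immediately.

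The main obstacle is really just the careful bookkeeping in the first paragraph: verifying that every intermediate quantity computed by Algorithm~\ref{alg:lanczos_sym_precon} (namely $\alpha_i, \beta_i, \q_i, \T, \Q$, and the initial scalar $z$) coincides, even under finite-precision arithmetic, with the corresponding quantity produced by standard unpreconditioned Lanczos on $(\hat{\A},\hat{\b})$. Once this identification is made, no new stability analysis is needed; the remaining two steps are a direct appeal to \cite{musco_musco_sidford_18} and a routine change-of-variables computation using the $\M^{-1/2}$ isometry.
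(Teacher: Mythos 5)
Your overall route is the same as the paper's: identify Algorithm~\ref{alg:lanczos_sym_precon} as unpreconditioned Lanczos run on $(\hat{\A},\hat{\b})=(\M^{-1/2}\A\M^{-1/2},\M^{-1/2}\b)$ followed by one exact multiplication by $\M^{-1/2}$, invoke the finite-precision analysis of \cite{musco_musco_sidford_18}, and convert norms. Your norm conversion at the end is in fact cleaner than the paper's: the exact isometry $\|\M^{-1/2}\v\|_{\A}=\|\v\|_{\hat{\A}}$ together with $\M^{-1/2}\hat{\A}^{-1}\hat{\b}=\A^{-1}\b$ transfers a relative $\hat{\A}$-norm bound to a relative $\A$-norm bound with no loss, whereas the paper's conversion pays factors of $\sqrt{\lambda_1}$ and $\sqrt{\lambda_n}$.

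The gap is in the middle step. Theorem~1 of \cite{musco_musco_sidford_18} does not directly state the guarantee you attribute to it, namely $\|\hat{\x}_t-\hat{\A}^{-1}\hat{\b}\|_{\hat{\A}}\le\epsilon\|\hat{\A}^{-1}\hat{\b}\|_{\hat{\A}}$. It is a general matrix-function result: it bounds the \emph{Euclidean} norm error $\|\y-f(\hat{\A})\hat{\b}\|$ in terms of $\|\hat{\b}\|$ and the uniform error of the best degree-$t$ polynomial approximation to $f$ on (a slight enlargement of) the spectrum. To use it here you must instantiate $f(x)=1/x$, invoke the Chebyshev-based fact that a degree $O(\sqrt{\kappa_{\M}}\log(1/\delta))$ polynomial approximates $1/x$ on $[\lambda_n,\lambda_1]$ to accuracy $\delta/\lambda_n$, and then translate the resulting bound $\|\y-\hat{\A}^{-1}\hat{\b}\|\lesssim \delta\,\|\hat{\b}\|/\lambda_n$ into a relative $\hat{\A}$-norm (or $\A$-norm) bound. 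That translation costs a factor of $\kappa_{\M}$ (from $\|\hat{\b}\|/\lambda_n$ versus $\|\hat{\A}^{-1}\hat{\b}\|_{\hat{\A}}$), which is why the paper sets the inner accuracy to $\delta=\poly(\epsilon/\kappa_{\M})$ rather than $\epsilon$; this is harmless for the iteration count since $\log(\kappa_{\M}\cdot\kappa_{\M}/\epsilon)=O(\log(\kappa_{\M}/\epsilon))$, but it is bookkeeping that must be done rather than assumed. As written, your proof asserts the conclusion of this calculation as the content of the cited theorem.
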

\begin{proof}
This fact is obtained by applying  Theorem 1 of \cite{musco_musco_sidford_18} to solving the positive definite preconditioned system $\M^{-1/2}\A\M^{-1/2}\x = \M^{-1/2}\b$, which has solution $\M^{1/2}\A^{-1}\b$. Observe that \Cref{alg:lanczos_sym_precon} is exactly equivalent to the Lanczos method analyzed in that work, except for the last step: to return an approximate solution to $\M^{-1/2}\A\M^{-1/2}\x = \M^{-1/2}\b$, we should return just 
\begin{align*}
    \y = z\Q \T^{-1} \e_1
\end{align*} instead of $\x_t = \M^{-1/2}(z\Q \T^{-1} \e_1)$. 
Let $\lambda_1 \geq \ldots \geq \lambda_n > 0$ denote the eigenvalues of the $\M^{-1/2}\A\M^{-1/2}$. We apply Theorem 1 with $f(x) = 1/x$ and $\eta = \lambda_n/2$, and use the well known fact (see, e.g. \cite{Shewchuk:1994}) that for any $\delta < 1$,  there exists a degree $O(\sqrt{\kappa_{\M}}\log(1/\delta))$ polynomial such that:
\begin{align*}
\max_{x\in [\lambda_n, \lambda_1]} |p(x) - 1/x| \leq \frac{\delta}{\lambda_n}.
\end{align*}
Setting $\delta = \poly(\epsilon/\kappa_{\M})$ and $\mach = \poly(n\kappa_{\M}/\epsilon)$, Theorem 1 from \cite{musco_musco_sidford_18} implies that after $ t= O(\sqrt{\kappa_{\M}} \log(\kappa_{\M}/\epsilon))$ iterations,
\begin{align}
\label{eq:direct_thm1_bound}
\|\y - \M^{1/2}\A^{-1}\b\| \leq \frac{\epsilon}{\kappa_{\M} \lambda_n} \|\M^{-1/2}\b\|.
\end{align}
Using that $\x_t = \M^{-1/2}\y$ exactly (since we assumed the last multiplication by $\M^{-1/2}$ is perfomed exactly on Line 16, we have:
\begin{align*}
\|\y - \M^{1/2}\A^{-1}\b\|  &= \|(\A^{1/2}\M^{-1/2})^{-1}\A^{1/2}\M^{-1/2}\y - (\A^{1/2}M^{-1/2})^{-1}\A^{-1/2}\b\| \\
&\geq \frac{1}{\sqrt{\lambda_1}} \|\A^{1/2}\x_t - \A^{-1/2}\b\| = \frac{1}{\sqrt{\lambda_1}}  \|\x_t - \A^{-1}\b\|_{\A}.
\end{align*}
We further have that:
\begin{align*}
    \frac{\epsilon}{\kappa_{\M} \lambda_n} \|\M^{-1/2}\b\| &= \frac{\epsilon}{\kappa_{\M} \lambda_n} \|(\A^{-1/2}\M^{1/2})^{-1}\A^{-1/2}\b\|\leq \frac{\epsilon}{\kappa_{\M} \lambda_n}\sqrt{\lambda_1}\|\A^{-1}\b\|_{\A}. 
\end{align*}
Plugging both of these bounds into \eqref{eq:direct_thm1_bound}, we conclude that, as desired. 
\begin{align*}
\|\x_t - \A^{-1}\b\|_{\A} &\leq \frac{\epsilon{\lambda_1}}{\kappa_{\M} \lambda_n}\|\A^{-1}\b\|_{\A} = \epsilon\|\A^{-1}\b\|_{\A}.
\end{align*}
\end{proof}
We can actually slightly strengthen \Cref{f:standard_system_stable}. In particular, the analysis in \cite{musco_musco_sidford_18} is based in a black-box way on a bound from \cite{paige1976error} on the output of the  general Lanczos tridiagonalization method in finite precision. This result is stated as Theorem 8 in \cite{musco_musco_sidford_18} and is the first theorem in \cite{paige1976error}. So, the fact actually holds for \emph{any sequence of round-off errors} for which this critical theorem remains true.
\begin{fact}
    \label{f:more_general} The result of \Cref{f:standard_system_stable} holds when \Cref{alg:lanczos_sym_precon} is implemented with any sequence of round-off errors for which the main Theorem of  \cite{paige1976error} (Theorem 8 in \cite{musco_musco_sidford_18}) holds with $\mach = \poly(n\kappa_{\M}/\epsilon)$. This includes, e.g., any sequence of adversarially chosen round-off errors on basic arithmetic operations that satisfy \eqref{eq:finite_precision}, even if those are not precisely the errors that would be made in a standard implementation of finite precision arithmetic.
\end{fact}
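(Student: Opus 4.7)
The plan is to obtain this fact essentially for free by re-examining the proof of \Cref{f:standard_system_stable} and tracing precisely where finite-precision assumptions enter. The proof of \Cref{f:standard_system_stable} has two components: (i) a purely algebraic/analytic argument that translates an approximation bound on $\y = z\Q\T^{-1}\e_1$ in the standard Euclidean norm into an approximation bound on $\x_t = \M^{-1/2}\y$ in the $\A$-norm, and (ii) an invocation of Theorem~1 of \cite{musco_musco_sidford_18} to produce the required Euclidean-norm bound on $\y$. Part (i) is purely a deterministic manipulation of exact quantities (using the assumption that the final multiplication by $\M^{-1/2}$ on Line~16 is exact), so it carries over verbatim regardless of what round-off model is used internally by the Lanczos iteration on $\M^{-1/2}\A\M^{-1/2}$. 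Hence the entire burden reduces to understanding the finite-precision assumption made by Theorem~1 of \cite{musco_musco_sidford_18}.

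Next, I would inspect the proof of Theorem~1 of \cite{musco_musco_sidford_18} itself. That theorem is proved as a black-box reduction: it applies a polynomial approximation $p$ to $f(x)=1/x$ on the spectrum of $\M^{-1/2}\A\M^{-1/2}$, and then it invokes the main finite-precision tridiagonalization theorem of Paige \cite{paige1976error} (stated as Theorem~8 in \cite{musco_musco_sidford_18}) to control the discrepancy between $p$ applied via the computed Lanczos coefficients and $p$ applied exactly. The only place round-off errors appear in this chain is through the hypotheses of Paige's theorem; no other properties of floating-point arithmetic are used. Therefore, for any alternative sequence of round-off errors for which Paige's theorem continues to hold with $\mach=\poly(n\kappa_\M/\epsilon)$, the conclusion of Theorem~1 of \cite{musco_musco_sidford_18} follows with the same parameters, and consequently so does \Cref{f:standard_system_stable}.

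To make the statement concrete, I would then verify the second sentence of the fact: the standard IEEE-like model in which every arithmetic operation satisfies \eqref{eq:finite_precision}, even under adversarial choice of the $\delta$'s, is exactly the model in which Paige's bounds were originally established. Paige's theorem tracks the propagation of the per-operation perturbations $\delta$ through the three-term recurrence, and its proof never relies on any probabilistic or correlation structure among the $\delta_i$'s; it only uses the magnitude bound $|\delta|\le\mach$. Thus any adversarial sequence of errors satisfying \eqref{eq:finite_precision} is admissible, which yields the claimed ``e.g.'' instantiation.

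The main obstacle is essentially expository rather than technical: one must verify carefully that nothing in the derivation of \Cref{f:standard_system_stable} or of Theorem~1 of \cite{musco_musco_sidford_18} smuggles in additional finite-precision assumptions beyond those required by Paige's theorem. In particular, one should double check that the manipulations used to convert the relative residual bound \eqref{eq:direct_thm1_bound} into the $\A$-norm error bound on $\x_t$ treat the quantities $\y$, $\M^{-1/2}$, $\A$, and $\b$ as exact mathematical objects (which is justified by the assumption that the final $\M^{-1/2}$ multiplication is performed exactly, and that the inputs $\M^{-1/2}\A\M^{-1/2}$ and $\M^{-1/2}\b$ are supplied exactly). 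Once this bookkeeping is done, the fact follows by a direct substitution of Paige's theorem into the same chain of inequalities used to prove \Cref{f:standard_system_stable}.
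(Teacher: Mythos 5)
Your proposal is correct and follows essentially the same reasoning the paper gives: Fact \ref{f:standard_system_stable} depends on finite precision only through the black-box invocation of Theorem 1 of \cite{musco_musco_sidford_18}, which in turn rests solely on Paige's tridiagonalization theorem, so any error sequence satisfying Paige's hypotheses suffices. Your additional check that the surrounding norm-conversion argument uses only exact arithmetic, and that adversarial errors obeying \eqref{eq:finite_precision} fall within Paige's original model, is exactly the bookkeeping the paper implicitly relies on.
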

We leverage \Cref{f:more_general} to prove \Cref{thm:stable_lanczos} by arguing that the output of \Cref{alg:lanczos_left_precon}, the preconditioned Lanczos method that we hope to analyze, is \emph{exactly equivalent} to the output of \Cref{alg:lanczos_sym_precon} for some choice of round-off errors that satisfy the requirements to prove Paige's theorem. The proof is completed below. It is not self-contained: the reader will require access to \cite{paige1976error}. As of April 2024, a copy can be found on the \href{https://www.cs.mcgill.ca/~chris/pubClassic/76JIMA001.pdf}{author's webpage}.

\begin{proof}[Proof of \Cref{thm:stable_lanczos}]
Before beginning the proof, we make a remark on notation. The variable names in \Cref{alg:lanczos_left_precon} are chosen to mirror the variable names in \Cref{alg:lanczos_sym_precon}. In particular, if \Cref{alg:lanczos_left_precon} and  \Cref{alg:lanczos_sym_precon} are implemented in exact arithmetic and with exact applications of all matrix-vector multiplications, it can be checked that any variable $\x$ satisfies the relationship
\begin{align*}
\overline{\x} &= \M^{-1/2}\x & &\text{and} & \underline{\x} &= \M^{1/2}\x.
\end{align*}
E.g., at Line 6 of \Cref{alg:lanczos_left_precon}, $\underline{\u_i}$ would exactly equal $\M^{1/2}\u_i$ computed at Line 6 of \Cref{alg:lanczos_sym_precon}. Additionally, any variable that appears with a single tick mark in \Cref{alg:lanczos_left_precon} would be exactly equal to the corresponding variable in \Cref{alg:lanczos_sym_precon}. I.e. we would have that $z' = z$ on Line 3.

To motivate \Cref{alg:lanczos_left_precon}, note that it is essentially equivalent to \Cref{alg:lanczos_sym_precon}, except that, instead of keeping track of $\Q = [\q_1, \ldots, \q_k]$, it keeps track of $\M^{-1/2}\Q = [\overline{\q_1}, \ldots, \overline{\q_k}]$. This can be done without every having access to $\M^{-1/2}$ and, importantly, allows use to avoid the final step of \Cref{alg:lanczos_sym_precon}, which requires multiplying $\M^{-1/2}$ by a vector in the span of $\Q$.

Concretely, we will prove that there is a sequence of round-off errors in \Cref{alg:lanczos_sym_precon} for which Paige's main theorem holds, and for which  $\T' = \T$ and $\overline{\Q} = \M^{-1/2}\Q$, where $\T'$ and $\overline{\Q}$ are the quantities computed by \Cref{alg:lanczos_left_precon} run with a procedure $\mathsf{SolveM}$ for applying $\M^{-1}$ that satisfies the accuracy guarantee of \eqref{eq:m_assump_informal}. We will actually argue that this equivalence holds when \Cref{alg:lanczos_sym_precon} is run on a slight perturbation of $\b$, $\b'$. We will account for this difference towards the end of the proof. 

Our proof will proceed via induction. In particular, we will prove that for all $i$,
\begin{align}
\label{eq:3}
\alpha_i'&=\alpha_i & \beta_i' &= \beta_i, &
\overline{\q_i} &= \M^{-1/2}\q_i & &\text{and} & \|\underline{\q_{i}} -  \M^{1/2}{\q_{i}}\|_{\M^{-1}}  \leq \mach.
\end{align}
To establish this bound, we will inductively assume that it holds for all $j\leq i$. We begin with the base cases, which includes any variable set outside of the for loop on Line 5.

\medskip\noindent\textbf{Base Case.} We start by noting that trivially, $\mathbf{0} = \underline{\q_0} = \M^{1/2}\q_0 = \mathbf{0}$ and $0 = \beta_1' = \beta_1 = 0$. 
Moreover, by \eqref{eq:m_assump_informal}, $\overline{\w_0}  = \M^{-1}\b + \Delta_0$, where $\Delta_0$ is a vector with $\|\Delta_0\|_{\M} \leq \epsilon_0 \|\M^{-1}\b\|_{\M}$. I.e. $\overline{\w_0} = \M^{-1}(\b + \M \Delta_0)$. So, for some $\b'$ satisfying
\begin{align}
\label{eq:how_close_is_b}
 \b'-\b = \M \Delta_0,
\end{align}
we have that, for $\w_0 = \M^{-1/2}\b'$,
\begin{align}
\label{eq:w0_overline_equal}
\overline{\w_0} = \M^{-1/2}\w_0. 
\end{align}
Next, we have that 
\begin{align*}
(z')^2 = \langle \b, \overline{\w_0}\rangle = \|\M^{-1/2}\b\|^2 + \langle \Delta_0, \b\rangle. 
\end{align*}
By Cauchy-Schwarz, $|\langle \Delta_0, \b\rangle| \leq \|\M^{-1/2}\b\|\|\M^{1/2}\Delta_0\|$ and we have that $\|\M^{1/2}\Delta_0\| = \|\Delta_0\|_{\M} \leq \epsilon_0 \|\M^{-1/2}\b\|_2$.
It follows that, as long as $\epsilon_0 \leq \mach$, $(z')^2 = (1+\Delta_1)\|\M^{-1/2}\b\|^2$ for some scalar $\Delta_1$ with $|\Delta_1| \leq \mach$, and therefore that:
\begin{align*}
z' = (1+\Delta_2)\|\M^{-1/2}\b\|
\end{align*}
for scalar $\Delta_2$ with $|\Delta_2| \leq \mach$. On Line 3 of \Cref{alg:lanczos_sym_precon}, Paige's analysis allows for $z \gets \|\w_0\|$ to be computed up to relative error $(1+\mach(n+2)/2)$ (see \cite{paige1976error}, equation 12). So, there is a choice of acceptable roundoff error for which $z' = z.$
Combined with \eqref{eq:w0_overline_equal}, we conclude that:
\begin{align}
\label{eq:q1_overline_equal}
\overline{\q_1} = \M^{-1/2}\q_1.
\end{align}
This establishes the second equation in \eqref{eq:3}, so to complete the base case analysis for $i=1$, we are left to address the third equation, i.e. that $\|\underline{\q_1} - \M^{1/2}\q_1\| \leq \mach \|\M^{1/2}\|$. We have from \eqref{eq:q1_overline_equal} that $\M^{1/2}\q_1 = \M\overline{\q_1} = \M\cdot (\M^{-1}\b + \Delta_0)/z'$. We have that $\underline{\q_1} = \b/z'$, so we conclude that:
\begin{align*}
\|\underline{\q_1} - \M^{1/2}\q_1\|_{\M^{-1}} = \|\M\Delta_0/z'\|_{\M^{-1}} = \|\Delta_0/z'\|_{\M}.
\end{align*}
As shown above, as long as $\mach \leq 1/2$, $z' \geq \frac{1}{2}\|\M^{-1/2}\b\|$, so we have that 
\begin{align*}
\|\underline{\q_1} - \M^{1/2}\q_1\|_{\M^{-1}} \leq \frac{2}{\|\M^{-1/2}\b\|} \|\Delta_0\|_{\M} \leq 2\epsilon_0.
\end{align*}
So, we have prove the third equation of \eqref{eq:3} as long as $\epsilon_0 \leq \frac{\mach}{2}.$

\medskip\noindent\textbf{Inductive Case.} We can now move onto the inductive case of \eqref{eq:3}. I.e., to proving the statement for all $i \geq 2$, assuming it holds for $j < i$.

We proceed with a line by line analysis of \Cref{alg:lanczos_left_precon,alg:lanczos_sym_precon}.

\smallskip\noindent\textbf{Line 6.}
Our first goal is to prove that for some choice of acceptable roundoff error in \Cref{alg:lanczos_sym_precon},
\begin{align}
\label{eq:underline_u_equal}
   \M^{-1/2} \underline{\u_i} = \u_i.
\end{align} 
Observe that 
\begin{align*}
\M^{-1/2}\underline{\u_i} &= \M^{-1/2} \A \overline{\q_i} - \beta_i' \M^{-1/2}\underline{\q_{i-1}}\\
 &= \M^{-1/2} \A \M^{-1/2}{\q_i} - \beta_i \M^{-1/2}\underline{\q_{i-1}}\\
  &= \M^{-1/2} \A \M^{-1/2}{\q_i} - \beta_i\q_{i-1} + \beta_i\Delta_3,
\end{align*}
where, using our inductive assumption, $\Delta_3 = \q_{i-1} - \M^{-1/2}\underline{\q_{i-1}}$ is a vector with Euclidean norm bounded by $\mach$.
Paige's analysis allows for $\u_i \leftarrow \M^{-1/2} \A \M^{-1/2}{\q_i} - \beta_i\q_{i-1}$ to be computed up to additive error bounded in norm by $\mach 2\|\beta_i\q_{i-1}\|$ (see \cite{paige1976error}, equation 8). 

Using that $\|\q_i\|  \geq 1/2$ as long as $\mach \leq \frac{1}{2(n+4)}$ (see \cite{paige1976error} equation 14) we thus have that there is a choice of allowable error in \Cref{alg:lanczos_sym_precon} such that $\M^{-1/2}\underline{\u_i} = \u_i$.
So, we have proven \eqref{eq:underline_u_equal} as desired.

\smallskip\noindent\textbf{Line 7.}
Next, by our inductive assumption and \eqref{eq:5}, we have that 
\begin{align*}
\langle \underline{\u_i},\overline{\q_i}\rangle = \langle \u_i,\q_i\rangle,
\end{align*}
so we immediately have that, even with no roundoff error in \Cref{alg:lanczos_sym_precon},
\begin{align}
\label{eq:5}
\alpha_i' = \alpha_i.
\end{align}
This establishes the first condition of \eqref{eq:3}.

\smallskip\noindent\textbf{Line 8.}
We want to prove that, for some choice of acceptable rounding error in \Cref{alg:lanczos_sym_precon},
\begin{align}
\label{overline_w_equality}
    \M^{1/2}\overline{\w_i} = \w_i.
\end{align}
By \Cref{eq:5} and \Cref{eq:m_assump_informal} we have:
\begin{align}
    \overline{\w_i} &= \M^{-1}\underline{\u_i} - \alpha_i\M^{-1}\underline{\q_i} + \Delta_4,
\end{align} 
where $\|\Delta_4\|_{\M}  \leq \epsilon_0 \|\M^{-1/2}(\underline{\u_i} - \alpha_i\underline{\q_i})\|$. 
Furthermore, by \eqref{eq:3}, we have that $\M^{-1/2}\underline{\q_i} = \q_{i} + \Delta_5$, where $\|\Delta_5\| \leq \mach$. Combining with \eqref{eq:underline_u_equal}, it follows that:
\begin{align}
    \M^{1/2}\overline{\w_i} &= \M^{-1/2}\underline{\u_i} - \alpha_i\M^{-1/2}\underline{\q_i} + \M^{1/2}\Delta_4 \\
    &= \u_i - \alpha_i \q_{i+1} + \alpha_i\Delta_5 + \M^{1/2}\Delta_4.
\end{align} 
Paige's analysis requires that $\w_i$ equals $\u_i - \alpha_i \q_i$ up to additive error with norm bounded by $\mach(\|\u_i\| + 2\alpha_i\|\q_i\|)$ (see \cite{paige1976error}, equation 8). So to establish \eqref{overline_w_equality}, we need to show that:
\begin{align}
\label{eq:delta4_delta5_norm_bound}
    \|\alpha_i\Delta_5 + \M^{1/2}\Delta_4\| \leq \mach(\|\u_i\| + 2\alpha_i\|\q_i\|).
\end{align}
Observe that, by triangle inequality,
\begin{align*}
    \|\M^{1/2} \Delta_4\| \leq \epsilon_0 \|\M^{-1/2}(\underline{\u_i} - \alpha_i\underline{\q_i})\| &= \epsilon_0 \|\u_i - \alpha_i\M^{-1/2}\underline{\q_i}\|\\
    &\leq \epsilon_0 \|\u_i - \alpha_i{\q_i}\| + \alpha_i\Delta_5.
\end{align*}
Combined with our bound on $\Delta_5$, we have:
\begin{align*}
 \|\alpha_i\Delta_5 + \M^{1/2}\Delta_4\| \leq \epsilon_0 \|\u_i\| + 2\alpha_i\mach.
\end{align*}
Using that $\|\q_i\| \in [1/2,2]$ as long as $\mach \leq \frac{1}{2(n+4)}$ (see \cite{paige1976error} equation 14), we see that \eqref{eq:delta4_delta5_norm_bound} holds as long as $\epsilon_0 \leq \mach$.
It follows that \eqref{overline_w_equality} holds.

We also claim that, directly from \eqref{overline_w_equality} and \eqref{eq:m_assump_informal},
\begin{align}
\label{underline_w_error}
    \|\w_i - \M^{-1/2}\underline{\w_i}\| = \|\overline{\w_i} - \M^{-1}\underline{\w_i}\|_{\M} \leq \epsilon_0 \|\M^{-1}\underline{\w_i}\|_{\M} = \epsilon_0 \|\M^{-1/2}\underline{\w_i}\|.
\end{align}

\smallskip\noindent\textbf{Line 9.}
By \eqref{underline_w_error} and \eqref{overline_w_equality}, we have that:
\begin{align*}
{\langle \underline{\w_i},\overline{\w_i}\rangle} = {\langle \M^{-1/2}\underline{\w_i},\M^{1/2}\overline{\w_i}\rangle}  = \langle \w_i,\w_i\rangle + \langle \Delta_6,\w_i\rangle,
\end{align*}
where $\|\Delta_6\| \leq  \epsilon_0 \|\M^{-1/2}\underline{\w_i}\|$.
Applying Cauchy-Schwarz, it follows that:
\begin{align*}
\langle \underline{\w_i},\overline{\w_i}\rangle = \langle {\w_i},{\w_i}\rangle + \Delta_7,
\end{align*}
where $\Delta_7$ is a scalar with $|\Delta_7| \leq \epsilon_0 \|\w_i\|\|\M^{-1/2}\underline{\w_i}\|$. As long as $\epsilon_0 \leq 1/2$, we can check from \eqref{underline_w_error} that $\|\M^{-1/2}\underline{\w_i}\| \leq 2 \|\w_i\|$.
It follows that $\langle \underline{\w_i},\overline{\w_i}\rangle = (1+\Delta_8)\langle {\w_i},{\w_i}\rangle$ where $|\Delta_8| \leq 2\epsilon_0$, and thus 
\begin{align}
\sqrt{\langle \underline{\w_i},\overline{\w_i}\rangle} = (1+\Delta_9)\sqrt{\langle {\w_i},{\w_i}\rangle},
\end{align}
again for $|\Delta_9| \leq 2\epsilon_0$. Paige's analysis assumes that $\sqrt{\langle \underline{\w_i},\overline{\w_i}\rangle}$ is computed up to multiplicative accuracy $(1+ \mach(n+2)/2)$ (see \cite{paige1976error} equation (12)), so as long as $ 2\epsilon_0 \leq \mach(n+2)/2$, then there is some choice of acceptable roundoff error such that, as desired.
\begin{align}
\label{eq:betas_equal}
    \beta_{i+1}' = \beta_{i+1},
\end{align}
This establishes the second condition of \eqref{eq:3}.

\smallskip\noindent\textbf{Line 10.}
By \eqref{eq:betas_equal}, the if condition in Line 9 evaluates to true in \Cref{alg:lanczos_left_precon} if and only if it evaluates to true in  \Cref{alg:lanczos_sym_precon}, so we can move onto Line 13. 

\smallskip\noindent\textbf{Line 13.}
We have immediately from \eqref{overline_w_equality} and \eqref{eq:betas_equal} that:
\begin{align}
    \label{eq:underline_qiplus_equal}
    \overline{\q_{i+1}} = \M^{-1/2} \q_{i+1}.
\end{align}
This proves our third condition in \eqref{eq:3} for $i+1$.
Additionally, from \eqref{underline_w_error} and \eqref{eq:betas_equal}, we have that:
\begin{align*}
\|\underline{\q_i} - \M^{1/2}{\q_i}\|_{\M^{-1}} =  \frac{1}{\beta_i}\|\underline{\w_i} - \M^{1/2}{\w_i}\|_{\M^{-1}} = \frac{1}{\beta_i}\|\M^{-1/2}\underline{\w_i} - {\w_i}\|_{\M^{-1}} 
    &\leq \frac{\epsilon_0}{\beta_i}\|\M^{-1/2}\underline{\w_i}\|,
\end{align*}
As before, we have that $\|\M^{-1/2}\underline{\w_i}\| \leq 2 \|\w_i\|$ and, from \cite{paige1976error} equation 12, as long as $\mach \leq \frac{1}{n+2}$, $\beta_{i+1} \geq \frac{1}{2}\|{\w_i}\|$. So, we conclude that
\begin{align*}
\|\underline{\q_i} - \M^{1/2}{\q_i}\|_{\M^{-1}} \leq 4 \epsilon_0
\end{align*}
which implies our fourth condition in \eqref{eq:3} as long as $4 \epsilon_0 \leq \mach$.

\medskip\noindent
\textbf{Completing the Proof.}
With \eqref{eq:3} in place, we are ready to prove our main result, \Cref{thm:stable_lanczos}. We have established that if \Cref{alg:lanczos_left_precon} is run with input $\b$ and function $\mathsf{SolveM}$ satisfying \eqref{eq:m_assump_informal} with $\epsilon_0 \leq \mach/4$, then it generates a tridiagonal matrix $\T'$ that is identical to the output of \Cref{alg:lanczos_sym_precon} run with input $\b'$ and a particular set of round-off errors that satisfy the condition of \Cref{f:more_general}. Moreover, the matrix $\overline{\Q}$ generated by \Cref{alg:lanczos_left_precon} is identical to $\M^{-1/2}\Q$, where $\Q$ is the matrix generated \Cref{alg:lanczos_sym_precon}. Since we have also establishes that $z =z'$, it follows that Line 16 of each algorithm returns and identical output. We conclude from \Cref{f:more_general} that, as long as $\mach = \poly(\epsilon/n\kappa_{\M})$, for $t= O\left(\sqrt{\kappa_{\M}} \log (\kappa_{\M}/\epsilon)\right)$, the vector $\x_t'$ returned by \Cref{alg:lanczos_left_precon} satisfies:
\begin{align}
\label{eq:with_b_prime}
\|\x_t' - \A^{-1}\b'\|_{\A} \leq \epsilon \|\A^{-1}\b'\|_{\A}.
\end{align}
Recalling that $\b' = \b + \M\Delta_0$ for a vector $\Delta_0$ with $\|\M^{1/2}\Delta_0\| \leq \epsilon_0 \|\M^{-1/2}\b\|$, we further have that:
\begin{align*}
\|\A^{-1}\b - \A^{-1}\b'\|_{\A} = \|\A^{-1/2}\b - \A^{-1/2}\b'\| &= \|\A^{-1/2}\M\Delta_0\| \\
&\leq \epsilon_0|\A^{-1/2}\M^{1/2}\|\|\M^{-1/2}\b\|\\
&\leq \epsilon_0|\A^{-1/2}\M^{1/2}\|\|\M^{-1/2}\A^{1/2}\|\A^{-1/2}\b\|\\ &= \epsilon_0 \sqrt{\kappa_{\M}}\|\A^{-1}\b\|_{\A} 
\end{align*}
Combining with \eqref{eq:with_b_prime} via triangle inequality, we have that, as long as $\epsilon_0 \leq \frac{\epsilon}{\sqrt{\kappa_{\M}}}$, 
\begin{align*}
\|\x_t' - \A^{-1}\b\|_{\A} \leq 3\epsilon \|\A^{-1}\b\|_{\A}.
\end{align*}
Adjusting the constant on $\epsilon$ proves \Cref{thm:stable_lanczos}. 
\end{proof}

\section{Lower Bound for Linear Systems with $k$ Large Singular Values}
\label{s:lower}

In this section, we give a lower bound for the time complexity of solving linear systems with $k$ large singular values.
\begin{theorem}\label{t:lower}
    Assuming that the time complexity of solving an arbitrary $n\times n$ linear system  to precision $\epsilon=1/\poly(n)$ is $\Omega(n^\omega)$, the time complexity of solving an $n\times n$ linear system $\A\x=\b$ such that $\sigma_{k+1}(\A)/\sigma_{\min}(\A) = O(1)$ is at least $\Omega( n^2 + k^\omega)$.
\end{theorem}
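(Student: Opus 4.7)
The target lower bound splits as $\Omega(n^2+k^\omega)=\max\{\Omega(n^2),\Omega(k^\omega)\}$, so I would establish each half separately and take the maximum. The $\Omega(n^2)$ piece is essentially an input-size lower bound: the admissible class contains all well-conditioned dense matrices (for instance $\A=2\I+\E$ with $\|\E\|\le 1$, which satisfies $\sigma_{k+1}/\sigma_{\min}=O(1)$ for every $k$, including $k=0$). For such matrices a standard adversary argument shows that any algorithm reading fewer than $\Omega(n^2)$ entries of $\A$ can be fooled: if an entry $(i,j)$ is never queried, an adversary can alter it by a constant without violating the admissibility condition in a way that shifts the exact solution $\A^{-1}\b$ by more than the $1/\poly(n)$ tolerance, contradicting correctness on the affected coordinate.

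The more substantive $\Omega(k^\omega)$ piece comes from a block-diagonal reduction. By the theorem's hypothesis there is a family of $k\times k$ linear systems whose worst-case solution time is $\Omega(k^\omega)$. I would argue that this family can be taken to consist of matrices of condition number $O(1)$: Strassen's classical reduction from matrix multiplication to matrix inversion realizes hard instances as matrices of the form $\bigl(\begin{smallmatrix} \I & \M_1 & 0 \\ 0 & \I & \M_2 \\ 0 & 0 & \I \end{smallmatrix}\bigr)$, which are upper triangular with unit diagonal and have condition number $O(1)$ whenever $\|\M_1\|,\|\M_2\|=O(1)$; since the $\Omega(k^\omega)$ matrix-multiplication barrier is already encoded in such instances, a well-conditioned hard $k\times k$ family exists. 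Fix such a hard instance $(\A_1,\b_1)$ with $\sigma_{\min}(\A_1),\sigma_{\max}(\A_1)=\Theta(1)$ and embed it as
\begin{align*}
    \A \;=\; \begin{pmatrix}\A_1 & 0 \\ 0 & \I_{n-k}\end{pmatrix},\qquad \b \;=\; \begin{pmatrix}\b_1 \\ \mathbf{0}\end{pmatrix}.
\end{align*}
The singular values of $\A$ are those of $\A_1$ together with $n-k$ copies of $1$; they all lie in a constant range, so $\sigma_{k+1}(\A)/\sigma_{\min}(\A)=O(1)$ and $\A$ is in the admissible class. Any $\tilde\x$ with $\|\A\tilde\x-\b\|\le\epsilon\|\b\|$ has its first $k$ coordinates $\tilde\x_1$ satisfying $\|\A_1\tilde\x_1-\b_1\|\le\epsilon\|\b_1\|$, so a solver for the restricted $n\times n$ problem immediately yields a solver for the hard $k\times k$ problem at precision $1/\poly(n)\le 1/\poly(k)$, which by hypothesis requires $\Omega(k^\omega)$ time.

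Combining the two pieces yields $\Omega(n^2+k^\omega)$. The only delicate step is the reduction's reliance on a well-conditioned hard $k\times k$ family, handled via the Strassen equivalence between matrix multiplication and inversion; everything else is essentially input-size bookkeeping plus a block-diagonal padding argument, and no new ideas beyond those already appearing in the preceding sections are required.
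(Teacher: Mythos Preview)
Your proof is correct and follows the same two-part structure as the paper (an information-theoretic $\Omega(n^2)$ argument plus a block-diagonal embedding for $\Omega(k^\omega)$), but the execution of the second part differs in a way worth noting. You embed the hard $k\times k$ instance $\A_1$ alongside an $\I_{n-k}$ block, which forces you to argue separately---via Strassen's reduction---that the hard family can be chosen well-conditioned, since otherwise the $n-k$ unit singular values might become the smallest and $\sigma_{k+1}/\sigma_{\min}$ would blow up. The paper sidesteps this entirely: it uses $\sigma_{\min}(\M)\cdot\I_{n-k}$ as the lower-right block, so that for an \emph{arbitrary} rank-$k$ matrix $\M$ the tail $n-k$ singular values are all exactly $\sigma_{\min}(\M)$, giving $\sigma_{k+1}/\sigma_{\min}=1$ automatically and allowing the theorem's hypothesis to be invoked directly on $\M$ with no detour through well-conditioned hardness. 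Your route works but purchases the conclusion at the cost of an extra reduction; the paper's scaling trick is more direct and uses only the stated assumption. For the $\Omega(n^2)$ part, the paper also supplies a more concrete adversary family ($\I_n+\e_i\e_j^\top-\e_j\e_i^\top$ against the all-ones $\b$, where the solution has $x_i=0$ and thus reveals the hidden index), making the entry-query lower bound immediate rather than a generic adversary sketch.
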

\begin{proof}
    The proof will be divided into two parts. First, we will give an information theoretic lower bound showing that to solve a linear system with $k$ large singular values, one must read at least $\Omega(n^2)$ entries of the matrix $\A$. To do this, consider the following class of $n\times n$ matrices parameterized by an index pair $(i,j)$, where $\e_i$ denotes the $i$th $n$-dimensional standard basis vector:
    \begin{align*}
        \mathcal A = \Big\{ \I_n + \e_i\e_j^\top - \e_j\e_i^\top\quad:\quad i\neq j\Big\}.
    \end{align*}
    Matrix $\A\in\mathcal A$ is a matrix with identity on the diagonal and then two off-diagonal entries, $1$ and $-1$, symmetrically across from each other, and then zeros everywhere else. We then choose an all-ones vector $\b=[1,...,1]^\top\in\R^n$ and consider such a linear system:
    \begin{align*}
        \begin{bmatrix}
            1&~~&~~&~~&~~&~~\\
            ~~&1&&&1&\\
            &&1&&&\\
            &&&\ddots&&\\
            &-1&&&1&\\
            &&&&&1
        \end{bmatrix}
        \begin{bmatrix}x_1 \\x_2 \\ x_3\\\vdots \\ x_{n-1}\\x_n \end{bmatrix}=
        \begin{bmatrix}1\\1\\ 1\\ \vdots\\ 1\\1\end{bmatrix}
    \end{align*}
    Note that this linear system can be written more concisely as satisfying the following equations:
    \begin{align*}
        \text{for } l\neq i,j,\quad x_l &= 1,\\
        x_i+x_j &= 1,\\
        -x_i+x_j &= 1,
    \end{align*}
    which implies that the solution vector $\x$ is an all-ones vector except for $x_i=0$. To find the singular values associated with the matrix $\A$, observe that:
    \begin{align*}
        \A^\top\A = (\I_n + \e_i\e_j^\top - \e_j\e_i^\top)^\top(\I_n + \e_i\e_j^\top - \e_j\e_i^\top) = \I_n + \e_i\e_i^\top+\e_j\e_j^\top.
    \end{align*}
    Since $\A^\top\A$ is diagonal, its eigenvalues can be found immediately, which gives us the singular values of $\A$ (in decreasing order): $\sigma_1=\sqrt 2$, $\sigma_2=\sqrt 2$ and $\sigma_3=...=\sigma_n=1$. Thus, $\A$ is well-conditioned and it satisfies the $k$ large singular values property for any $k$. Suppose that the matrix $\A$ is given to us in an $n\times n$ two-dimensional array. How long does it take to find either of the non-zero off-diagonal entries of $\A$? Naturally, this will require in the worst-case reading $\Omega(n^2)$ entries, since we have no information  about the index pair $1\leq i<j\leq n$. On the other hand, if we were able to solve this linear system in time $o(n^2)$, obtaining (an approximation of) vector $\x$, then we could find $i,j$ pair in an additional $O(n)$ time by finding the entry of $\x$ that is not equal 1. This shows, by contradiction, that the time complexity of solving such a linear system must be $\Omega(n^2)$. 
    
    We note that, while our construction gives a sparse matrix, which may be stored more efficiently in a column-row-value format, this does not circumvent our lower bound, since we can also consider the above matrix $\A$ that has been distorted entry-wise with, say, random $\pm1/\poly(n)$ values, so that it is no longer sparse, but this distortion will not meaningfully affect its condition number or the linear system solution. Thus, the above lower bound applies also for such dense matrices.

    In the second part of the proof, we observe that to solve an $n\times n$ linear system with $k$ large singular values, one must (effectively) solve an arbitrary $k\times k$ linear system. Namely, consider an arbitrary $k\times k$ rank $k$ matrix $\M$ and a vector $\c\in\R^k$, and consider the following $n\times n$ linear system that can be written in a block-diagonal form:
    \begin{align*}
        \begin{bmatrix}
            \M&\\
            &\sigma_{\min}(\M)\cdot\I_{n-k}
        \end{bmatrix}\begin{bmatrix}
            \x\\
            \y
        \end{bmatrix}
        =
        \begin{bmatrix}
            \c\\
            \mathbf{0}_{n-k}
        \end{bmatrix},
    \end{align*}
    where we are solving for the vectors $\x$ and $\y$. Note that the singular values of this system are simply the singular values of $\M$, with its smallest singular value additionally repeated $n-k$ times. Thus, $\sigma_{k+1}/\sigma_{\min}=1$ for this system, and it satisfies the $k$ large singular values property. Also, note that the solution of this system is $\x=\M^{-1}\c$ and $\y=\mathbf{0}$. Now, if we can solve this system in time $o(k^\omega)$, then, by simply reading off the first $k$ entries of the solution vector, we obtain the solution to the $k\times k$ system $\M\x=\c$. This concludes the proof of the lower bound.
\end{proof}

\section*{Acknowledgements}
MD would like to acknowledge NSF CAREER for partial support. CM was partially supported by NSF Award (2045590).

\newpage
\bibliographystyle{alpha}
\bibliography{ref.bib,ref_old.bib}

\end{document}